\newenvironment{theorem*}{{\bf Theorem. }}{}
\newenvironment{lemma*}{{\bf Lemma. }}{}
\newtheorem{theorem}{Theorem}
\newtheorem{lemma}[theorem]{Lemma}
\newtheorem{corollary}[theorem]{Corollary}
\newtheorem{proposition}[theorem]{Proposition}
\newcommand{\PFG}{Progressive Filling Game}
\newcommand{\pfg}{progressive filling game}
\DeclareMathOperator{\SW}{SW}
\DeclareMathOperator{\PoS}{PoS}
\DeclareMathOperator{\ord}{ord}
\DeclareMathOperator*{\argmin}{argmin}
\DeclareMathOperator*{\argmax}{argmax}
\DeclareMathOperator{\img}{img}
\begin{document}

\title{Routing Games with Progressive Filling\thanks{This research was partially supported by the German Research
Foundation (DFG) within the Cluster of Excellence MMCI at Saarland University, the Research Training Group ``Methods for Discrete Structures'' (GRK 1408) and the Collaborative Research Center ``On-The-Fly Computing'' (SFB 901). It was also partially supported by the EU within FET project MULTIPLEX (contract no.\ 317532) and the Marie-Curie grant ``Protocol Design'' (no. 327546, funded within FP7-PEOPLE-2012-IEF).}}

\author{Tobias Harks%
\thanks{Dept.\ of Quantitative Economics, Maastricht University, {\tt t.harks@maastrichtuniversity.nl}}%
\and%
Martin Hoefer%
\thanks{Max-Planck-Institut f\"ur Informatik and Saarland University, {\tt mhoefer@mpi-inf.mpg.de}}%
\and%
Kevin Schewior\thanks{Institut f\"ur Mathematik, TU Berlin, {\tt schewior@math.tu-berlin.de}}%
\and%
Alexander Skopalik\thanks{Dept.\ of Computer Science, Paderborn University, {\tt skopalik@mail.upb.de}}
}

\maketitle


\begin{abstract}
Max-min fairness (MMF) is a widely known approach to a fair allocation of bandwidth to each of the users in a network. This allocation can be computed by uniformly raising the bandwidths of all users without violating capacity constraints. We consider an extension of these allocations by raising the bandwidth with arbitrary and not necessarily uniform time-depending velocities (allocation rates). These allocations are used in a game-theoretic context for routing choices, which we formalize in progressive filling games (PFGs). 

We present a variety of results for equilibria in PFGs. We show that these games possess pure Nash and strong equilibria. While computation in general is NP-hard, there are polynomial-time algorithms for prominent classes of Max-Min-Fair Games (MMFG), including the case when all users have the same source-destination pair. We characterize prices of anarchy and stability for pure Nash and strong equilibria in PFGs and MMFGs when players have different or the same source-destination pairs. In addition, we show that when a designer can adjust allocation rates, it is possible to design games with optimal strong equilibria. Some initial results on polynomial-time algorithms in this direction are also derived.

\end{abstract}

\section{Introduction}
Max-min fairness is a widely used paradigm for bandwidth allocation
problems in telecommunication networks, most prominently, it is used as a reference
point for designing flow control/congestion control protocols such as TCP (Transport Control Protocol), see~\cite{Tan:2007} for a more detailed discussion.  
In a max-min fair allocation, the bandwidth of a user cannot be increased without decreasing
the bandwidth of another user, who already receives a smaller bandwidth.
Max-min fairness also plays an important role in the model of Kelly et al.~\cite{Kelly98},
where  congestion control protocols have been interpreted as distributed algorithms at sources and links in
order to solve a global optimization
problem (cf.~\cite{LowLapsley99,LowPagDoy2002,Mo:2000} for further works in this area).  Each
user is associated with an increasing, strictly concave bandwidth
utility function and the congestion control
algorithms aim at maximizing aggregate utility subject to capacity
constraints on the links.  Mo and Walrand~\cite{Mo:2000} showed that
within the model of Kelly et al., there is a family of utility functions
whose global optimum corresponds to a max-min fair bandwidth allocation
and they devised a distributed max-min fair congestion control protocol, see also~\cite{Srikant03} (Section 2.2) and \cite{Mamatas07}. For further 
distributed max-min fair congestion control
protocols, we refer to~\cite{wydr07,zhang07}.
There are several important generalizations of
max-min fairness such as weighted max-min fairness~\cite{wydr07}
and utility max-min fairness~\cite{HP08}.
In a weighted max-min fair allocation, the weighted bandwidth of a user cannot be increased without decreasing
the weighted bandwidth of another user, who already receives a smaller weighted bandwidth.
In a utility max-min fair allocation, each user is associated with an increasing
(not necessarily concave) bandwidth utility function and an allocation is utility max-min fair if the utility of a user cannot be increased without decreasing
the utility of another user, who already receives a smaller utility.
Utility max-min fairness (and also weighted max-min fairness) has been proposed for 
giving some applications (e.g., real-time applications, or multi-media) a possibly larger bandwidth share  than others. 

It is well known that (weighted) max-min fair allocations can be easily implemented by simple polynomial time water-filling algorithms that raise the bandwidth of every user at a (weighted) uniform speed
and, whenever a link capacity is exhausted, fixes the bandwidth of those users traversing this link~\cite{BertsekasG92}. As we will show in this paper, also
utility max-min fair allocations can be implemented by simple polynomial time water-filling algorithms that raise the bandwidth of every user at a user-specific speed.


While most works in the area of flow control/congestion control assume
that the routes of users are fixed a priori,
we study in this paper the flexibility of \emph{strategic route choices} by users (or players from now on) as a means 
to obtain high bandwidth. We introduce a general class of strategic games that we term
\emph{routing games with progressive filling}. In such a game, there is a finite set of resources 
and a strategy of a player corresponds to a subset of resources. Resources have capacities and the
utility of every player equals the obtained bandwidth which in turn is defined by a predefined 
water-filling algorithm. If the allowable subsets of a player correspond to the set of routes
connecting the player's source with its terminal, we obtain single-path routing modeling IP 
(Internet Protocol) routing. Since IP routing is typically updated at a much slower timescale 
than the flow control, we assume that flow 
control (modeled in this paper as a water-filling algorithm) converges instantly to a "fair" 
allocation (max-min fair or generalizations thereof) after each route update.
The assumption that flow control converges instantly before route updates are triggered has been made and justified before, see, e.g., Wang et al.~\cite{wang2005}. Thus, once a player 
chooses a new route his bandwidth share is determined by executing the water-filling algorithm.
We will impose mild conditions on the class of allowable water-filling algorithms:
(i) for every player and every point in time the integral of the rate function is non-negative
and the integral of the rate function grows monotonically;
(ii) for every player the integral of the rate function tends to infinity as time goes to infinity.
While condition (i) is natural, condition (ii) simply ensures that the water filling
algorithm terminates and the induced strategic game is well-defined.
 Note that even though water-filling algorithms are \emph{centralized} algorithms we demonstrate
 that they represent a wide range of fairness concepts including max-min fairness, weighted 
max-min fairness and utility max-min fairness for which \emph{distributed} and \emph{fast converging} congestion control
protocols are known~\cite{MH08,Mo:2000,wydr07,zhang07}.

We consider existence, computation and quality of equilibria in routing games with progressive filling. 
In a pure Nash equilibrium (PNE for short), no player obtains strictly higher bandwidth by unilaterally 
changing his route. If coordinated deviations by players are allowed (for instance by a single player 
coordinating several sessions or by a set of  players connected via peer-to-peer overlay networks), the 
Nash equilibrium concept is not sufficient to analyze stable states of a game. For this situation, we 
adopt the stricter notion of a strong equilibrium (SE for short) proposed by Aumann~\cite{Aumann59}. 
In a SE, no coalition (of any size) can change their routes and strictly increase the bandwidth of each of its members (while possibly lowering the bandwidth of players outside the coalition). Every SE is a PNE, but not conversely. Thus, SE constitute a very robust and appealing stability concept for which only a few existence results are known
in the literature.

\subsection{Our Results}
\paragraph{Existence.}
For progressive filling games we prove that if water-filling algorithms satisfy conditions 
(i) and (ii), every sequence of profitable deviations of coalitions of players must be finite and, 
hence, SE always exist. Previously, it was only known that PNE exist if the water-filling 
algorithm corresponds to the max-min fair allocation~\cite{YangXFMZ10}.
Thus, our results establish for the first time that routing and congestion control
admits a PNE (and even SE) for routing games where 
weighted- and utility max-min fair congestion control protocols are used.
 We show that our assumptions (i) and 
(ii) are "minimal" in the sense that if one of them is dropped, there is a corresponding two-player 
game without PNE.

\paragraph{Complexity.}
In light of its practical importance, we study routing games with water-filling algorithms inducing the
max-min fair allocation. We first focus on the computational complexity of SE and PNE. We give an 
algorithm that computes a SE for any progressive filling game under max-min fair allocations. 
Our algorithm iteratively reduces the number of players allowed on a resource. 
After each such reduction, a \emph{packing oracle} is invoked that checks whether or not there is a 
feasible strategy profile that respects the allowed numbers of players on every resource.
If the oracle finds a feasible allocation, the algorithm proceeds and, otherwise,
we fix strategies for a suitable subset of players.
Obviously, the running time of the algorithm crucially relies on the running time of the packing oracle.
It is known, however, that if the strategy spaces correspond to, e.g., the set of paths of a
single-commodity network, or to bases of a matroid defined on a player-specific subset of resources,
the oracle can be implemented in polynomial time, thereby ensuring polynomial-time computation of SE.
We complement this result by showing various hardness results of computing SE. In addition, we show 
a bound on the number of values of the potential function that also represents an upper bound on the 
number of improvement steps to reach a PNE.

\paragraph{Quality.}
To measure the quality of an equilibrium, we use the achieved throughput defined as the sum of the 
player's bandwidths. This performance measure corresponds to utilitarian social welfare and is the 
standard performance measure in traffic engineering. We use notions of price of stability (PoS) and 
price of anarchy (PoA), which relate the cost of an equilibrium to the cost of a social optimum. 
The standard definition of an optimum would refer to a set of route choices such that throughput is 
maximized for a waterfilling algorithm with given allocation rates. In addition, our bounds continue 
to hold even with respect to an optimum that is allowed to set arbitrary routes and bandwidths 
respecting the resource capacities. Computing this general optimum is known in combinatorial 
optimization as the maximum k-splittable flow problem.

We provide tight bounds for SE and PNE. In general, the PoS and PoA are $n$, which is tight for both 
PNE and SE, even in single-commodity PFGs or multi-commodity MMFGs. In single-commodity MMFGs, PoS for 
PNE and SE is $\left(2-\frac{1}{n}\right)$, PoA for PNE is $n$ and PoA for SE is $4$. All bounds except 
the latter are tight. In addition, our algorithm that computes SE for single-commodity MMFGs in polynomial 
time yields SE that match the PoS bound. In addition, we show some improved bounds on the PoA for PNE in 
singleton PFGs. 

\paragraph{Protocol Design.}
Using fixed allocation rates, improving upon the $\left(2-\frac{1}{n}\right)$-bound is impossible in the worst case. We show, however, that it is possible to show better results when we have slight flexibility in allocation rates. We assume the freedom to ``design a protocol'' and adjust weights in a weighted MMF waterfilling algorithm towards the topology of the instance. This allows to design a game with an optimal SE that coincides with the maximum k-splittable flow. While computing such an optimum is NP-hard, the result also shows that starting from any $\alpha$-approximation to the maximum k-splittable flow, we can design weights and a starting state, such that every sequence of unilateral (coalitional) improvement moves leads to a PNE (SE) with the same approximation ratio. We apply this approach in games with 3 players, where we can find in polynomial time a solution that is a 1.5-approximation and represents a PNE for the chosen weights.
 
\subsection{Related Work}
Combined routing and congestion control
has been studied by several works (cf. \cite{han06,wang03,key07,voice05}).
In all these works, the existence of an equilibrium is proved by  showing that it corresponds to an optimal solution of an associated convex utility maximization problem.
This, however, implies that every user possibly splits the flow
among an exponential number of routes which might be critical for
some applications. For instance, the standard
TCP/IP protocol suite uses single path routing, because splitting the demand comes with several practical complications, e.g., packets arriving out of order, packet jitter due to different
path delays etc. This issue has been explicitly addressed
by Orda et al.~\cite{Orda93}.

Another related class of games are \textit{congestion games}, where there is a set of resources, and the pure strategies of players are subsets of this set. Each resource has a delay function depending on the \emph{load},
i.e., the number of players that select strategies containing the
respective resource. These games allow to model network structures, but 
they fail to incorporate a realistic allocation of network capacities. The reason is that, even though we can define bandwidths allocated on an edge as a function of the number of players using it, the bandwidth of a player would be given by the \textit{sum} of bandwidth allocated on edges he uses.
This problem is addressed by \textit{bottleneck congestion games}~\cite{ColeDR12} where the bandwidth of one player is rather given by the \textit{maximum} bandwidth among the edges he uses. It is known that strong equilibria exist for bottleneck congestion games~\cite{HarksKM09}. The complexity of computing PNE and SE in these games was further investigated in~\cite{HarksHKS10}, where a central result is an algorithm called Dual Greedy that computes SE. On single-commodity network or matroid bottleneck congestion games, it can be implemented to run in polynomial time. Still, for an arbitrary state, the computation of a coalitional improvement step turned out to be NP-hard, even for these classes. 
The PoA for PNE in bottleneck games can be polynomial in the network size, for social cost being the sum of player delays~\cite{ColeDR12} or maximum player delay~\cite{BuschM09,BannerO07}. For the latter cost function, the PoA for SE becomes 2 for symmetric games with linear delays~\cite{KeijzerST10}. Improved results were obtained for variants, in which players individual costs are exponential or polynomial functions of their delays~\cite{KannanB10,BuschKS2011}. 

A fundamental drawback of bottleneck congestion games is that the bandwidth allocated to a player on a specific edge is \textit{solely} a function of the number of players on it. If one of the players does not exhaust his allocated bandwidth share (e.g., because he has a smaller bottleneck on another edge) the remaining bandwidth remains unused. In max-min fair allocations~\cite{Jaffe81}, this leftover is fairly distributed among players who can make use of it.
  
Yang et al.~\cite{YangXFMZ10} introduced so-called \textit{MAXBAR}-games which correspond to
progressive filling games using max-min fair allocations. They show that these
games possess PNE and that the price of anarchy for PNE is $n$ in
these games, where $n$ is the number of players.
It is also shown that iterative computation of unilateral improvement steps
converge in polynomial time to a PNE if the number of players is constant. 

Amaldi et al.~\cite{AmaldiCCG13} considered a centralized approach to computing routes maximizing the aggregated bandwidth subject to max-min fair allocations. They show hardness results (for multi-commodity networks) and devise an exact algorithm using column generation. Kleinberg et al.~\cite{KleinbergRT01} devise approximation algorithms (and hardness results) for the same problem using an approximate version of max-min fairness.
  
In terms of combinatorial optimization, the problem of computing a strategy
with maximum aggregated bandwidth (without fairness constraints) is related to the \textit{maximum $k$-splittable flow problem}~\cite{BaierKS05}. In contrast to the ordinary maximum flow problem, the number of paths flow is sent along is bounded by $k$, for each commodity. Positive results were especially found for the single-commodity case. For $k=2$ and $k=3$, a $\frac{3}{2}$-approximation was given and this result was generalized to a $2$-approximation for arbitrary fixed $k$. It turned out that, asymptotically, any approximation with a factor of smaller than $\frac{6}{5}$ is NP-hard to obtain. Furthermore, for $k=2$, $\frac{3}{2}$ is exactly the inapproximability bound~\cite{KochS06}.

%

\section{\PFG}
 
A {\em \pfg} \ is a tuple $\left(N,R,\left(c_i\right)_{r\in R},\left(\mathcal{S}_i\right)_{i\in N}, \left(v_i\right)_{i \in N},\left(u_i\right)_{i \in N}\right)$, where
$N=\left\{1,\dots,n\right\}$ is the set of players,
$R=\left\{1,\dots,m\right\}$ is the set of resources,
$c_r\in\mathbb{R}_+$ is the capacity of resource $r$ for each $r\in R$.
The allocation rate is defined as
 $v_i:\mathbb{R}_+\rightarrow\mathbb{R}_+$ and is assumed to be
(Riemann) integrable.
The aggregated rate (or bandwidth) of player $i$ at time $t'$
is defined as $V_i(t')=\int_0^{t'} v_i\left(t\right)dt$.
We assume that for all $i\in N$, $V_i\geq 0$,
$V_i(t) $ is monotonically non-decreasing in $t$, and $\lim_{t\rightarrow\infty}V_i(t)=\infty$.
We denote by
$\mathcal{S}_i\subseteq\mathcal{P}\left(R\right)$ the set of strategies of player $i$, for each $i\in N$, and $\mathcal{S}=\mathcal{S}_1\times\dots\times\mathcal{S}_n$ are the set of states.
Note that this definition is kept very general and can be restricted to model more specific objects, e.g. networks.
An {\em allocation} in state $S\in\mathcal{S}$ is a vector $a=\left(a_1,\dots,a_n\right)\in\mathbb{R}^n$ of feasible bandwidths, i.e.,
$\sum_{i\in N:r\in S_i}a_i\leq c_r,$  for each $r\in R$. The $i$-th component of $a$ is called the bandwidth or capacity of player $i$ (in $a$).
%
Given $S$, we create an allocation the following way. Each of the players starts off with a bandwidth $b_i=0$. We raise their bandwidths with the velocity $v_i\left(t\right)$ at time step $t\in\mathbb{R}$ until a further increase would lead to non-feasible capacities (i.e., one of the resources is \textit{saturated}). At this point, we fix the bandwidths of all the corresponding (\textit{saturated}) players and continue with the other ones. See Algorithm~\ref{alg_pf} for a formal description. For given $S$, we denote by $t_i(S)$  the \emph{ finishing time}, i.e., the time  when player $i$'s bandwidth is fixed. Thus, the payoff for player $i$ is given by $u_i(S)=V_i(t_i(S))=a_i$. We can easily extend our model to allow for
player-specific payoff functions of the form $u_i(S)=U_i(V_i(t_i(S)))=U_i(a_i),$
where $U_i$ is a differentiable and strictly increasing bandwidth utility function.
As long as $U_i$ is strictly increasing (yielding a monotone payoff transformation), an allocation is a PNE (SE) in the new game iff
it is one in the original game. 
We now state a useful observation linking the outcome
of Algorithm~\ref{alg_pf} with different fairness concepts.
\begin{proposition}
Let $U_i,i\in N$ be a set of nonnegative,
differentiable and strictly increasing bandwidth utility functions and
let $w_i,i\in N$ be a set of nonnegative weights. 
For given \pfg\ and state $S$, the following holds:
\begin{enumerate}
\item\label{prop:mmf} If for all $i\in N: v_i(t)=1$ and $u_i(S)=V_i(t_i(S))$, Algorithm~\ref{alg_pf} computes a max-min fair bandwidth allocation under $S$.
\item\label{prop:wmmf} If for all $i\in N: v_i(t)=w_i$ and $u_i(S)=V_i(t_i(S))$, Algorithm~\ref{alg_pf} computes a weighted max-min fair bandwidth allocation under $S$.
\item\label{prop:ummf} If for all $i\in N: v_i(t)=\frac{d}{dt}\big(U_i^{-1}(t)\big)$ and $u_i(S)=U_i(V_i(t_i(S)))$, Algorithm~\ref{alg_pf} computes a utility max-min fair bandwidth allocation under $S$.
\end{enumerate}
\end{proposition}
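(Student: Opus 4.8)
The plan is to treat all three parts uniformly by reducing them to a single statement about max-min fairness in a transformed coordinate system. For each case I would introduce a strictly increasing map $\phi_i\colon\mathbb{R}_+\to\mathbb{R}_+$, namely $\phi_i(x)=x$ in part~\ref{prop:mmf}, $\phi_i(x)=x/w_i$ in part~\ref{prop:wmmf}, and $\phi_i=U_i$ in part~\ref{prop:ummf}; in each case the fairness notion to be established is precisely ordinary max-min fairness of the transformed vector $(\phi_i(a_i))_{i\in N}$ over the (untransformed) feasible set. The first thing I would verify is that the prescribed rate realizes $V_i(t)=\phi_i^{-1}(t)$ (under the normalization $\phi_i(0)=0$), since $v_i(t)=\tfrac{d}{dt}\phi_i^{-1}(t)$ integrates to $\phi_i^{-1}(t)-\phi_i^{-1}(0)$; equivalently $\phi_i(V_i(t))=t$, so in the transformed coordinate every still-active player's value grows at unit speed. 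Given this, it suffices to prove one lemma: uniform progressive filling in the $\phi$-coordinates produces a max-min fair allocation there.

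Two structural facts drive the argument. First, the transformed value of a player equals its finishing time: $x_i:=\phi_i(a_i)=\phi_i\big(V_i(t_i(S))\big)=t_i(S)$. Second, when player $i$ is fixed at time $t_i(S)$, this happens because some resource $r\in S_i$ becomes saturated at that instant; every player $j$ with $r\in S_j$ is then either fixed strictly earlier (so $t_j(S)<t_i(S)$) or fixed simultaneously (so $t_j(S)=t_i(S)$), and hence every user of $r$ satisfies $x_j\le x_i$. This exhibits $r$ as a bottleneck for $i$ in the transformed coordinates.

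With these in hand the optimality argument is a short exchange step. Suppose $(\phi_i(a_i))_i$ were not max-min fair; then there is a feasible allocation $a'$ and a player $i$ with $\phi_i(a'_i)>\phi_i(a_i)$ such that $\phi_j(a'_j)\ge\phi_j(a_j)$ for every $j$ with $x_j\le x_i$. Because each $\phi_j$ is strictly increasing, these inequalities are equivalent to $a'_j\ge a_j$ with $a'_i>a_i$ strict. Applying them to the saturated resource $r$ from the previous step, all of its users satisfy $x_j\le x_i$, so $\sum_{j:\,r\in S_j}a'_j>\sum_{j:\,r\in S_j}a_j=c_r$, contradicting feasibility of $a'$. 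Hence $a$ is max-min fair in the $\phi$-coordinates, which is exactly the claimed fairness notion in each of the three cases.

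The main obstacle I anticipate is not the exchange argument but pinning down the continuous-time semantics of Algorithm~\ref{alg_pf} so that the second structural fact holds literally — in particular, that all players still active on a resource at the instant it saturates share a common finishing time, and that $t_i(S)$ is well defined and strictly monotone. This requires arguing that $V_i$ is continuous and strictly increasing on the relevant range (which follows from $v_i\ge 0$ together with $\lim_{t\to\infty}V_i(t)=\infty$, and from $U_i^{-1}$ being increasing so that the rate in part~\ref{prop:ummf} is nonnegative), and noting the mild normalization $\phi_i(0)=0$ (e.g.\ $U_i(0)=0$); an additive offset otherwise merely shifts all transformed values by a constant and leaves the ordering and the saturation argument unchanged.
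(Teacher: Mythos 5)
Your proof is correct, but it takes a genuinely different and more self-contained route than the paper. The paper dispatches parts~(1) and~(2) by citing the literature, and for part~(3) it only derives the \emph{form} of the rate function from the requirement $U_i(V_i(t))=t$ (so that utilities rise in lockstep) and then defers the actual fairness conclusion to ``standard arguments'' in a cited reference. You instead prove all three parts at once: the change of coordinates $\phi_i$ reduces each fairness notion to ordinary max-min fairness of the transformed vector, the identity $\phi_i(a_i)=t_i(S)$ turns finishing times into transformed values, and the saturated-resource/exchange argument supplies exactly the optimality proof that the paper outsources. This buys uniformity and completeness at the cost of a little bookkeeping about the continuous-time semantics of Algorithm~1, which you correctly identify as the only delicate point. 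Two small caveats: your normalization $U_i(0)=0$ is genuinely needed (the paper silently assumes it too, by dropping the constant of integration), but your side remark that a nonzero offset ``merely shifts all transformed values by a constant'' is not quite accurate --- with $V_i(t)=U_i^{-1}(t)-U_i^{-1}(0)$ the quantity $U_i(V_i(t))$ is a nonlinear, player-dependent reparametrization of $t$, so the ordering argument would actually break without the normalization; and for part~(2) you implicitly need $w_i>0$ for $\phi_i$ to be well defined (weights equal to zero are degenerate under the PFG assumptions anyway). Neither caveat affects the validity of the argument as you set it up.
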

\begin{proof}
As \eqref{prop:mmf} and \eqref{prop:wmmf} are known in the literature (cf.~\cite{BertsekasG92})
we only prove (\ref{prop:ummf}).
In order to obtain a utility max-min fair allocation, we need
to ensure that while raising rates, the bandwidth utilities must be equally distributed.
Thus, starting with $t=0$ we set  $U_i(V_i(t)=t$ for all $i\in N$.
This is equivalent to $U_i^{-1}(t)=V_i(t)$ using that $U_i$ is strictly increasing
and thus invertible. Differentiating both sides leads to $v_i(t)=\frac{d}{dt}\Big(U_i^{-1}(t)\Big)$
as claimed. Since $U_i$ is strictly increasing, its inverse is also strictly increasing (and also nonnegative), hence,
$v_i(t)$ satisfies all assumptions needed. Now it follows
by standard arguments (cf. \cite{HP08}) that the resulting allocation is utility max-min fair.
\end{proof}


\begin{algorithm}[t]
\caption{Progressive Filling(PF)}
\label{alg_pf}
\textbf{Parameters:} A \pfg\  $\mathcal{G}=\left(N,R,\left(c_i\right)_{r\in R},\left(\mathcal{S}\right)_{i\in N}, \left(v_i\right)_{i \in N}\right)$ \\
\textbf{Input:} A state $S=\left(S_1,\dots,S_n\right)\in\mathcal{S}$.\\
\textbf{Output:} The bandwidth $b_i$ for each player $i\in N$.
\begin{algorithmic}[1]
\State $b_i\gets0,$\textbf{ for all }$i\in N; N^\prime\gets N$
\State $N_r\gets\left\{i\in N\mid r\in S_i\right\}; c^\prime_r\gets c_r,$\textbf{ for all }$r\in R$
\While{$N^\prime\neq\emptyset$}
\State \begin{varwidth}[t]{\linewidth}$t^\star\gets\min\{t^\prime\mid\exists r\in R$\label{alg_pf_linemin}\par
\hspace{2.2cm}\text{ with }$\sum_{i\in N_r\cap N^\prime}\int_0^{t^\prime}v_i\left(t\right)dt=c^\prime_r$\par\hspace{2.2cm}\text{ and } $N_r\cap N^\prime\neq\emptyset\}$\end{varwidth}
\State \begin{varwidth}[t]{\linewidth}\textbf{choose }$r^\star$\textbf{ with }$\sum_{i\in N_{r^\star}\cap N^\prime}\int_0^{t^\star}v_i\left(t\right)dt=c^\prime_{r^\star}$\par\hspace{1.5cm}\textbf{ and }$N_{r^\star}\cap N^\prime\neq\emptyset$\end{varwidth}
\For{\textbf{each }$i\in N_{r^\star}\cap N^\prime$}
\State $b_i\gets\int_0^{t^\star}v_i\left(t\right)dt$\label{alg_pf_line7}
\State $N^\prime\gets N^\prime\setminus\left\{i\right\}$
\For{\textbf{each }$r\in S_i$}
\State $c^\prime_r\gets c^\prime_r-b_i$\label{alg_pf_line10}
\EndFor
\EndFor
\EndWhile
\State \textbf{return }$\left(b_1,\dots,b_n\right)$
\end{algorithmic}
\end{algorithm}
%


%

\section{Existence of Equilibria}
We first study game-theoretic properties of a \pfg. We show that SE exist and moreover every sequence of improving deviations of coalitions converges to a SE.

\begin{theorem}
\label{thm_pfg_se}
	Every \pfg\ has a SE and every sequence of improving deviations of coalitions converges to a SE.
\end{theorem}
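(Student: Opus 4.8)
The plan is to prove the stronger claim directly by exhibiting a generalized ordinal potential for coalitional improvement: a quantity that strictly increases along every profitable coalitional deviation. Since the strategy space $\mathcal{S}$ is finite, any such strictly increasing sequence must terminate, and a terminal state admits no profitable coalitional deviation, i.e.\ it is a SE (which in particular proves existence). As potential I would take the vector of finishing times $\left(t_1(S),\dots,t_n(S)\right)$ sorted in non-decreasing order and compared lexicographically, larger being better. Since each $V_i$ is non-decreasing and the bandwidth of $i$ equals $V_i(t_i(S))$, a player strictly increases its bandwidth exactly when its finishing time strictly increases, so in a profitable deviation every coalition member's finishing time strictly increases. (I treat the $V_i$ as strictly increasing so that the order of finishing times matches that of bandwidths; flat pieces of $V_i$ only create ties, which can be broken consistently.) It then suffices to prove the key lemma: any improving coalitional deviation $S\to S'$ by a coalition $C$ strictly increases the sorted finishing-time vector lexicographically.

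For the key lemma I would focus on the smallest finishing time affected by the deviation. Let $\delta=\min\{\,\min(t_i(S),t_i(S')) : t_i(S)\neq t_i(S')\,\}$ be the smallest finishing-time value involved in any change; all entries of the sorted vector strictly below $\delta$ are then identical in $S$ and $S'$. Because every member of $C$ strictly raises its finishing time, any player realizing $\delta$ as a decreased (new) finishing time must lie outside $C$ and hence keep the same strategy in both states. The lemma reduces to showing that $\delta$ cannot be realized by such a decrease: if $\delta$ is realized only by an increase, then some player with old finishing time $\delta$ moves strictly above $\delta$ while no player newly enters the set of finishing times $\le\delta$, so strictly fewer players reach value $\le\delta$ in $S'$ than in $S$ while the counts agree below $\delta$ — exactly the condition for the sorted vector of $S'$ to be lexicographically larger.

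The crux, and the step I expect to be the main obstacle, is ruling out a decrease at the minimal level. Suppose for contradiction that some $\ell\notin C$ has $t_\ell(S')=\delta<t_\ell(S)$. Here I would invoke the bottleneck structure that Algorithm~\ref{alg_pf} builds into any PF allocation: $\ell$ freezes at time $\delta$ because some resource $r^\star\in S'_\ell$ becomes saturated, every player using $r^\star$ in $S'$ has finishing time at most $\delta$, and the final bandwidths of these users sum to exactly $c_{r^\star}$. No member of $C$ can use $r^\star$ in $S'$, since such a member would have finishing time at most $\delta$ yet strictly larger than in $S$, forcing an involved value strictly below $\delta$ and contradicting minimality. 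Hence every user of $r^\star$ in $S'$ lies outside $C$, keeps its strategy, and has finishing time at most $\delta$, so by minimality its finishing time (and thus bandwidth) is weakly larger in $S$, while $\ell$'s is strictly larger. Since all these players still use $r^\star$ in $S$, the users of $r^\star$ inherited from $S'$ already have total bandwidth strictly exceeding $c_{r^\star}$ in $S$, violating feasibility of the state $S$ on $r^\star$. This contradiction shows that the minimal affected finishing time can only increase, establishing the key lemma.

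Finally I would assemble the pieces: the key lemma makes the sorted finishing-time vector a lexicographic potential that strictly increases along every improving coalitional move; finiteness of $\mathcal{S}$ bounds the number of distinct such vectors, so every sequence of improving coalitional deviations is finite and ends in a state with no profitable coalitional deviation, i.e.\ a SE. The only points needing care are the monotonicity and tie-breaking between finishing times and bandwidths noted above, and a clean statement of the bottleneck property, which follows directly from the saturation step of Algorithm~\ref{alg_pf}.
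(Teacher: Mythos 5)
Your top-level strategy coincides with the paper's: the sorted finishing-time vector as a lexicographic potential, a key lemma asserting that every profitable coalitional move strictly increases it, and finiteness of $\mathcal{S}$ to conclude convergence and existence. Your proof of the key lemma, however, takes a different route from the paper's Lemma~\ref{lemma_cpf} (you argue by a feasibility contradiction on the final allocation of the saturated resource $r^\star$ at the minimal changed level $\delta$, whereas the paper compares residual capacities of resources \emph{as functions of time} up to the threshold $\min_{j\in C}t_j(S)$), and that route contains one genuine gap: the assertion that $\ell$'s bandwidth is \emph{strictly} larger in $S$ than in $S'$. You infer this from $t_\ell(S)>t_\ell(S')=\delta$, but the model only assumes the $V_i$ are non-decreasing, so $b_\ell(S)=V_\ell(t_\ell(S))$ can equal $V_\ell(t_\ell(S'))=b_\ell(S')$ whenever $V_\ell$ is constant on $[\delta,t_\ell(S)]$. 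In that case the users of $r^\star$ only give $\sum_i b_i(S)\ge c_{r^\star}$, which is perfectly consistent with feasibility of $S$, and your contradiction disappears. Your parenthetical about flat pieces ``only creating ties'' covers the harmless direction (a strict bandwidth gain forces a strictly later finishing time, which is all you need for the coalition members), but the crux needs the converse --- a strictly later finishing time forcing a strictly larger bandwidth --- and that is precisely what fails on a flat piece. Since the paper explicitly stresses that the theorem holds for merely non-decreasing $V_i$ (this generality is what separates it from the negative result in Theorem~\ref{thm_ext}), the gap is not cosmetic.

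The gap is localized and repairable: instead of comparing \emph{final} bandwidths on $r^\star$, compare the \emph{load} on $r^\star$ over time. Every user of $r^\star$ in $S'$ lies outside $C$, keeps its strategy, and (by your minimality argument) satisfies $t_i(S)\ge t_i(S')$; hence its contribution $V_i(\min(t,t_i(\cdot)))$ to $r^\star$ at any time $t\le\delta$ is at least as large in $S$ as in $S'$, and any additional users of $r^\star$ in $S$ contribute nonnegatively. Since the load in $S'$ reaches $c_{r^\star}$ at time $\delta$, the resource $r^\star$ is saturated in $S$ by time $\delta$ as well, so $t_\ell(S)\le\delta$ --- contradicting $t_\ell(S)>\delta$ without ever invoking strict monotonicity. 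This is essentially the paper's residual-capacity argument transplanted into your framework. Everything else in your write-up --- the definition and minimality of $\delta$, the exclusion of coalition members from $r^\star$ in $S'$, the counting argument showing that ruling out decreases to $\delta$ yields the lexicographic increase, and the assembly of the theorem from the key lemma --- is correct.
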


\begin{proof}
Let $\mathcal{G}$ be a PFG and $S$ a state in this game. For a player $i$, recall that
we denote by $t_i\left(S\right)$ the finishing time, i.e., the point in time when his bandwidth is fixed by Algorithm~\ref{alg_pf} on $S$. Likewise, we denote by $\tilde{t}_r\left(S\right)$ the point in time when resource $r$ gets saturated. In the remainder of the proof we crucially exploit the monotone relationship between the obtained bandwidth and the finishing time of every player. By the monotonicity of  the $V_i$'s, if a player strictly improves his obtained bandwidth by using an alternative strategy, then the new finishing time must strictly increase.

For a state $S$, we define a lexicographical potential function $\phi:\mathcal{S}\rightarrow\mathbb{R}_+^n$ as 
the vector of finishing times sorted in non-decreasing order, i.e.,
$\phi\left(S\right)=\left(t_{i_1}\left(S\right),\dots,t_{i_n}\left(S\right)\right)$
with $\{i_1,\ldots,i_n\} = N$ and $t_{i_j}(S) \le t_{i_{j+1}}(S)$.

The next lemma shows that in a state $S$ an improving move of a coalition $C$ to a state $T$ implies that $\phi(S) \prec \phi(T)$ where $\prec$ denotes the lexicographic ordering of vectors. Thus, a $\prec$-maximal state must be a SE. This   implies the existence of the potential function and thereby the theorem.
\end{proof}
\begin{lemma}
\label{lem_dual_greedy_se2}
\label{lemma_cpf}
Let $C\subseteq N$ be a coalition which has an improving move from $S=\left(S_1,\dots,S_n\right)$ to $T=\left(T_1,\dots,T_n\right)$ where $S,T\in\mathcal{S}$. Then we have
\begin{compactenum}[\hspace{.3cm}(a)]
	\item $t_i\left(T\right)\geq t_i\left(S\right)$, for all $i\in N$ with $t_i\left(S\right)\leq\min_{j\in C}t_j\left(S\right)$, and
	\item $t_i\left(T\right)> \min_{j\in C}t_j\left(S\right)$, for all $i\in N$ with $t_i\left(S\right)>\min_{j\in C}t_j\left(S\right)$.
\end{compactenum}
\end{lemma}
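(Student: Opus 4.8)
The plan is to recast claims (a) and (b) as a single monotonicity statement and then argue by a minimal counterexample that traces a short chain of resources back to the coalition. Write $\tau := \min_{j\in C} t_j(S)$ and, for $X\in\{S,T\}$, a resource $r$ and a time $t$, let $W_r(X,t) := \sum_{k:\, r\in X_k} V_k(\min\{t,t_k(X)\})$ be the total bandwidth placed on $r$ at time $t$ in the run of Algorithm~\ref{alg_pf} on $X$; recall that $\tilde t_r(X)$ is the first time $W_r(X,\cdot)$ reaches $c_r$, that $W_r$ is continuous and non-decreasing, and that every player is fixed as soon as the first resource in its strategy saturates. I would first observe that (a) and (b) together are equivalent to the assertion that \emph{no player is a violator}, where $i$ is a violator if $t_i(T)\le\tau$ and $t_i(S)>t_i(T)$: a failure of (a) or of (b) produces exactly such an $i$, and conversely the absence of violators yields both claims. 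Since the move is improving, every $j\in C$ satisfies $t_j(T)>t_j(S)\ge\tau$, so no coalition member is a violator, and every $j\in C$ contributes its full $V_j(t)$ to each of its resources on both sides for all $t\le\tau$; hence a coalition member can only \emph{decrease} a resource's water in $T$ relative to $S$, and that only by leaving it.

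Assuming a violator exists, I would pick $i^\star$ with $t^\star := t_{i^\star}(T)$ minimal among violators; then $i^\star\notin C$ (so $S_{i^\star}=T_{i^\star}$), $t^\star\le\tau$, and $t_{i^\star}(S)>t^\star$. Player $i^\star$ is fixed in $T$ because some $r_0\in S_{i^\star}$ saturates at $t^\star$, while in $S$ it is still active, giving $W_{r_0}(T,t^\star)=c_{r_0}>W_{r_0}(S,t^\star)$. No $j\in C$ can lie on $r_0$ in $T$, for otherwise $t_j(T)\le\tilde t_{r_0}(T)=t^\star\le\tau$, contradicting $t_j(T)>\tau$; thus the coalition cannot account for the surplus, so the non-coalition part of the difference is positive and there is a non-coalition $k_1$ on $r_0$ with $V_{k_1}(\min\{t^\star,t_{k_1}(T)\})>V_{k_1}(\min\{t^\star,t_{k_1}(S)\})$. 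Monotonicity of $V_{k_1}$ forces $t_{k_1}(S)<t^\star$ and $t_{k_1}(S)<t_{k_1}(T)$, i.e. $k_1$ is \emph{delayed}. Let $r_1\in S_{k_1}$ be the resource that fixes $k_1$ in $S$ at $\sigma_1:=t_{k_1}(S)<t^\star$; since $k_1$ is still active in $T$ at $\sigma_1$ and $r_1\in T_{k_1}$, resource $r_1$ is not yet saturated in $T$, so $W_{r_1}(T,\sigma_1)<c_{r_1}=W_{r_1}(S,\sigma_1)$. To finish, I would analyze this deficit: a non-coalition cause would be a player fixed earlier in $T$ than in $S$ with finishing time below $\sigma_1<t^\star$, i.e. a violator violating the minimality of $i^\star$; coalition members are active in both runs at $\sigma_1<\tau$. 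Hence the deficit is caused by a coalition member leaving $r_1$, so some $j\in C$ has $r_1\in S_j$. But $r_1$ saturates at $\sigma_1$ in $S$, which fixes $j$ at time $\le\sigma_1<\tau$, contradicting $\tau=\min_{l\in C}t_l(S)$. This contradiction shows no violator exists, proving (a) and (b).

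The main obstacle is precisely what defeats a naive one-resource argument: non-coalition players may be delayed in $T$ and pile extra bandwidth onto the resource $r_0$ that fixes $i^\star$ early, so over-saturation of $r_0$ in $T$ is genuinely possible and contradicts nothing by itself. The crux is therefore to trace this surplus one hop further, from $r_0$ through the delayed player $k_1$ to its $S$-binding resource $r_1$, and to use the minimality of $i^\star$ to exclude every non-coalition explanation of the resulting deficit on $r_1$, leaving a departing coalition member as the only possibility and contradicting the definition of $\tau$. I would also take care that continuity of $W_r$ makes ``$r$ saturates at $t$'' mean $W_r=c_r$ exactly (so strict inequalities hold just before saturation), and that the ``full contribution'' of coalition members for all $t\le\tau$ is what legitimizes the term-by-term comparisons on $r_0$ and $r_1$.
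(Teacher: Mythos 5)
Your proof is correct. It shares the paper's basic strategy --- translate the claim about finishing times into a claim about resource saturation times and water levels, and exploit the fact that no resource carrying a coalition member saturates before $\tau$ in $S$, nor at or before $\tau$ in $T$ --- but it executes the crucial step differently and, in fact, more carefully. The paper disposes of resources not used by $C$ in $T$ in one line: it asserts that the bandwidth allocated to every player of $N\setminus C$ is identical at time $\tau$ in $S$ and $T$, and immediately concludes that the residual capacity of such a resource at any $t\le\tau$ is at least as high in $T$ as in $S$. That assertion is essentially the lemma itself restricted to $N\setminus C$, and justifying it requires an induction over the saturation events before $\tau$ which the paper leaves implicit. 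Your minimal-counterexample argument supplies precisely this induction: you isolate the earliest $T$-saturation that fixes a violator, show the surplus on $r_0$ cannot come from the coalition and must come from a delayed non-coalition player $k_1$, and then trace the resulting deficit on $k_1$'s $S$-binding resource $r_1$ either to an earlier violator (contradicting minimality) or to a coalition member sitting on a resource that saturates before $\tau$ in $S$ (contradicting the definition of $\tau$). This two-hop trace is exactly the content hidden behind the paper's ``Consequently,'' so your write-up is, if anything, the more complete of the two. One small imprecision: the blanket remark that a coalition member ``can only decrease a resource's water in $T$ relative to $S$'' is not literally true, since a member joining a resource in $T$ increases its water there; but you never use it in that direction --- on $r_0$ you separately rule out coalition members in $T$, and on $r_1$ only the leaving direction matters --- so nothing breaks.
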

\begin{proof}
For some player $i$, note that we have $\tilde{t}_r\left(S\right)>t^\star$ for all $r\in T_i$ if and only if $t_i\left(S\right)>t^\star$ for some $t^\star\in\mathbb{R}$. Hence, it suffices to show that
\begin{compactenum}[\hspace{.3cm}(a$^\prime$)]
 	\item $\tilde{t}_r\left(T\right)\geq \tilde{t}_r\left(S\right)$, for all $r\in R$ with $\tilde{t}_r\left(S\right)\leq\min_{j\in C}t_j\left(S\right)$, and
	\item $\tilde{t}_r\left(T\right)>\min_{j\in C}t_j\left(S\right)$, for all $r\in R$ with $\tilde{t}_r\left(S\right)>\min_{j\in C}t_j\left(S\right)$.
\end{compactenum}

For all $r\in T_i$ for some $i\in C$, the claim directly follows because we have $t_i\left(T\right)>t_i\left(S\right)\geq\min_{i\in C}t_i\left(S\right)$ and thus $\tilde{t}_r\left(T\right)>\min_{i\in C}t_i\left(S\right)$. So let $r\in R$ such that $r$ is not used in $T$ by any player from $C$.

In $S$, no resource which is used by a player from $C$ has been saturated before $\min_{i\in C}t_i\left(S\right)$. Consequently, the bandwidth allocated to a player $i$ is identical at time $\min_{i\in C}t_i\left(S\right)$ in $S$ and $T$ for all $i\in N \setminus C$. Since in $T$ resource $r$ is used by exactly the same players from $N \setminus C$ as in $S$ and by no player from $C$, the residual capacity of $r$ at a time $t\leq\min_{i\in C}t_i\left(S\right)$ is in $T$ at least as high as in $S$.
 
This last result immediately implies (a$^\prime$). For (b$^\prime$), let $\tilde{t}_r\left(S\right)>\min_{j\in C}t_j\left(S\right)$. This means that the residual capacity at time $\min_{i\in C}t_i\left(S\right)$ is above zero in $S$ and hence also in $T$. By the continuity of the indefinite integrals of the allocation rate functions, we obtain $\tilde{t}_r\left(T\right)>\min_{j\in C}t_j\left(S\right)$. 
\end{proof}

Note that the above result applies to PFGs in full generality, that is, only requiring that the functions $V$ are non-negative, non-decreasing, and tend to infinity for $t$ going to infinity. We now show that the assumptions underlying this result cannot be relaxed. Clearly, relaxing non-negativity or relaxing the unboundedness of $V$ makes not much sense. Negative aggregated rates have no physical meaning, and for a bounded $V$ there exists a game with large enough capacities for which Algorithm~\ref{alg_pf} does not terminate. More interestingly, suppose we have an allocation rate function for which  the aggregated bandwidth $V(t)$ is non-monotonic. Note that this extension still allows to use Algorithm~\ref{alg_pf} to calculate the allocation via progressive filling. We show that for any such function, Theorem~\ref{thm_pfg_se} does not hold anymore. This is even true if we restrict to two-player games 
with symmetric strategy spaces.

\begin{theorem}
\label{thm_ext}
	Let $v$ be such that $V:\mathbb{R}_+\rightarrow\mathbb{R}_+,t^\prime\mapsto\int_0^{t^\prime}v\left(t\right)dt$ 
	satisfies $V\geq 0$ and $\lim_{t\rightarrow\infty}V(t)=\infty$.
	If $V(t)$ is not monotone, there is a two-player PFG $\mathcal{G}_v$ 
	with symmetric strategy spaces that does not have a PNE and only uses $v$ and one constant function as allocation rate functions.
\end{theorem}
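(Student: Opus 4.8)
The plan is to produce an explicit two-player instance in which the non-monotonicity of $V$ creates a best-response cycle over four states, so that no state can be a PNE. Player $1$ will use the given rate $v$ (so $V_1=V$) and player $2$ a constant rate $\rho>0$ to be fixed, giving $V_2(t)=\rho t$. Both players will share the strategy set $\{A,B\}$ with $A=\{p,e\}$ and $B=\{q,e\}$, where $e$ is a common resource and $p,q$ are strategy-specific; this is symmetric and uses only $v$ and one constant function, as required. Before touching the game I would distill the hypothesis into a clean pair of times: since $V$ is continuous (it is an indefinite integral) and not non-decreasing, there are $a<b$ with $V(a)>V(b)$; I put $t_2=b$ and let $t_1\in\argmax_{t\in[0,t_2]}V(t)$. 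Then $t_1<t_2$, $V(t_1)>V(t_2)$, and crucially $V(t)\le V(t_1)$ for all $t\in[0,t_1]$.

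Writing $\Phi(t):=V(t)+\rho t$, I would calibrate the capacities against $\Phi$. Setting $c_p=c_q=\Phi(t_1)$ makes the common-strategy resource ($p$ in $(A,A)$, $q$ in $(B,B)$) the first to saturate, at exactly $t_1$: indeed $\Phi(t)=V(t)+\rho t<V(t_1)+\rho t_1=\Phi(t_1)$ for $t<t_1$, using $V(t)\le V(t_1)$ and $\rho t<\rho t_1$. Both players are then fixed at $t_1$, so player $1$ earns $V(t_1)$ and player $2$ earns $\rho t_1$. For the mismatched states, where $e$ is the only shared resource, I would set $c_e=M_2+\delta$ with $M_2:=\max_{t\in[0,t_2]}\Phi(t)$ and a small $\delta>0$; since $\Phi<c_e$ throughout $[0,t_2]$, the resource $e$ first saturates at some $t_3>t_2>t_1$, and both players are fixed there, giving player $1$ the payoff $V(t_3)$ and player $2$ the payoff $\rho t_3$.

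The crux is to make the mismatched payoffs strictly reverse the matched ones. For player $2$ this is immediate because $t_3>t_1$ forces $\rho t_3>\rho t_1$. For player $1$ the decisive inequality is $V(t_3)<V(t_1)$, which I would derive from the strict gap $g:=V(t_1)+\rho t_2-M_2>0$. Positivity of $g$ is exactly where the hypothesis $V(t_2)<V(t_1)$ is used: every $t\le t_2$ satisfies $V(t)+\rho t\le V(t_1)+\rho t_2$ with strict inequality throughout (at $t=t_2$ because $V(t_2)<V(t_1)$, and for $t<t_2$ because $\rho t<\rho t_2$), so $M_2<V(t_1)+\rho t_2$. Choosing $\delta\in(0,g)$ and using $t_3>t_2$ then yields $V(t_3)=c_e-\rho t_3<M_2+\delta-\rho t_2<V(t_1)$.

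Finally I would rule out spurious bindings in the mismatched profile: the private resource $p$ cannot saturate before $t_3$ because it requires $V(t)=c_p>V(t_1)\ge V(t_3)$, and by taking $\rho$ small the private resource $q$ (which saturates only at $t_1+V(t_1)/\rho$) is pushed past $t_3$, so $e$ is genuinely the first to saturate. With the four payoffs in hand---$u_1=V(t_1),\,u_2=\rho t_1$ in the matched states and $u_1=V(t_3)<V(t_1),\,u_2=\rho t_3>\rho t_1$ in the mismatched states---the deviations $(A,A)\to(A,B)\to(B,B)\to(B,A)\to(A,A)$ are each strictly improving (player $2$ breaks a match, player $1$ restores one), so every state has a profitable unilateral deviation and no PNE exists. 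It is worth noting that player $1$'s improving move $(A,B)\to(B,B)$ simultaneously \emph{decreases} its finishing time from $t_3$ to $t_1$ while increasing its bandwidth, which is precisely the failure of the monotone finishing-time relation underlying the potential argument of Theorem~\ref{thm_pfg_se}. The main obstacle, and the only place the hypotheses really bite, is this calibration: guaranteeing that in the mismatched profile the shared resource $e$ saturates first and does so in the ``valley'' where $V$ has fallen below $V(t_1)$; the gap $g>0$ forced by $V(t_2)<V(t_1)$, together with a small enough $\rho$ to neutralize the private resources, is exactly what delivers this.
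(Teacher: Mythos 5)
Your construction is, at its core, the same as the paper's: three resources, symmetric strategy sets $\{p,e\}$ and $\{q,e\}$ sharing a bottleneck $e$, player $1$ running $v$ and player $2$ a constant rate, with capacities tuned so that matched states reward player $1$ and mismatched states reward player $2$, yielding a four-state best-response cycle. The difference is in the calibration: the paper takes the constant rate to be \emph{larger} than a Lipschitz constant of $V$ on $[0,t_2]$ and chooses $t_2$ so close to $t_1$ that all saturation events occur inside $[0,t_2]$, where the Lipschitz bound controls $V$; you instead take the constant rate $\rho$ \emph{small} and let the mismatched bottleneck saturate at some $t_3>t_2$. That choice is exactly where your argument develops genuine gaps, because on $(t_2,t_3]$ you have no control whatsoever over $V$.

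Concretely, two steps in your ``no spurious bindings'' paragraph do not hold as written. First, the claim that $p$ cannot saturate before $t_3$ ``because it requires $V(t)=c_p>V(t_1)\ge V(t_3)$'' is circular ($V(t_3)\le V(t_1)$ is only available once you already know that $e$ saturates first at $t_3$), and in any case it says nothing about $t\in(t_2,t_3)$, where $V$ may well exceed $c_p=V(t_1)+\rho t_1$ (e.g.\ if $V$ spikes right after $t_2$). The correct argument is that if $V(t)=c_p$ at some $t>t_2$, then $\Phi(t)=c_p+\rho t>V(t_1)+\rho t_2+\rho t_1>M_2+\rho t_1\ge c_e$ provided $\delta\le\rho t_1$, so $e$ must already have saturated --- an extra constraint on $\delta$ you never impose. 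Second, ``taking $\rho$ small pushes $t_q=t_1+V(t_1)/\rho$ past $t_3$'' is unjustified because $t_3$ itself depends on $\rho$ and on $\delta$ (and $\delta<g$ forces $\delta\to0$ as $\rho\to0$, since $g\to0$); if $V$ falls back near $0$ after $t_2$ and stays there for a long stretch, then $t_3$ is also of order $V(t_1)/\rho$ and the comparison is decided by lower-order terms that go the wrong way. The fix is to note that $\Phi\ge V$ and $V\to\infty$ give the bound $t_3\le\min\{t:V(t)=V(t_1)+t_2+1\}$, which is \emph{uniform} over $\rho,\delta\le1$, and to choose $\rho$ only after fixing that bound. (Even where these comparisons fail, one can check that $q$ saturating first still produces the needed payoff inequalities, but that is a further case analysis you do not perform.) All of this is avoidable: the paper's Lipschitz calibration makes $V(t)+(\rho+1)t$ strictly increasing on $[0,t_2]$ and confines every saturation event to that interval, which is why it needs none of these extra arguments.
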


\begin{proof}
Let $v$ be an allocation rate function such that
the aggregated rate function $V$ is not monotone.
By the continuity and non-negativity of $V$, there
is  $t_1>0$ such that for every $\epsilon>0$, there is
$t_2=t_2(\epsilon)\in (t_1,t_1+\epsilon)$ with $V(t_1)>V(t_2)$
(see~\cite[Lemma 3.1]{HarksK12}).
Thus, we can choose
$t_2$ satisfying $t_2< t_1+\epsilon$
for any $\epsilon>0$ to be specified later.
Since $v$ is Riemann integrable and thus on the interval $\left[0,t_2\right]$ bounded, its indefinite integral $V$ has a Lipschitz constant $\rho>0$ on $\left[0,t_2\right]$. 

We now describe the game $\mathcal{G}_v$ with two players $\left\{1,2\right\}$. We set $R=\left\{r_1,r_2,r_3\right\}$ with
$c_{r_1}=c_{r_2}=(\rho+1) t_1+V(t_1)$ and $c_3=(\rho+1) t_2+V(t_2)$.
Furthermore, the sets of strategies are
$\mathcal{S}_1=\mathcal{S}_2=\left\{\left\{r_1,r_3\right\},\left\{r_2,r_3\right\}\right\}.$
As allocation rate functions, we use $v_1\equiv v$ and $v_2\equiv \rho+1 $.
We claim that, whenever both players share one of the resources $r_1$ or $r_2$,
the shared resource is saturated at time $t_1$ and player $2$ gets bandwidth $(\rho+1) t_1$ while player $1$ gets bandwidth $V(t_1)$.
To see this, we use the Lipschitz inequality
 $\frac{V(t)-V(t_1)}{t_1-t}< (\rho+1)$ for all $t\in [0,t_1)$
 implying  $(\rho+1) t+ V(t)< (\rho+1) t_1 + V(t_1)$ for all $t\in [0,t_1)$.
On the other hand, whenever player $2$ is alone on either $r_1$ or $r_2$,
resource $r_3$
is saturated at time $t_2$
using again $\frac{V(t)-V(t_2)}{t_2-t}< (\rho+1)$ for all $t\in [0,t_2)$.
By choosing $t_2<t_1+V(t_1)/(\rho+1)$ (hence $t_2=t_2(\epsilon)$ with  $\epsilon=V(t_1)/(\rho+1)$)
we get $(\rho+1)t_2< (\rho+1) t_1+V(t_1)$
and, thus, none of the resources $r_1$ or $r_2$ gets saturated
before $t_2$.
Consequently, player $2$ gets bandwidth $(\rho+1)  t_2>(\rho+1)  t_1$ while player $1$ gets bandwidth $V(t_2)<V(t_1)$. 
Hence, there is no PNE. 
\end{proof}

\section{Max-Min-Fair \PFG s}
\label{sec:MMFG}

A special case of \pfg s arises if all players raise their bandwidth uniformly,  i.e., $v_i(t) =1$ for all $i \in N$. This leads to allocations that are max-min fair. 
We call such a game max-min-fair \pfg\ or MMFG.
More formally, let $S\in\mathcal{S}$ be a state and $\mathcal{A}=\left\{a\mid a\text{ is an allocation in $S$}\right\}$, then the unique $\preceq$-maximal $a^\star$ in $\mathcal{A}$ is the max-min fair allocation.
In the following, we will study the computational complexity and efficiency of SE and PE in MMFGs.

\subsection{Computing Equilibria}
\label{sec:MMFG_Comp}
Similar to~\cite{HarksHKS10}, we use a \textit{dual greedy algorithm}~\cite{Schrijver03} to compute strong equilibria.
Our dual greedy algorithm is allowed to query a \textit{strategy packing oracle} that solves the strategy packing problem which is the following:
The input is given by a set $R$ of $m\in\mathbb{N}$ resources, $n$ sets of strategies $\mathcal{S}_i\in\mathcal{P}\left(R\right)$, for all $i\in\left\{1,\dots,n\right\}$,  
  along with upper bounds $u_r\in\left\{0,\dots,n\right\}$, for each $r\in R$.
The  output is a state $\left(S_1,\dots,S_n\right)\in\mathcal{S}_1\times\dots\times\mathcal{S}_n$ satisfying the upper bounds, i.e., $\left|\{i\in N\mid r\in S_i\}\right|\leq u_r$, for all $r\in R$, if it exists. Otherwise the output is the information that no such state exists.

 \begin{algorithm}[h]
\caption{Dual Greedy Algorithm}
\label{alg_dg}
Let $\mathfrak{O}$ denote the strategy packing oracle.\\
\textbf{Input:}  A MMFG  $\mathcal{G}=\left(N,R,\left(c_i\right)_{r\in R},\left(\mathcal{S}\right)_{i\in N}\right)$ \\
\textbf{Output:} A SE in $\mathcal{G}$.
\begin{algorithmic}[1]
\State $b_i\gets0,$\textbf{ for all }$i\in N; N^\prime\gets N$
\State $u_r\gets n, c^\prime_r\gets c_r,$\textbf{ for all }$r\in R$
\While{$N^\prime\neq\emptyset$}
\State $\left(S_i^\prime\right)_{i\in N^\prime}\gets\mathfrak{O}\left(R,\left(\mathcal{S}_i\right)_{i\in N^\prime},\left(u_r\right)_{r\in R}\right)$\label{alg2_oracle}
\State \textbf{choose }$r^\star\in\arg\min_{r\in E:u_r>0}\frac{c^\prime_r}{u_r}$\label{alg2_choice}
\State $u_{r^\star}\gets u_{r^\star}-1$\label{alg2_user_dec}
\If{$\mathfrak{O}\left(R,\left(\mathcal{S}_i\right)_{i\in N^\prime},\left(u_r\right)_{r\in R}\right)=\emptyset$}\label{alg2_oracle1}
\State $u_{r^\star}\gets u_{r^\star}+1$
\State $b\gets\frac{c^\prime_{r^\star}}{u_{r^\star}}$\label{alg2_bwcalc}
\For{\textbf{each }$i\in N^\prime$\textbf{ with }$r^\star\in S^\prime_i$}
\State $S_i\gets S^\prime_i$
\State $N^\prime\gets N^\prime\setminus\left\{i\right\}$
\For{\textbf{each }$r\in S_i$}\label{alg2_inner_for}
\State $u_r\gets u_r-1$
\State $c^\prime_r\gets c^\prime_r-b$
\EndFor\label{alg2_inner_for_end}
\EndFor
\EndIf
\EndWhile
\State \textbf{return }$S$
\end{algorithmic}
\end{algorithm}

The dual greedy algorithm initially allows an upper bound of $u_r=n$ players on each resource $r$ and every resource and every player is initially considered \emph{free}. The algorithm starts with an arbitrary state $S$ of strategies for players. It iteratively decrements one of the bounds $u_r$ on a free resource providing minimum bandwidth if each resource was used by $u_r$ players. After each decrement, it checks the existence of a strategy profile respecting the new upper bounds on the number of players using it by invoking the strategy packing oracle. When a decrease produces infeasible bounds, i.e., when there is no state of the game respecting the new bounds, it reverts the last decrease. Now we know that in the profile that was returned by the oracle, exactly $u_r$ players are using $r$ and it is infeasible to further reduce $u_r$. Thus, the algorithm turns $r$ into a \emph{fixed} resource, and also fixes the $u_r$ players as well as their strategies. In addition, it decreases every resource capacity 
by the amount given to the $u_r$ fixed players in their strategies. Then it continues with the remaining players, resources, and residual capacities. For a formal statement of the algorithm see Algorithm~\ref{alg_dg}.

\begin{theorem}
	\label{thm_dg_correctness}
	The dual greedy algorithm computes a SE.
\end{theorem}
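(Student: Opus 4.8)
The plan is to reduce the statement to the $\prec$-maximality machinery already set up for Theorem~\ref{thm_pfg_se}. In an MMFG every player raises bandwidth at unit speed, so $V_i(t)=t$ and the finishing time of a player equals his bandwidth, $t_i(S)=u_i(S)=a_i$. Hence the lexicographic potential $\phi(S)$ from the proof of Theorem~\ref{thm_pfg_se} is exactly the vector of bandwidths of $S$ sorted in non-decreasing order, and Lemma~\ref{lemma_cpf} tells us that any improving coalitional move strictly increases $\phi$ in the lexicographic order $\prec$. Therefore it suffices to show that the state $S$ returned by Algorithm~\ref{alg_dg} is $\prec$-maximal over all states, i.e.\ its sorted bandwidth vector lexicographically dominates that of every state; then no coalition can deviate profitably and $S$ is a SE.

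First I would dispatch the bookkeeping. \emph{Termination:} each iteration of the while-loop either decreases $\sum_r u_r$ by one (a successful decrement) or fixes at least one player and removes it from $N'$; both quantities are bounded, so the loop halts with every player assigned a single strategy, and the oracle is only ever queried on bounds known to be feasible. \emph{Validity and the PF-match:} when resource $r^\star$ is fixed with its $u_{r^\star}$ players at bandwidth $b=c'_{r^\star}/u_{r^\star}$, the capacity and bound updates drive $c'_{r^\star}$ and $u_{r^\star}$ to $0$, so in the final state $S$ exactly this batch uses $r^\star$ and it is saturated precisely at $b$. I would also show the fixed bandwidths appear in non-decreasing order: a one-line computation shows that removing $k$ players, each of bandwidth $b$, from a resource with $c'_r/u_r\ge b$ leaves $c'_r/u_r\ge b$, so $\min_r c'_r/u_r$ never decreases across fix events. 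Together these facts let an induction on fix events match the fixing order to the saturation order of Algorithm~\ref{alg_pf}, proving that running PF on $S$ reproduces the assigned bandwidths $b_i$.

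The heart of the argument, and the step I expect to be the main obstacle, is $\prec$-maximality, which I would prove by induction over fix events. At the first fix event, with $b^{(1)}=c_{r^\star}/u_{r^\star}$ the minimal ratio, I claim $b^{(1)}=\max_T \min_{r:\,n_r(T)>0} c_r/n_r(T)$, the largest minimum bandwidth any state can guarantee. The bound $\ge$ is witnessed by $S$ itself. For $\le$ I would use the oracle's infeasibility certificate: decrementing $u_{r^\star}$ once more made the packing infeasible, so every profile respecting the current bounds places at least $u_{r^\star}$ players on $r^\star$ and thus contains a player of bandwidth $\le c_{r^\star}/u_{r^\star}=b^{(1)}$. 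The delicate point is that this certificate constrains only profiles respecting the current bounds $(u_r)$, whereas the optimum ranges over all states; I would close this gap by showing that, since the greedy only ever tightens the currently worst resource, its surviving bounds coincide with the natural thresholds $u_r=\lfloor c_r/b^{(1)}\rfloor$ (each free $r$ satisfies $c'_r/u_r\ge b^{(1)}$, while a resource was decremented to $u_r$ only while it was the minimizer, at a ratio $\le b^{(1)}$, which pins $u_r$ to the threshold). Hence a state beating $b^{(1)}$ would in particular respect all current bounds while placing fewer than $u_{r^\star}$ players on $r^\star$, contradicting infeasibility. Ties in the $\arg\min$ and exact capacity divisibility are precisely where this identification of bounds with thresholds is fragile, so that is where I would spend the most care.

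Finally I would strip off the fixed batch together with its bandwidth, observe that the remainder of Algorithm~\ref{alg_dg} is an identical run on the residual instance, and apply the induction hypothesis to conclude that its sorted bandwidth vector is lexicographically maximal there. Prepending $u_{r^\star}$ copies of the globally minimal value $b^{(1)}$ then yields a lexicographically maximal ascending vector for the whole instance: any competing state either has a strictly smaller minimum (dominated at the first coordinate) or matches $b^{(1)}$, in which case optimality reduces to the residual problem. The only residual subtlety is that fixing a \emph{specific} batch of players must not forfeit optimality for the rest; this holds because $\phi$ depends on $S$ only through the multiset of bandwidths, not the identities of the players carrying each value. Thus $\phi(S)$ is $\prec$-maximal, and by Lemma~\ref{lemma_cpf} no coalition can improve upon $S$, so $S$ is a SE.
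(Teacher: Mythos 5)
Your top-level reduction (the sorted-bandwidth vector is the lexicographic potential, Lemma~\ref{lemma_cpf} turns $\prec$-maximality into the SE property) is the same starting point as the paper, but from there you diverge: you try to prove that Dual Greedy's output is globally $\prec$-maximal over \emph{all} states by peeling off one fix event at a time, whereas the paper never establishes global lexicographic maximality. Instead it argues by induction on fix events that no \emph{improving coalition} can contain a player fixed at stage $k+1$: by the monotonicity of the ratios $c'_r/u_r$ (the paper's Lemma~\ref{lem_dual_greedy_se}), the stage-$(k+1)$ players attain the minimum bandwidth within any such coalition, Lemma~\ref{lemma_cpf} then forces every player in the deviated state $T$ to meet or exceed that minimum, and the resulting load bounds $l_r(T)\le u_r$ with $l_{r^\star}(T)<u_{r^\star}$ contradict exactly the oracle's infeasibility certificate. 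That argument only ever has to handle deviations \emph{from the algorithm's own output}, which is what makes it close.

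Your route has a genuine gap at the induction step. First, the identification $u_r=\lfloor c_r/b^{(1)}\rfloor$ is false as stated: resources that were never decremented still carry $u_r=n$, and when $c_r/b^{(1)}$ is integral the last decrement pins $u_r$ only to $\lceil c_r/b^{(1)}\rceil-1$, which differs from your threshold. (The inequality you actually need, $u_r\ge\lceil c_r/b^{(1)}\rceil-1\ge n_r(T)$ for any $T$ with minimum bandwidth exceeding $b^{(1)}$, does hold, so the first coordinate of maximality is salvageable -- this is essentially the paper's Lemma~\ref{lem_dg_umcap}.) The unfixable-as-written part is ``optimality reduces to the residual problem.'' A competing state $T$ with $\min_i b_i(T)=b^{(1)}$ need not place the same number of players at value $b^{(1)}$, need not place them on $r^\star$, and need not use the strategies the oracle returned; consequently it induces different residual capacities on the free resources, and the induction hypothesis for \emph{your} residual instance says nothing about $T$'s continuation. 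Worse, $T$ may violate a bound $u_r$ on some resource other than $r^\star$ and thereby park as few as $u_r+1<u_{r^\star}$ players at exactly $b^{(1)}$ with everyone else strictly above -- ruling this out requires an exchange argument you do not supply, and your observation that $\phi$ depends only on the multiset of bandwidths does not address it. This is precisely the difficulty the paper's coalition-based induction is designed to avoid, so to complete your proof you would either need to carry out that exchange argument or fall back on the paper's weaker but sufficient claim.
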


\begin{proof}
The main idea of the proof is similar to~\cite{HarksHKS10}, i.e., the iterative assignment of Dual Greedy yields a lexicographically maximal vector of bandwidths. Consider on each resource the residual capacity not yet assigned to fixed players. We can assume that this residual capacity is offered in equal shares to the remaining free players. Thus, the share of each free player only depends on the number of free players using it. Hence, as long as no players are fixed, the game can be seen equivalently as a bottleneck congestion game. In addition, once a resource and players are fixed, then the bandwidth of a fixed player is smaller than the equal share of residual capacity on every free resource he uses. This allows to inductively show correctness of the algorithm.

More formally, fix a run of the dual greedy algorithm on the input instance given in the formal description and denote the output by $S=\left(S_1,\dots,S_n\right)$. Furthermore, by $b_i$, for a player $i\in N$, we denote his bandwidth calculated 
just before his strategy was fixed. We start off with proving the following useful lemma.

\begin{lemma}
\label{lem_dual_greedy_se}
Consider the $t$-th run of the main loop in Algorithm~\ref{alg_dg} where $t>1$. If $u_r>0$, the value of $\frac{c^\prime_r}{u_r}$ is not smaller than the value in the $\left(t-1\right)$-th run of the main loop.
\end{lemma}
\begin{proof}[Proof (Lemma)]
Observe that, in one run of the main loop, the fraction $\frac{c^\prime_r}{u_r}$ for some $r\in R$ can only be changed for the following two reasons.

\textit{Case 1:}
The resource $r$ is chosen in line~\ref{alg2_choice} and the oracle does not evaluate to $\emptyset$ in line~\ref{alg2_oracle1}. Then $u_r$ is decremented, i.e., the above fraction is increased.

\textit{Case 2:}
A resource $r^\star$ (not necessarily $r\neq r^\star$) is chosen in line~\ref{alg2_choice}, the oracle evaluates to $\emptyset$ in line~\ref{alg2_oracle1} and $r$ occurs in $k$ different strategies $S_i^\prime$ obtained from the oracle in line~\ref{alg2_oracle} where $1\leq k\leq u_{r^\star}\leq n$. According to the calculations from line~\ref{alg2_inner_for} to line~\ref{alg2_inner_for_end}, the new value of the above fraction is
$$\frac{c^\prime_r-k\cdot\frac{c^\prime_{r^\star}}{u_{r^\star}}}{u_r-k}=\frac{u_{r^\star}\cdot c^\prime_r-k\cdot c^\prime_{r^\star}}{u_{r^\star}\cdot u_r- k\cdot u_{r^\star}}$$
where we let $k<u_r$ since, otherwise, the new $u_r$ is $0$. Further, we have
$$\frac{u_{r^\star}\cdot c^\prime_r-k\cdot c^\prime_{r^\star}}{u_{r^\star}\cdot u_r- k\cdot u_{r^\star}}\geq \frac{c^\prime_r}{u_r}$$  which is equivalent to 
$$u_{r^\star}\cdot c^\prime_r-k\cdot c^\prime_{r^\star}\geq u_{r^\star}\cdot c^\prime_r-k\cdot u_{r^\star}\cdot\frac{c^\prime_r}{u_r}$$
and because of the choice of $r^\star$ gives us $c^\prime_{r^\star}/u_{r^\star} \leq c^\prime_r/u_r,$ which implies the claim. 
\end{proof}


%

Now let $N_k$ be the set of players whose strategies are fixed as a consequence of the oracle's $k$-th evaluation to $\emptyset$. We show by induction on $k$ that none of the players from $N_1\uplus\dots\uplus N_k$ will be part of a coalition performing an improving move, for all $k$. Note that this proves the theorem because we have $N=N_1\uplus\dots\uplus N_l$ for some $l\in\mathbb{N}$.

The base case of $k=0$ follows trivially. Now assume that the statement holds for some $k<l$. To see that this implies the statement for $k+1$, observe that the strategies $S_i$ and bandwidths $b_i$ of the players $i\in N_1\uplus\dots\uplus N_k$ are already fixed. Now suppose there is a coalition $C\subseteq N$ with $C\cap N_{k+1}\neq\emptyset$ profitably deviating from $S$ to $T=\left(S^\prime_C,S_{-C}\right)$. We consider the state of the variables at line~\ref{alg2_bwcalc} after the oracle's $k+1$-th evaluation to $\emptyset$.

Since $N_1,\dots,N_k$ are not participating in the improvement step, Lemma~\ref{lem_dual_greedy_se} implies that $N_{k+1}\in\argmin_{j\in C}b_j\left(S\right)$. Thus, for $i^\star\in N_{k+1}$, Lemma~\ref{lem_dual_greedy_se2} can be used to obtain that

\begin{itemize}
	\item $b_{i}\left(T\right)\geq b_{i^\star}\left(S\right)$, for all $i\in N$ with $b_{i}\left(S\right)=b_{i^\star}\left(S\right)$, and
	\item $b_{i}\left(T\right)> b_{i^\star}\left(S\right)$, for all $i\in N$ with $b_{i}\left(S\right)>b_{i^\star}\left(S\right)$.
\end{itemize}

 Again by Lemma~\ref{lem_dual_greedy_se}, this means that we have $l_r\left(T\right)\leq u_r$, for all $r\in R$. In particular, we even have $l_{r^\star}\left(T\right)< u_{r^\star}$ as $C\cap N_{k+1}\neq\emptyset$ and the players from $C$ strictly improve. Such a state $T$ may, however, not exist by the evaluation of the oracle to $\emptyset$. 
\end{proof}
 

Dual Greedy can be implemented in polynomial time given an efficient strategy packing oracle. Hence, the problem of computing SE in MMFGs is polynomial-time reducible to the strategy packing problem. There are several non-trivial cases in which the strategy packing problem is polynomial-time solvable, e.g., for single-commodity networks~\cite{HarksHKS10}. Thus, we obtain the following result.

\begin{corollary}
 \label{cor_mmfg_scnpoly}
 SE can be computed in polynomial time for single-commodity network MMFGs.
\end{corollary}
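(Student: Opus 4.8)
The plan is to derive the corollary from the correctness of Dual Greedy (Theorem~\ref{thm_dg_correctness}) together with two running-time ingredients: that Algorithm~\ref{alg_dg} issues only polynomially many oracle queries while doing polynomial work in between, and that the strategy packing oracle admits a polynomial-time implementation on single-commodity networks. Granting both, the total running time is polynomial and the returned state is a SE by Theorem~\ref{thm_dg_correctness}.

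First I would bound the number of iterations of the main loop of Algorithm~\ref{alg_dg} by the integer potential $\Phi = \sum_{r\in R} u_r + |N'|$, which starts at $\Phi \le mn + n$ and stays non-negative. In each iteration exactly one of two things happens: either the tentative decrement of $u_{r^\star}$ survives (the oracle in line~\ref{alg2_oracle1} does not return $\emptyset$), so that $\sum_{r} u_r$ drops by one; or the decrement is reverted but at least one player is fixed and removed from $N'$ in the inner loop, so that $|N'|$ drops by at least one. Either way $\Phi$ strictly decreases, bounding the number of iterations by $mn + n$. Since each iteration makes at most two oracle calls (lines~\ref{alg2_oracle} and~\ref{alg2_oracle1}) and performs only $O(mn)$ bookkeeping, the procedure runs in polynomial time as soon as the oracle does.

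Second I would implement the oracle on single-commodity networks by reduction to integral maximum flow, following~\cite{HarksHKS10}. In a single-commodity network MMFG all players share one source $s$ and one sink $t$, so each remaining player's strategy set equals the common set of $s$-$t$ paths. A packing respecting the current bounds $(u_r)_{r \in R}$ therefore exists if and only if one can route $|N'|$ (not necessarily distinct) $s$-$t$ paths with each edge $r$ carrying at most $u_r$ of them, equivalently iff the maximum $s$-$t$ flow under the integer edge capacities $u_r$ has value at least $|N'|$. This value is computable in polynomial time; by the integrality theorem the maximizing flow can be taken integral, and a standard flow decomposition yields $|N'|$ unit $s$-$t$ paths (discarding cycles), one to be assigned to each free player. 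If instead the maximum flow value is below $|N'|$, the oracle correctly reports $\emptyset$.

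The only nontrivial part is this reduction, and it is exactly where the single-commodity assumption is essential: because all players share the same $s$-$t$ strategy set, the packing constraint collapses to a single capacitated flow-feasibility test and integral flows decompose cleanly into valid strategies. (For player-specific or multi-commodity strategy sets this collapse fails, consistent with the paper's hardness results.) Combining the polynomial iteration bound with the polynomial-time oracle, Dual Greedy runs in polynomial time and, by Theorem~\ref{thm_dg_correctness}, outputs a SE, which establishes the corollary.
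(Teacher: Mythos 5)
Your proposal is correct and follows essentially the same route as the paper: Dual Greedy's correctness (Theorem~\ref{thm_dg_correctness}) plus a polynomial iteration count plus a polynomial-time strategy packing oracle for single-commodity networks. The paper simply delegates the oracle implementation to the cited reference~\cite{HarksHKS10}, whereas you spell out the max-flow reduction and the potential-function bound on the number of iterations explicitly; both details check out.
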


In contrast, the strategy packing problem turns out to be NP-hard even if we generalize to symmetric (non-network) strategy spaces. 

\begin{theorem}
	\label{SPack_hard}
  The strategy packing problem for symmetric strategies is NP-hard.
\end{theorem}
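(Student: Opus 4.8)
The plan is to prove NP-hardness by a reduction from \emph{Exact Cover by 3-Sets} (X3C), one of the standard NP-complete problems. An X3C instance consists of a ground set $U$ with $|U| = 3q$ and a collection $\mathcal{C}$ of $3$-element subsets of $U$; the question is whether some subcollection $\mathcal{C}' \subseteq \mathcal{C}$ partitions $U$, i.e.\ covers every element exactly once (which forces $|\mathcal{C}'| = q$). The key feature I want to exploit is that a symmetric strategy packing instance with unit capacities already encodes exactly this ``cover each element at most once'' structure, and a simple counting argument upgrades ``at most once'' to ``exactly once''.

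First I would describe the construction. Given $(U,\mathcal{C})$, let the resource set be $R = U$, set the upper bound $u_r = 1$ for every $r \in R$, take $n = q$ players, and give all players the common (hence symmetric) strategy space $\mathcal{S} = \mathcal{C}$. This is clearly computable in polynomial time. A feasible state assigns to each of the $q$ players a $3$-set from $\mathcal{C}$ so that no ground-set element lies in two chosen sets.

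Next I would establish the equivalence. For the forward direction, an exact cover $\mathcal{C}' = \{C_1,\dots,C_q\}$ yields a feasible state directly: assign player $j$ the set $C_j$; since the $C_j$ are pairwise disjoint, each resource is used at most (in fact exactly) once. For the backward direction---the part carrying the real content---I would argue by counting. Suppose a feasible state exists. The $q$ chosen sets contain $3q$ element-slots counted with multiplicity, while the unit capacities permit each of the $|U| = 3q$ resources to be used at most once. Hence every element is used exactly once, the chosen sets are pairwise disjoint (and in particular distinct), and they therefore form an exact cover.

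The main obstacle---really the only subtle point---is this backward counting step: one must observe that the constraints $u_r = 1$, together with the tight balance $3 \cdot (\text{number of players}) = |U|$, force an assignment that is not merely a packing but a perfect partition, whereas a looser capacity or a different player count would only yield a sub-cover. I would also note explicitly that repeated selection of a set is automatically ruled out (a twice-chosen $3$-set would overload each of its three resources), so no auxiliary gadget is needed to prevent players from collapsing onto the same strategy; this is precisely what makes X3C a cleaner source than, say, Independent Set, where isolated-vertex or otherwise ``empty'' strategies would require extra per-vertex resources to enforce distinctness.
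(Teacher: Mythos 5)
Your proof is correct and is essentially the same reduction as the paper's: the paper reduces from Set Packing using the identical construction (resources equal to the ground set, unit upper bounds $u_r=1$, and a symmetric strategy set equal to the given set family, with the number of players equal to the target), so your choice of X3C as the source problem only adds the counting step needed to recover exactness, which the feasibility question itself does not require.
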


\begin{proof}
We reduce from the strongly NP-hard set packing problem.
Given an instance of the set packing problem  $\mathcal{I}=\left(\mathcal{U},\mathcal{S},k\right)$. From $\mathcal{I}$, we construct the following strategy packing instance $\mathcal{J}$. As resource set, we  choose $\mathcal{U}$ and, for each upper bound, we choose $u_r=1$. Furthermore, we set $\mathcal{S}_1=\dots=\mathcal{S}_k=\mathcal{S}$ for the strategy sets.
It is easy to see that there exists a set packing in $\mathcal{I}$ if and only if there exists a strategy packing in $\mathcal{J}$. This is because each family of subsets $\mathcal{S}^\prime\subseteq\mathcal{S}$ gives a state in $\mathcal{J}$  and vice versa. Obviously, $\mathcal{S}^\prime$ has mutually disjoint elements if and only if the corresponding state is satisfies  the upper bounds. 
\end{proof}

This result permits computation of SE polynomial time by other algorithms than Dual Greedy, but, in fact, computation of SE and strategy packing are \textit{mutually} polynomial-time reducible, even for symmetric games.

\begin{theorem}
	\label{SE_hard}
  The computation of a SE in symmetric MMFGs is NP-hard.
\end{theorem}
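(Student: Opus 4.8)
The plan is to give a polynomial-time reduction in the direction opposite to the one underlying Dual Greedy, namely from the (symmetric) strategy packing problem, shown NP-hard in Theorem~\ref{SPack_hard}, to the problem of computing a SE in a symmetric MMFG. Given a symmetric strategy packing instance $\left(R,\mathcal{S},(u_r)_{r\in R}\right)$ with $n$ players, I would build the symmetric MMFG on the same resource set $R$ and the same common strategy set $\mathcal{S}$, with $n$ players and capacities $c_r:=u_r$ for every $r\in R$ (for a bound $u_r=0$ one sets $c_r$ to a tiny positive value, so that any player using $r$ is fixed at bandwidth strictly below $1$). This transformation is clearly polynomial and preserves symmetry.

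\textbf{Threshold characterization.} I would first establish the clean equivalence that the packing instance is feasible if and only if the MMFG admits a state in which every player receives bandwidth at least $1$, which in turn holds if and only if the global max-min value $\mu^\star:=\max_{S}\min_{i\in N}u_i(S)$ satisfies $\mu^\star\geq 1$. Both implications are short. If the state $\left(S_i\right)$ respects the bounds, then $\ell_r(S)=\sum_{i:r\in S_i}1\leq u_r=c_r$ for every $r$, so the uniform allocation assigning $1$ to everyone is feasible and Algorithm~\ref{alg_pf} fixes nobody before time $1$; hence every player obtains at least $1$. Conversely, if some state has all bandwidths at least $1$, then for each $r$ we get $\ell_r(S)\leq\sum_{i:r\in S_i}u_i(S)\leq c_r=u_r$, so that state is itself a feasible packing.

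\textbf{Using a SE-oracle.} To decide feasibility, I would compute a SE $S$ and answer ``feasible'', with witness $S$, precisely when every player has bandwidth at least $1$ in $S$. Soundness of the ``feasible'' answer is immediate from the characterization, since the returned $S$ then respects the bounds. The real content is the converse: whenever the instance is feasible, the \emph{computed} SE must already give every player bandwidth at least $1$. This amounts to the key lemma that in a symmetric MMFG every SE attains the global max-min value $\mu^\star$, so that $\mu^\star\geq 1$ forces $\min_i u_i(S)\geq 1$. Combined with the NP-hardness of symmetric strategy packing (Theorem~\ref{SPack_hard}), this yields the theorem.

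\textbf{Main obstacle.} The crux is exactly this lemma, and I expect it to require the most care. The difficulty is that the lexicographic potential of Theorem~\ref{thm_pfg_se} only certifies that a globally $\prec$-maximal state is a SE, not that every SE is globally optimal; a priori a SE could be a merely ``local'' maximum with $\min_i u_i(S)=\tau<\mu^\star$. I would prove optimality of the minimum by a coalitional exchange argument that exploits symmetry. Let $S^\star$ realize $\mu^\star$ and let $P$ be the set of currently poorest players, all fixed at the first saturation time $\tau$ on resources whose load in $S$ strictly exceeds $c_r/\mu^\star$. Since the game is symmetric, every strategy used in $S^\star$ is available to the players of $P$, so the plan is to let $P$ re-route onto suitable strategies drawn from $S^\star$ and to show, via the residual-capacity monotonicity established in the proof of Lemma~\ref{lem_dual_greedy_se2}, that each member of $P$ then strictly exceeds $\tau$, contradicting that $S$ is a SE. The delicate step is controlling the extra load that non-deviating players place on resources shared with the re-routed coalition; here one would combine the fact that in $S^\star$ every resource carries load at most $c_r/\mu^\star$ with the integrality of the bounds $u_r$ and with the observation that an overloaded resource in $S$ jumps to load at least $u_r+1$, in order to guarantee enough residual capacity for the poorest coalition to cross the threshold.
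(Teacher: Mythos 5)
Your reduction stands or falls with the ``key lemma'' you isolate as the crux --- that in a symmetric MMFG every SE attains the global max-min value $\mu^\star$ --- and unfortunately that lemma is false, so the gap cannot be closed along these lines. Here is a counterexample with integral capacities of exactly the kind your reduction produces. Take two players, resources $x_1,x_2,u,v,y$ with $c_{x_1}=c_{x_2}=4$, $c_u=c_v=3$, $c_y=2$, and common strategy set $\bigl\{\{x_1,x_2\},\{x_1,u\},\{x_2,v\},\{y\}\bigr\}$. In the state $S=\bigl(\{x_1,x_2\},\{y\}\bigr)$ the bandwidths are $(4,2)$. Player $2$ cannot improve unilaterally: each of his three alternative strategies shares a capacity-$4$ resource with player $1$, which saturates at time $2$ and fixes him at bandwidth $2$. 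Player $1$ already receives $4$, which is the bottleneck capacity of his best strategy and hence the maximum he can obtain in \emph{any} state, so no coalition containing him has an improving move. Thus $S$ is a SE with $\min_i b_i(S)=2$, while the state $\bigl(\{x_1,u\},\{x_2,v\}\bigr)$ gives both players bandwidth $3$, so $\mu^\star\geq 3$. A SE can therefore be a strict ``local'' optimum in exactly the sense you were worried about: the rich non-deviators really do block the poorest players, and your proposed coalitional exchange argument cannot be repaired because the statement it targets is wrong. (This is consistent with the paper's own efficiency analysis, which for an arbitrary SE only derives $\max_i b_i(S)\geq v$, never $\min_i b_i(S)\geq v$.) Concretely, your decision procedure fails on the packing instance encoded by this game: the instance is ``feasible'' at threshold level $\mu^\star$, yet a legitimate SE-oracle may return $S$, causing you to answer ``infeasible''.

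Your threshold characterization itself (feasible packing iff some state gives everyone bandwidth at least $1$) is correct, and the direction ``returned SE has all bandwidths $\geq 1$ implies a packing exists'' is fine; the failure is solely in forcing the oracle's output to reveal feasibility. This is precisely why the paper's proof of Theorem~\ref{SE_hard} does not use the bare instance $(R,\mathcal{S},(u_r))$ but augments it with an extra player and gadget resources $r_1,\dots,r_n,r'_1,\dots,r'_n,r^\star$ with capacities $u_r+1$, $2-\varepsilon$ and $1$: the gadgets are engineered so that \emph{every} SE of the constructed game either has all bandwidths at least $1$ (and then encodes a packing) or certifies by its mere existence that no packing exists. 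Some such instance-specific gadgetry appears unavoidable; a repair of your argument would have to replace the false global lemma with a construction that controls all SE, not just the lexicographically maximal ones.
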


\begin{proof}
We reduce the strategy packing problem to the computation of SE  in symmetric MMFGs.
 Let $\mathcal{I}=\left(R,\left(\mathcal{S}_i\right)_{i\in\left\{1,\dots,n\right\}},\left(u_r\right)_{r\in R}\right)$ be an instance of the symmetric strategy packing problem, i.e., we have $\mathcal{S}_1=\dots=\mathcal{S}_k$. We create a symmetric MMFG $\mathcal{G_I}$ the following way. As resources, we define
\[R^\prime=R\cup\left\{r_1,\dots,r_n,r^\prime_1,\dots,r^\prime_n,r^\star\right\}.\]
The set of strategies for each of the $n+1$ players is defined by
\[\mathcal{S}^\prime_i=\left\{S_1\cup\left\{r_j,r_j^\prime\right\}\mid 1\leq j\leq n \wedge S_1\in\mathcal{S}_1\right\}\cup\left\{\left\{r_1,\dots,r_n,r^\star\right\},R\cup\left\{r^\star\right\}\right\}.\]
Finally, we set
\[c_r=\begin{cases}
        u_r+1, & \text{if }r\in R\\
	2-\varepsilon, & \text{if }r=r_i\text{ for some }i\\
	1, & \text{if }r=r_i^\prime\text{ for some }i\text{ or }r=r^\star
      \end{cases}
\]
as the capacity for each resource $r\in R$ where we choose $\varepsilon<\frac{1}{\left(n+1\right)^{n+1}}$. This defines a unique MMFG $\mathcal{G_I}$ with bandwidth functions $\left(b_i\right)_{i\in\left\{1,\dots,n+1\right\}}$. The model is illustrated in Figure~\ref{fig_smmfg_nphard}. Furthermore, this is obviously a polynomial time reduction (assuming $\varepsilon$ is chosen accordingly).

\begin{figure}
\centering
\subfiguretopcaptrue
\subfigure[][]{\begin{tikzpicture}[scale=1.1]
\definecolor{darkred}{rgb}{.85,0,0}

\node[draw,rectangle,minimum width=0.9cm,minimum height=0.9cm,rounded corners,very thick,fill=black!10] (0) at (3.5,5.2) {\tiny{\shortstack{$r^\star$\\$1$}}};
\foreach \y/\name in {1/n,3/2,4/1}{
\foreach \x in {0,1,...,2}
\node[draw,rectangle,minimum width=0.9cm,minimum height=0.9cm,rounded corners,very thick,fill=black!10] (\x\y) at (\x,\y) {\tiny{\shortstack{$r\in R$\\$u_r$+$1$}}};
\node[draw,rectangle,minimum width=0.9cm,minimum height=0.9cm,rounded corners,very thick,fill=black!10] (a\y) at (3.5,\y) {\tiny{\shortstack{$r_{\name}$\\$2$--$\varepsilon$}}};
\node[draw,rectangle,minimum width=0.9cm,minimum height=0.9cm,rounded corners,very thick,fill=black!10] (b\y) at (4.6,\y) {\tiny{\shortstack{$r_{\name}^\prime$\\$1$}}};
}
\foreach \x in {1,3.5,4.6}
\node (0) at (\x,2) {$\vdots$};

\node[draw=darkred,rectangle,very thick,minimum width=2.2cm,minimum height=1cm,rounded corners,fill=darkred,opacity=.3] (0) at (4.05,4) {};
\node[draw=darkred,rectangle,very thick,minimum width=2.2cm,minimum height=1cm,rounded corners,fill=darkred,opacity=.3] (0) at (4.05,3) {};
\node[draw=darkred,rectangle,very thick,minimum width=2.2cm,minimum height=1cm,rounded corners,fill=darkred,opacity=.3] (0) at (4.05,1) {};
\draw[draw=darkred,rounded corners,very thick,fill=darkred,opacity=.3] (0.45,2.5) -- (0.45,3.55) -- (2.45,3.55) -- (2.45,4.45) -- (-0.45,4.45) -- (-0.45,2.5);
\draw[draw=darkred,rounded corners,very thick,fill=darkred,opacity=.3] (1.55,2.55) -- (0.55,2.55) -- (0.55,4.45) -- (1.45,4.45) -- (1.45,3.45) -- (2.45,3.45) -- (2.45,2.55) -- (1.55,2.55);
\draw[draw=darkred,rounded corners,very thick,fill=darkred,opacity=.3] (-0.45,1.5) -- (-0.45,0.55) -- (2.45,0.55) -- (2.45,1.5);
\draw[draw=darkred,very thick,fill=darkred,opacity=.3] (2.45,4) -- (3.05,4);
\draw[draw=darkred,very thick,fill=darkred,opacity=.3] (2.45,4) -- (3.05,3);
\draw[draw=darkred,very thick,fill=darkred,opacity=.3] (2.45,4) -- (3.05,1);
\draw[draw=darkred,very thick,fill=darkred,opacity=.3] (2.45,3) -- (3.05,4);
\draw[draw=darkred,very thick,fill=darkred,opacity=.3] (2.45,3) -- (3.05,3);
\draw[draw=darkred,very thick,fill=darkred,opacity=.3] (2.45,3) -- (3.05,1);
\draw[draw=darkred,very thick,fill=darkred,opacity=.3] (2.45,1) -- (3.05,4);
\draw[draw=darkred,very thick,fill=darkred,opacity=.3] (2.45,1) -- (3.05,3);
\draw[draw=darkred,very thick,fill=darkred,opacity=.3] (2.45,1) -- (3.05,1);
\node (1) at (0.2,2) {\textcolor{darkred}{\footnotesize{$\in\mathcal{S}_1$}}};
\draw[->,very thick,draw=darkred] (1) -- (11);
\draw[->,very thick,draw=darkred] (1) -- (03);
\draw[->,very thick,draw=darkred] (1) -- (13);
\end{tikzpicture}}\hspace{0.5cm}
\subfigure[][]{\begin{tikzpicture}[scale=1.1]
\definecolor{darkblue}{rgb}{0,0,.85}

\node[draw,rectangle,minimum width=0.9cm,minimum height=0.9cm,rounded corners,very thick,fill=black!10] (0) at (3.5,5.2) {\tiny{\shortstack{$r^\star$\\$1$}}};
\foreach \y/\name in {1/n,3/2,4/1}{
\foreach \x in {0,1,...,2}
\node[draw,rectangle,minimum width=0.9cm,minimum height=0.9cm,rounded corners,very thick,fill=black!10] (0) at (\x,\y) {\tiny{\shortstack{$r\in R$\\$u_r$+$1$}}};
\node[draw,rectangle,minimum width=0.9cm,minimum height=0.9cm,rounded corners,very thick,fill=black!10] (0) at (3.5,\y) {\tiny{\shortstack{$r_{\name}$\\$2$--$\varepsilon$}}};
\node[draw,rectangle,minimum width=0.9cm,minimum height=0.9cm,rounded corners,very thick,fill=black!10] (0) at (4.6,\y) {\tiny{\shortstack{$r_{\name}^\prime$\\$1$}}};
}
\foreach \x in {1,3.5,4.6}
\node (0) at (\x,2) {$\vdots$};
\draw[draw=darkblue,rounded corners,very thick,fill=darkblue,opacity=.3] (-0.45,2.45) -- (-0.45,0.55) -- (2.45,0.55) -- (2.45,1.5) -- (2.45,4.45) -- (3.95,4.75) -- (3.95,5.65) -- (3.05,5.65) -- (1.45,4.45) -- (-0.45,4.45) -- (-0.45,2.45);
\draw[draw=darkblue,rounded corners,very thick,fill=darkblue,opacity=.3] (3.55,5.65) -- (3.95,5.65) -- (3.05,5.65) -- (3.05,0.55) -- (3.95,0.55) -- (3.95,5.65) -- (3.55,5.65);
\end{tikzpicture}}
\caption{Illustration of the strategies in the proof of Theorem~\ref{SE_hard}. The strategies built from the strategies in the strategy packing instance where a line between two sets indicates that the union is in $\mathcal{S}^\prime$ (a) as well as the two strategies independent of the strategies in strategy packing instance (b) are shown. Note that the resources in (a) and (b) are identical.}
\label{fig_smmfg_nphard}
\end{figure}
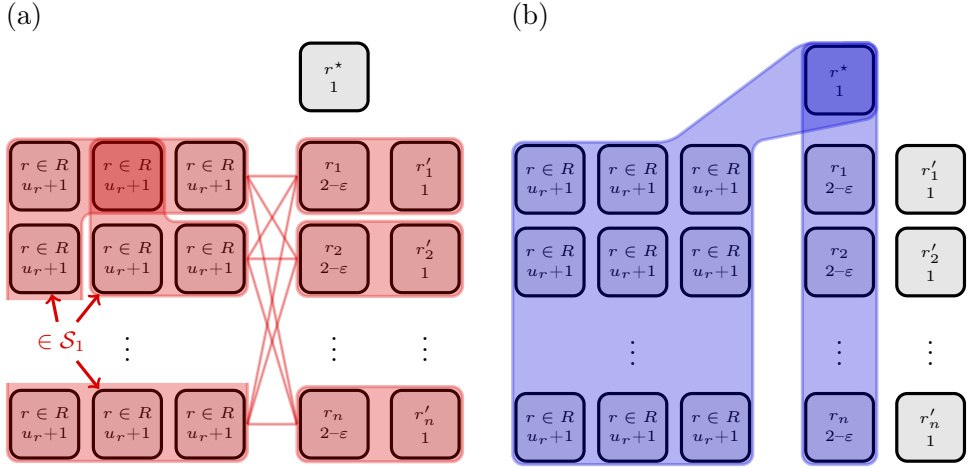

We will now show that, from a SE $S$ in $\mathcal{G_I}$ such that each player gets at least a bandwidth of $1$ in $S$, we can construct a strategy packing in $\mathcal{I}$. Conversely, we will show that the existence of any other SE already certifies that no such strategy packing exists.

The easier direction is the following. If in a state $S=\left(S_1,\dots,S_n\right)$ in $\mathcal{G_I}$, we have $b_i\left(S\right)\geq 1$, for all $i\in\left\{1,\dots,n+1\right\}$, at most for one player $i^\star$ we can have $r^\star\in S_{i^\star}$ because $c_{r^\star}=1$. Furthermore, as either $r^\star$ or some $r^\prime_i\in R^\prime$ occur in every strategy and the latter resources only allow a total bandwidth of $n$, there must exist \textit{at least} one such player. Thus, there exists a \textit{unique} player $i^\star$. 

Moreover, we must have $S_{i^\star}=R\cup\left\{r^\star\right\}$ since, otherwise, $r_1,\dots,r_n\in S_{i^\star}$ would hold, i.e., all the other players could get at most bandwidth $\frac{2-\varepsilon}{2}$. Consequently, at most $u_r$ players from $N\setminus\left\{i^\star\right\}$ may use a certain resource $r\in R$. As, for each $i\in N\setminus\left\{i^\star\right\}$, it must hold that $S_i\setminus\left\{r_1,\dots,r_n,r^\prime_1,\dots,r^\prime_n\right\}\in \mathcal{S}_1\subseteq R$, $\left(S_i\setminus\left\{r_1,\dots,r_n,r^\prime_1,\dots,r^\prime_n\right\}\right)_{i\in N\setminus\left\{i^\star\right\}}$ is therefore a strategy packing in $\mathcal{I}$.

For the other direction, we first introduce a lemma that, informally speaking, says that $\varepsilon$ is small enough.

\begin{lemma}
  Let $\mathcal{G}$ be a MMFG such that the capacities $u_r$ are integral, for each $r\in R$. Then, for each $S\in\mathcal{S}$, there is a $\delta\in\mathbb{N}$ with $\delta\leq n^n$ such that, for each player $i\in N$, the bandwidth $b_i\left(S\right)$ is $\frac{1}{\delta}$-integral.
  \label{lem_integral}
\end{lemma}
\begin{proof}[Proof (Lemma)]
Let $S\in\mathcal{S}$ and fix a run of Algorithm~\ref{alg_pf} on $S$. Define $N_1,\dots,N_k$ to be the partition of $N$ where $N_i$ is the set of players which are fixed in the $i$-th run of the main loop. By induction on $i$, we will now show that, for each $i\in\left\{1,\dots,k\right\}$, there is a $\delta_i\in\mathbb{Q}$ with $\delta_i\leq n^i$ such that all bandwidths of players in $\biguplus_{j\leq i}N_j$ as well as all the values of $c^\prime$ are $\frac{1}{\delta_i}$-integral at the end of the $i$-th run of the main loop. This already implies the claim since $k\leq n$.

For $i=0$, there is nothing to be shown. Now let $\delta_i$ as above. Since the values of $c^\prime$ are $\frac{1}{\delta_i}$-integral, the bandwidth calculated in line~\ref{alg_pf_linemin} and assigned in line~\ref{alg_pf_line7} in the $i+1$-th run of the main loop is $\frac{1}{\left|N_{i+1}\right|\cdot\delta_i}$-integral. The same holds for the values of $c^\prime$ changed in line~\ref{alg_pf_line10}. So we set $\delta_{i+1}:=\left|N_{i+1}\right|\cdot\delta_i$. By $\left|N_{i+1}\right|\leq n$, we have $\delta_i\leq n^{i+1}$. 
\end{proof}

Now let $S=\left(S_1,\dots,S_n\right)$ be a SE in $\mathcal{G_I}$ such that there exists a player $i$ with $b_i\left(S\right)<1$ and consider three different cases.

\begin{description}
 \item[Case 1:] There exists no player $j$ such that $r^\star\in S_j$. By Lemma~\ref{lem_integral}, we know that $b_i\left(S\right)<1-\varepsilon$. So it is profitable for this player to unilaterally deviate to the strategy $\left\{r_1,\dots,r_n,r^\star\right\}$ yielding a bandwidth of at least $1-\varepsilon$ for him. Hence, $S$ is no SE, in contradiction to our assumption.\label{proof_smmfg_nphard_case1}
 \item[Case 2:] There exists a player $j$ such that $S_j=\left\{r_1,\dots,r_n^\prime,r^\star\right\}$ but no player $k$ exists with $S_k=R\cup\left\{r^\star\right\}$. This means that \textit{all} players get a bandwidth of less than $1$ in $S$ (because either $r^\star$ or a resource $r_i^\prime$ occurs in each strategy). Therefore, they would all profitably deviate to a state $T$ with $b_i\left(T\right)=1$, for all players $i$. We now show that, however, such a state would exist in $\mathcal{G_I}$ if there was a strategy packing in $\mathcal{I}$. This immediately implies that there is no strategy packing in $\mathcal{I}$.
 
Let $\left(S^\prime_1,\dots,S^\prime_n\right)$ be the strategy packing in $\mathcal{I}$. In $T$, player $i$ uses the strategy $S^\prime\cup\left\{r_i,r_i^\prime\right\}$, for $i\in\left\{1,\dots,n\right\}$. Further, player $n+1$ uses the strategy $R\cup\left\{r^\star\right\}$. It can easily be verified that each player gets bandwidth $1$ in this state.\label{proof_smmfg_nphard_case2}
 \item[Case 3:] There exists a player $j$ such that $S_j=R\cup\left\{r^\star\right\}$. We again distinguish two cases.
\begin{description}
\item[Case a:] First, consider the case where another player $k$ exists with $k\neq j$ and $r^\star\in S_{k}$. This means that both players $j$ and $k$ get a bandwidth of less than $1$. Furthermore, each other player must also get a bandwidth of less than $1$ since, otherwise (i.e., if there exists a player $l$ getting at least bandwidth $1$), player $j$ could unilaterally and profitably deviate the following way. Player $j$ imitates player $l$ on $R$ and moreover chooses resources $\left\{r_{i^\star},r^\prime_{i^\star}\right\}$ such that $r^\prime_{i^\star}$ is not used in $S$, giving him a bandwidth at least as large as $b_j\left(S\right)$. With the same argumentation as is Case~\ref{proof_smmfg_nphard_case2}, we can hence infer that there is no strategy packing in $\mathcal{I}$.

\item[Case b:] Now let player $j$ be the unique player with $r^\star\in S_j$ and further let $S_j=R\cup\left\{r^\star\right\}$. We show again that $b_j\left(S\right)<1$  and apply the same argumentation as in Case~\ref{proof_smmfg_nphard_case1} (the preconditions of Lemma~\ref{lem_integral} are fulfilled since only resources with integral capacities are saturated). If $i=j$, we are finished. So let $i\neq j$ and suppose that, in $S$, each resource $r\in R$ is used by at most $u_r+1$ players. Since $S$ is a SE, we know that, in this case, each $r^\prime_i$ is used by at most one player. Hence, Algorithm~\ref{alg_pf} calculates a bandwidth of $1$ for each player; contradiction. Thus, there is a resource $r\in R$ used by more than $u_r+1$ players.

If all $r_i^\prime$ are used by one player each, there is a resource in $R$ that is the first one saturated in Algorithm~\ref{alg_pf} (by the existence of $r$), which implies the claim. So let $i^\star\in\left\{1,\dots,n\right\}$ such that $r^\prime_{i^\star}$ is a resource not used in $S$ and suppose $b_i\left(S\right)<b_j\left(S\right)$. Then, player $i$ could, however, replace the resources from $\left\{r_1,\dots,r_n,r_1^\prime,\dots,r_n^\prime\right\}$ he currently uses by $\left\{r_{i^\star},r_{i^\star}^\prime\right\}$, resulting in a bandwidth at least as large as $b_j\left(S\right)$.
\end{description}
\end{description} 
\end{proof}

\subsection{Efficiency of Equilibria}
\label{sec:MMFG_PoA}

In this section we investigate the quality of SE in terms of social welfare, i.e., the sum of allocated bandwidth. In a game $\mathcal{G}$, let $S^\star$ with allocation $a$ be the state in $\mathcal{S}$ that maximizes $\sum_{i \in N} a_i$. Further, let $\mathcal{S}^{SE}\subseteq\mathcal{S}^{NE}\subseteq\mathcal{S}$ denote the set of SE and NE, respectively. We denote $\SW_{\mathcal{G}}(S) = \sum_{i \in N} b_i(S)$. Then, the price of stability and price of anarchy, PoS and PoA, are defined as 
$\inf_{S\in \mathcal{S}^{NE}} \frac{\SW_{\mathcal{G}}(S^\star)}{\SW_{\mathcal{G}}(S)} = \inf_{S\in \mathcal{S}^{NE}} \frac{\sum_{i \in N} a_i}{\sum_{i \in N} b_i(S)}$ and $\sup_{S\in \mathcal{S}^{NE}} \frac{\SW_{\mathcal{G}}(S^\star)}{\SW_{\mathcal{G}}(S)}$, respectively. For the strong price of stability and anarchy, SPoS and SPoA, $\mathcal{S}^{SE}$ is considered instead of $\mathcal{S}^{NE}$. Furthermore, the same measures can be applied to classes of games where they are simply the supremum of all individual measures.

The {\em maximum capacity allocation problem (MCAP)} is given by the problem of computing an allocation $a^\prime$ which maximizes $\sum_{i \in N} a^\prime_i$. Note that we have $\sum_{i \in N} a^\prime_i\geq\sum_{i \in N} a_i$ and that this inequality may even be strict since $a^\prime$ is not necessarily computed by progressive filling.

In general, one cannot hope to find SE with good social welfare. There are network MMFGs in which even the best PNE is a factor of $\Omega\left(n\right)$ worse than the optimum. This matches the upper bound of $O(n)$ on the PoA for network MMFGs shown in~\cite{YangXFMZ10}. 

\begin{theorem}
  \label{thm_pos_nmmfgs}
  The PoS and SPoS in multi-commodity network MMFGs are $\Omega\left(n\right)$.
\end{theorem}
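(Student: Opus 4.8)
The plan is to exhibit a family of multi-commodity network MMFGs, parameterized by $n$, in which every PNE (and hence every SE, since every SE is a PNE) achieves social welfare that is a factor of $\Omega(n)$ below the optimal allocation. Since PoS takes the infimum over all NE of the ratio $\SW_{\mathcal{G}}(S^\star)/\SW_{\mathcal{G}}(S)$ and every NE in the constructed instance is bad, the bound will hold even for the \emph{best} equilibrium, which is exactly what PoS and SPoS measure.

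The core idea I would use is a congestion-bottleneck gadget: arrange the network so that there is one ``short'' shared link of small capacity that every player strictly prefers, together with $n$ ``long'' private detours of large capacity. The source-destination pairs are chosen (multi-commodity) so that in the optimal state each player uses a disjoint high-capacity detour and the total throughput is $\Theta(n)$ times a unit, whereas in any equilibrium all players crowd onto the single scarce shared resource. Under max-min fair progressive filling, crowding $n$ players onto a bottleneck of capacity $c$ yields each player only $c/n$, so the equilibrium welfare is $c = \Theta(1)$ while the optimum is $\Theta(n)$, giving the ratio $\Omega(n)$. The delicate part of the design is ensuring that the ``crowded'' profile is genuinely a PNE — i.e. that no single player can unilaterally deviate to a private detour and strictly increase his own bandwidth — which requires that each detour, when shared with the traffic already routed there or constrained by a secondary capacity, offers a deviating player \emph{no more} than his crowded share $c/n$. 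I would enforce this by capping the alternative routes so that a lone deviator still ends up with bandwidth at most $c/n$, making the bad state stable while keeping the optimum large through a \emph{coordinated} reassignment that the equilibrium notion cannot reach.

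First I would fix the explicit topology (vertices, directed edges with capacities, and the $n$ commodities) and write down the optimal allocation $a^\star$, verifying its feasibility against every edge capacity and computing $\SW_{\mathcal{G}}(S^\star)=\sum_i a^\star_i = \Theta(n)$. Second, I would identify the unique (or essentially unique) equilibrium profile and run the progressive filling of Algorithm~\ref{alg_pf} on it to confirm that the bottleneck saturates first and pins every player to bandwidth $\Theta(1/n)$, yielding $\SW_{\mathcal{G}}(S)=\Theta(1)$. Third — the key verification — I would check the Nash condition: for every player and every alternative strategy in his strategy set, the finishing time under the deviation is no later than his finishing time in the crowded state, so by the monotonicity of $V_i$ (exploited throughout Section~3) no unilateral deviation strictly improves his bandwidth. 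Because every SE is a PNE, the same instance certifies the bound for SPoS with no additional work.

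The main obstacle I anticipate is the equilibrium-stability check in the multi-commodity setting: with distinct source-destination pairs, a deviating player's alternative route may interact with capacities used by \emph{other} commodities, so the residual-capacity bookkeeping for a deviation is more involved than in the symmetric case. I would handle this by making the detours genuinely private (disjoint across commodities except for the shared bottleneck) and by choosing the secondary capacities on the detours small enough that a lone deviator is throttled to at most his crowded share, thereby decoupling the stability analysis player-by-player and keeping the construction transparent.
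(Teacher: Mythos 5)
There is a genuine gap, and it sits exactly at the point you flag as ``delicate.'' Your construction requires two things simultaneously: (i) a lone deviator who leaves the crowded bottleneck for his private detour must end up with at most his crowded share $c/n$, and (ii) in the optimum, the $n$ players spread onto these detours must collectively obtain $\Theta(n)$. But if the detours are, as you insist, ``genuinely private (disjoint across commodities except for the shared bottleneck),'' then a lone deviator on his detour shares no saturating edge with anyone, so under max-min fair filling he receives the detour's bottleneck capacity. Condition (i) therefore forces every detour to have bottleneck capacity at most $c/n$ --- and then condition (ii) is impossible, since the optimum routed over the detours is capped at $n\cdot(c/n)=c$, the same order as the equilibrium welfare. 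The two requirements cannot be met by capacity choices alone; the ratio your gadget can certify is $O(1)$, not $\Omega(n)$. A second, independent problem is that PoS and SPoS are about the \emph{best} equilibrium: if the optimum gives every player a large positive bandwidth on disjoint routes, you must explain why that state (or one near it) is not itself an equilibrium with high welfare, and why the grand coalition cannot profitably move to it from the crowded state (which would already contradict the crowded state being an SE). Your proposal does not address this.

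The paper escapes both traps with a different mechanism: the social optimum \emph{sacrifices} half the players rather than rerouting them to good alternatives. Concretely, $n/2$ ``short-hop'' players each want to cross one unit-capacity edge of a path, $n/2$ ``long-haul'' players want to traverse the whole path, and the long-haul players' only alternative is a direct edge of capacity $0$. Taking a zero-capacity path is strictly dominated, so in the \emph{unique} PNE every long-haul player sits on all the unit edges and everyone is throttled to $2/(n+2)$, for total welfare about $2$; the optimum puts the long-haul players on the $0$-edge (they get nothing) so the short-hop players get $1$ each, for welfare $n/2$. Because the good state requires some players to accept bandwidth $0$, no equilibrium notion can sustain it, which is what makes the \emph{best} equilibrium bad. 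If you want to repair your write-up, this is the idea to import: the optimum must be unreachable by rational players because it gives some of them (essentially) nothing, not because coordination is hard.
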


\begin{proof}
 For a given $n\in\mathbb{N}$, we construct a network MMFG $\mathcal{G}_{n}$ with $n$ players and PoS of more than $\frac{n}{4}$. 
 We assume w.l.o.g. that $2\mid n$.
 
 The network underlying $\mathcal{G}_{n}$ consists of $\frac{n}{2}$ consecutive edges each of which has capacity $1$ and connects the source and sink nodes $s_i, t_i$ of one respective player $i$. The source and sink nodes of the other $\frac{n}{2}$ players are the first and last vertex of this path. Additionally, there is one edge with the capacity $0$ between these two vertices. This network is illustrated in Figure~\ref{fig_pos_nmmfgs}.

At first, note that there are two strategies for each of the players. A player can either choose the path through the $1$-edges or the path which has the $0$-edge in it. The latter path will, however, not be taken in a PNE since avoiding the $0$-edge always results in a bandwidth strictly larger than $0$. Thus, in the unique PNE $S$, each $1$-edge is congested with $\frac{n}{2}+1$ players, resulting in a social welfare of
$\SW_{\mathcal{G}_n}\left(S\right)=n\cdot\frac{1}{\frac{n}{2}+1}=\frac{2n}{n+2}.$

If, however, the players $i\in\left\{\frac{n}{2}+1,\dots,n\right\}$ altruistically take the direct $0$-capacity path $\left(s_i,t_i\right)$ instead, all the other players get a bandwidth of $1$ by sticking to their paths from $S$. Consequently, a lower bound on the PoS is $\frac{\frac{n}{2}}{\SW_{\mathcal{G}_n}\left(S\right)}=\frac{\frac{n}{2}}{\frac{2n}{n+2}}=\frac{n+2}{4}>\frac{n}{4}.$ 
\end{proof}

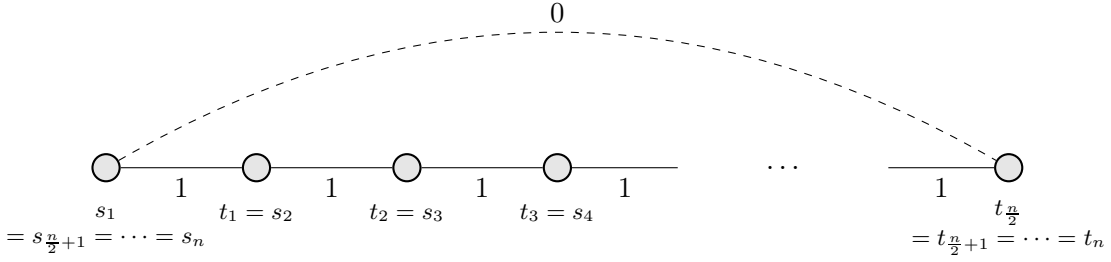
\begin{figure}
\centering
\begin{tikzpicture}[scale=2]
\foreach \x/\name in {0/$s_1$,1/$t_1=s_2$,2/$t_2=s_3$,3/$t_3=s_4$,6/$t_{\frac{n}{2}}$}{
\node[draw,circle,minimum width=0.08cm,thick,fill=black!10] (\x) at (\x,0) {};
\node[below of=\x,yshift=.4cm] (a\x) {\footnotesize{\name}};
}
\node[below of=a0,yshift=.6cm] {\footnotesize{$=s_{\frac{n}{2}+1}=\dots=s_n$}};
\node[below of=a6,yshift=.6cm] {\footnotesize{$=t_{\frac{n}{2}+1}=\dots=t_n$}};
\node (8) at (4.5,0) {$\hdots$};
\foreach \x/\y in {0/1,1/2,2/3}
\draw[-] (\x) --  node[below] {1} (\y);
\draw[-] (3) --  node[below] {1} (3.8,0);
\draw[-] (5.2,0) --  node[below] {1} (6);
\draw[-,dashed] (0) to  [bend left] node[above] {0} (6);
\end{tikzpicture}
\caption{Illustration of the network of the game $\mathcal{G}_{n}$ in the proof of Theorem~\ref{thm_pos_nmmfgs}.}
\label{fig_pos_nmmfgs}
\end{figure}

In contrast, when all players have the same strategy set, the best SE achieves a good approximation, and such a good SE is found by Dual Greedy (for single-commodity networks even in polynomial time).

\begin{theorem}
	\label{thm_dg_approx}
	The PoS and SPoS in symmetric MMFGs are $2-\frac{1}{n}$, and this bound is tight. The Dual Greedy computes an SE 
	achieving this guarantee. 
\end{theorem}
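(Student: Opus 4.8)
Since every SE is a PNE we have $\mathcal{S}^{SE}\subseteq\mathcal{S}^{NE}$, so the best SE is never better than the best NE and hence $\PoS\le\SPoS$. It therefore suffices to prove two statements: (i) an upper bound $\SPoS\le 2-\frac1n$ witnessed by the profile $S$ that Dual Greedy returns (this simultaneously gives $\PoS\le\SPoS\le 2-\frac1n$ and the claim that Dual Greedy meets the guarantee); and (ii) a family of instances with $\PoS\to 2-\frac1n$, which forces $\PoS=\SPoS=2-\frac1n$. Throughout, $\OPT$ denotes the value of the maximum capacity allocation problem (the largest feasible total bandwidth, allowing arbitrary routes and bandwidths), so the bound is against the strongest benchmark. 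By Theorem~\ref{thm_dg_correctness} the Dual Greedy output $S$ is an SE, and by the leximin characterization of the potential its sorted bandwidth vector $b_1\le\dots\le b_n$ is lexicographically maximal over all states; in particular $\beta:=b_1=\min_i b_i(S)$ equals the \emph{maximin value}, i.e. the largest minimum bandwidth achievable by any feasible allocation in any state. Consequently every feasible allocation $a$ in every state satisfies $\min_i a_i\le\beta$.

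\textbf{Upper bound.} The entire upper bound reduces to the single inequality
\[ \OPT \;\le\; \SW_{\mathcal{G}}(S) + (n-1)\,\beta . \]
Granting this, I combine it with the trivial estimate $\beta=\min_i b_i(S)\le \frac1n\sum_i b_i(S)=\frac1n\SW_{\mathcal{G}}(S)$ to obtain $\OPT\le \SW_{\mathcal{G}}(S)+\frac{n-1}{n}\SW_{\mathcal{G}}(S)=\bigl(2-\frac1n\bigr)\SW_{\mathcal{G}}(S)$, that is $\SW_{\mathcal{G}}(S)\ge\OPT/(2-\frac1n)$. Both steps are simultaneously tight exactly when all $b_i$ are equal and $\OPT=(2n-1)\beta$, which is precisely what the extremal instance below attains.

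\textbf{Proving the key inequality (the main obstacle).} This is the technical heart and the step I expect to be hardest. The plan is to exploit the infeasibility certificates that Dual Greedy produces. When the first batch of players is fixed, the oracle has just returned empty after decrementing $u_{r^\star}$; since at that moment all other bounds still equal $n$, this certifies that in \emph{every} feasible state — in particular in the optimal one — at least $k_1:=u_{r^\star}$ players use $r^\star$. As $c_{r^\star}=k_1\beta$, these $\ge k_1$ optimal players share total bandwidth at most $k_1\beta$, exactly the total the SE hands to its first batch. One then charges the optimal bandwidths batch by batch in the order Dual Greedy fixes players, using the analogous per-batch certificates together with Lemma~\ref{lem_dual_greedy_se} (monotonicity of $c'_r/u_r$, so later batch values only grow), and using the maximin property to account for the additive $(n-1)\beta$ slack. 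The obstacle is that the optimum does \emph{not} decompose along the batches: deleting a batch frees capacity and can raise the residual maximin above $\beta$, so the $k_j$ optimal players charged to batch $j$ cannot simply be removed and the argument recursed. The charging must therefore be carried out globally, matching each optimal player to a distinct SE player and booking the excess against at most one unit of $\beta$ per charged player, so that no more than $n-1$ such units are ever spent.

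\textbf{Tightness.} Consider the single-commodity parallel-link network with $n$ players, one link $A$ of capacity $n$ and links $B_1,\dots,B_{n-1}$ each of capacity $1-\varepsilon$; the (symmetric) strategy set of every player is the collection of the $n$ singleton links. In any state let $k$ be the number of players on $A$; an $A$-player obtains $n/k$ and a player sharing $B_j$ with $m_j-1$ others obtains $(1-\varepsilon)/m_j\le 1-\varepsilon$. If some player sits on a $B$-link then $k\le n-1$, and switching to $A$ yields $n/(k+1)\ge 1>1-\varepsilon$, a strict improvement; hence no equilibrium places any player on a $B$-link, and the \emph{unique} NE (which is moreover an SE) is ``all on $A$'', of welfare $n$. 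On the other hand, placing one player on $A$ and the remaining $n-1$ on distinct $B$-links is feasible and realized by progressive filling, with total $n+(n-1)(1-\varepsilon)=(2n-1)-(n-1)\varepsilon$, so $\OPT\ge(2n-1)-(n-1)\varepsilon$. Therefore $\PoS\ge\frac{(2n-1)-(n-1)\varepsilon}{n}\to 2-\frac1n$ as $\varepsilon\to 0$. Together with the upper bound this yields $\PoS=\SPoS=2-\frac1n$, and since the construction is a parallel-link (hence single-commodity network) instance, the bound is already tight within that class.
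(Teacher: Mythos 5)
Your overall architecture is sound: the reduction $\PoS\le\SPoS$, the combination $\OPT\le\SW_{\mathcal{G}}(S)+(n-1)\beta$ with $\beta\le\frac1n\SW_{\mathcal{G}}(S)$, and the tightness instance (which is essentially identical to the paper's: one link of capacity $n$ against $n-1$ links of capacity $1-\varepsilon$) are all correct. But the entire content of the upper bound sits in the single inequality $\OPT\le\SW_{\mathcal{G}}(S)+(n-1)\beta$, and you do not prove it. What you offer is a plan for a batch-by-batch charging argument, and you yourself identify the obstacle that defeats the naive version of that plan (removing a fixed batch changes the residual maximin, so the optimum does not decompose along Dual Greedy's batches); the proposed fix --- ``matching each optimal player to a distinct SE player and booking the excess against at most one unit of $\beta$ per charged player'' --- is not a construction, and it is not clear it can be carried out: an optimal player's bandwidth is not a priori bounded by anything like $2\beta$ or by the bandwidth of the SE player it is matched to, so the ``excess'' to be booked is uncontrolled. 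As written, the proof has a genuine gap at its central step.

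The paper closes exactly this gap by a different and self-contained argument that avoids charging against the run of Dual Greedy altogether. It first shows (Lemma~\ref{lem_dg_umcap}, via the monotonicity of $c'_r/u_r$ in Lemma~\ref{lem_dual_greedy_se}) that $\beta=\min_ib_i(S_{DG})$ equals the optimal per-player value $v$ of the \emph{uniform} MCAP --- this is the precise form of your ``maximin'' claim, which you assert but also do not prove. The decisive step is then Lemma~\ref{lem_uniform_approx}: for \emph{any} feasible allocation $(a_i)=(\alpha_i v)$ in a symmetric game, if $\sum_i\alpha_i>2n-1$ one builds a better uniform solution by taking $\lceil\alpha_i\rceil-1$ copies of each strategy $S_i$ (at least $n$ copies in total, assignable to the $n$ players by symmetry) each carrying bandwidth $v'=\min_i\frac{\alpha_i}{\lceil\alpha_i\rceil-1}\cdot v>v$, contradicting optimality of $v$; hence $\OPT\le(2n-1)v=(2n-1)\beta\le\SW_{\mathcal{G}}(S)+(n-1)\beta$. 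This duplication argument is where symmetry is actually used, and it is the ingredient your proposal is missing. If you want to salvage your write-up, replace the charging sketch with a proof of $\OPT\le(2n-1)\beta$ along these lines.
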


\begin{proof}
For the upper bound, we use an idea from~\cite{BaierKS05} and define the {\em uniform MCAP} as
the restriction of the MCAP to uniform bandwidth values, i.e., we additionally require that the found allocation is a vector $\left(a,\dots,a\right)$ for some $a\in\mathbb{R}$. It is easy to see that the smallest bandwidth in the state $S_{DG}$ computed by Dual Greedy solves the uniform MCAP. That is $\min_{i\in N}b_i\left(S_{DG}\right)=v$ where $n\cdot v$ is the optimal value of the uniform MCAP.

\begin{lemma}
	\label{lem_dg_umcap}
	Let $S=\left(S_1,\dots,S_n\right)$ be a solution of Dual Greedy on $\mathcal{G}$ and $n\cdot v$ be the optimal 
	value of the uniform MCAP, for $n\in\mathbb{N}$ and $v\in\mathbb{R}$. Then, we have $v=\min_{i\in N}b_i\left(S	
	\right).$
\end{lemma}
\begin{proof}[Proof (Lemma)]
We show the lemma in two steps:

$v\leq\min_{i\in N}b_i\left(S\right)$: By Lemma~\ref{lem_dual_greedy_se}, $\min_{i\in N}b_i\left(S\right)$ is exactly the bandwidth allocated to a player after the oracle has evaluated to $\emptyset$ for the first time. Again by Lemma~\ref{lem_dual_greedy_se}, such an evaluation to $\emptyset$ means that there is no state $S^\prime\in\mathcal{S}$ with $b_j\left(S^\prime\right)>\min_{i\in N}b_i\left(S\right)$, for all $i\in N$. This implies the claim since, otherwise, such a state $S^\prime$ is given by the optimal solution of the uniform MCAP.

$v\geq\min_{i\in N}b_i\left(S\right)$: Suppose $v<\min_{i\in N}b_i\left(S\right)$. Then we construct a feasible solution of the uniform MCAP with a larger value. We choose $S$ as state and $\min_{i\in N}b_i\left(S\right)$ as bandwidth for each player (which is feasible since we only possibly lower the feasible bandwidth $b_i\left(S\right)$, for all $i$). This may, however, not happen as the solution to the uniform MCAP has the value $n\cdot v$ by assumption.
\end{proof}

Thus, the upper bound follows from the next lemma.

\begin{lemma}
An optimum to the uniform MCAP is a $\left(2-\frac{1}{n}\right)$-approximation for the MCAP.
\label{lem_uniform_approx}
\end{lemma}

\begin{proof}[Proof (Lemma)]
 Let $n\cdot v$ be the optimal value of the uniform MCAP. Consider an arbitrary feasible solution attained by the state $S=\left(S_1,\dots,S_n\right)$ and the respective allocation $\left(a_1,\dots,a_n\right)$. Define $\alpha_i\in\mathbb{R}_+$ such that $a_i=\alpha_i\cdot v$, for all $i\in N$. 

  Now suppose that $\sum_{i\in N}\alpha_i>2n-1$. Then we can construct a new state $S^\prime=\left(S^\prime_1,\dots,S^\prime_n\right)$ with a corresponding bandwidth $v^\prime>v$ for each player. In $S^\prime$, we use at most $\lceil\alpha_i\rceil-1$ copies of the strategy $S_i$, for all $i$, and no other strategy. Furthermore, we set $v^\prime:=\min_{i\in N}{\frac{\alpha_i}{\lceil\alpha_i\rceil-1}}\cdot v$. Three properties of this solution remain to be shown:

\begin{compactenum}

 \item{There are at least $n$ (not necessarily different) strategies constructed for $S^\prime$ above. Using that $\lceil\alpha_i\rceil-1\geq\alpha_i-1$ holds for all $i$, we get:
$\sum_{i\in N}\left(\lceil\alpha_i\rceil-1\right)\geq\sum_{i\in N}\left(\alpha_i-1\right)=\left(\sum_{i\in N}\alpha_i\right)-n>n-1.$
As $\sum_{i\in N}\left(\lceil\alpha_i\rceil-1\right)$ must be integer, it follows that $\sum_{i\in N}\left(\lceil\alpha_i\rceil-1\right)\geq n$.}
 \item{The constructed bandwidths are feasible. We only use the strategies from $S$ for which $\left(a_1,\dots,a_n\right)$ is an allocation. So it suffices to see that for all $i$, in the constructed solution, the total capacity on $S_i$ is at most as high as $a_i$: 
$
\left(\lceil\alpha_i\rceil-1\right)\cdot v^\prime=\left(\lceil\alpha_i\rceil-1\right)\cdot\min_{j\in N}{\frac{\alpha_j}{\lceil\alpha_j\rceil-1}}\cdot v\leq
\left(\lceil\alpha_i\rceil-1\right)\cdot\frac{\alpha_i}{\lceil\alpha_i\rceil-1}\cdot v=a_i.
$

 \item{It indeed holds that $v^\prime>v$: by $\lceil\alpha_i\rceil-1<\alpha_i$, we have $\frac{\alpha_i}{\lceil\alpha_i\rceil-1}>1$ for all $i$, i.e., we also have $\min_{i\in N}{\frac{\alpha_i}{\lceil\alpha_i\rceil-1}}>1$ and hence $v^\prime>v$.}

}
\end{compactenum}
Thus, we have constructed a new solution of the uniform MCAP with a higher value than $n\cdot v$. So the initial solution cannot be maximal, i.e., we obtain a contradiction. So we must have 
$
\sum_{i\in N}\alpha_i\leq2n-1,
$
which implies 
$$\frac{\sum_{i\in N}a_{i}}{n\cdot v} \quad = \quad \frac{\sum_{i\in N}\alpha_{i}\cdot v}{n\cdot v} \quad = \quad \frac{\sum_{i\in N}\alpha_i}{n} \quad \leq \quad \frac{2n-1}{n}\enspace.$$ 
\end{proof}

For the lower bound consider for given $n\in\mathbb{N}$ and $\varepsilon\in\mathbb{R}$ with $\varepsilon>0$ a single-commodity network MMFG $\mathcal{G}_{n,\varepsilon}$.
From the source to the sink node, there are $n-1$ parallel edges each of which has capacity $1-\varepsilon$. Moreover, there is one single edge with capacity $n$.
  In the optimal state, every edge is used by one player each, i.e., we obtain a social welfare of $2n-1-\left(n-1\right)\cdot\varepsilon$. In a NE, however, every player uses the edge with capacity $n$ because the bandwidth for each player is at least $1>1-\varepsilon$ on this edge. Thus, the social welfare is exactly $n$ in this state. Therefore, it holds that
$$\PoS\left(\mathcal{G}_{n,\varepsilon}\right) \quad = \quad \frac{2n-1-\left(n-1\right)\cdot\varepsilon}{n} \quad = \quad 2-\frac{1}{n}-\frac{n-1}{n}\cdot\varepsilon\enspace.$$
\end{proof}

In symmetric games even the worst SE is still a $4$-approximation. For $n=2$, we can tighten the bound on the SPoA to the lower bound of the SPoS of $\frac{3}{2}$. 

\begin{theorem}
 The SPoA for symmetric MMFGs is at most $4 - \frac{6}{n+1}$.
 \label{spoa_4}
\end{theorem}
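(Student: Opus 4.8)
My plan is to bound the strong price of anarchy by the usual quotient $\SPoA \le \OPT / \SW_{\mathcal{G}}(S)$ taken over the worst SE $S$, controlling numerator and denominator separately in terms of the per-player uniform-MCAP value $v$ (so that $n\cdot v$ is the optimum of the uniform MCAP, exactly as in Lemma~\ref{lem_dg_umcap} and Lemma~\ref{lem_uniform_approx}). For the numerator I would simply invoke what is already proved: the general optimum equals the $\MCAP$, and by Lemma~\ref{lem_uniform_approx} an optimal uniform allocation is a $(2-\tfrac1n)$-approximation of the $\MCAP$, hence $\OPT \le \MCAP \le \bigl(2-\tfrac1n\bigr)\,n\,v$. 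Everything then reduces to a lower bound on the social welfare of an \emph{arbitrary} SE, and the identity $4-\tfrac6{n+1}=\tfrac{2(2n-1)}{n+1}$ tells me exactly what to aim for: I need $\SW_{\mathcal{G}}(S)\ge \tfrac{n+1}{2}\,v$, since then $\SPoA\le \frac{(2-1/n)\,n\,v}{\frac{n+1}{2}\,v}=\frac{2(2n-1)}{n+1}=4-\frac{6}{n+1}$.

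The key lemma $\SW_{\mathcal{G}}(S)\ge\tfrac{n+1}{2}v$ I would establish through two claims about the sorted bandwidths $b_i(S)$. First, writing $b_{\max}=\max_i b_i(S)$, I claim $b_{\max}\ge v$. Suppose not, i.e. every player has $b_i(S)<v$. Then I let the grand coalition $N$ deviate to the state $\hat S$ realizing the uniform MCAP, where the uniform vector $(v,\dots,v)$ is a feasible allocation; since progressive filling returns the (max-min fair) $\preceq$-maximal allocation, its minimum component is at least $v$, so \emph{every} player strictly improves from $b_i(S)<v$ to $\ge v$, contradicting that $S$ is a SE. Second, I claim every player satisfies $b_i(S)\ge \tfrac12 b_{\max}$. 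Here I fix a player $i$ with maximal $j^\star$ (i.e. $b_{j^\star}(S)=b_{\max}$) and consider the unilateral deviation of $i$ to the strategy $S_{j^\star}$, producing $T=(S_{j^\star},S_{-i})$. Because $j^\star$ stays unsaturated until time $b_{\max}$ on every resource $r\in S_{j^\star}$, each such $r$ has capacity $c_r\ge b_{\max}$ and is not saturated before time $b_{\max}$ in $S$. I would argue that after adding the single player $i$ to the path $S_{j^\star}$, every such $r$ still survives until at least time $b_{\max}/2$, so that $i$'s finishing time, and hence $b_i(T)$, is at least $b_{\max}/2$; since $S$ is a (strong, hence Nash) equilibrium, $b_i(S)\ge b_i(T)\ge \tfrac12 b_{\max}$. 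Combining the two claims gives $\SW_{\mathcal{G}}(S)=\sum_i b_i(S)\ge b_{\max}+(n-1)\tfrac{b_{\max}}{2}=\tfrac{n+1}{2}\,b_{\max}\ge \tfrac{n+1}{2}\,v$, which closes the argument.

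The main obstacle is the second claim, i.e. making the ``$b_i(T)\ge b_{\max}/2$'' step rigorous against the full progressive-filling dynamics. The naive fair-share bound $b_i(T)\ge\min_{r\in S_{j^\star}} c_r/|\{p:r\in T_p\}|$ is too weak, because a resource $r\in S_{j^\star}$ may formally carry many users, most of whom finished early elsewhere; one must exploit that those players are capped at other resources and that $c_r\ge b_{\max}$. The genuine difficulty is the \emph{secondary effect}: when $i$ vacates the resources in $S_i\setminus S_{j^\star}$, players sharing those resources gain capacity and may grow, possibly re-congesting some $r\in S_{j^\star}$ earlier than in $S$. To control this I would use the monotonicity already isolated in the paper (at any time $t$ every player's allocated bandwidth is at most $t$, so the demand on $r$ at time $t$ is at most the number of its users times $t$) together with the capacity bound $c_r\ge b_{\max}$, showing that at time $b_{\max}/2$ no resource of $S_{j^\star}$ can have reached its capacity. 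I expect this per-resource coupling between the runs of Algorithm~\ref{alg_pf} on $S$ and on $T$ to be the technically delicate part, and it is also where the factor $2$ (and hence the limiting value $4$) becomes tight, matching the extremal SE in which one player obtains $v$ and the remaining $n-1$ players obtain $v/2$.
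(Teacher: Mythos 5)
Your proposal follows the paper's proof essentially line by line: the same split $\OPT\le\MCAP\le\bigl(2-\tfrac1n\bigr)nv$ via Lemma~\ref{lem_uniform_approx}, the same lower bound $\SW_{\mathcal{G}}(S)\ge\tfrac{n+1}{2}v$ obtained from the two claims $\max_i b_i(S)\ge v$ (grand-coalition deviation to the uniform-MCAP state, which is exactly the paper's deviation to $S_{DG}$) and $b_i(S)\ge\tfrac12\max_j b_j(S)$ (imitation of a richest player), and the same final arithmetic $\tfrac{2n}{n+1}\cdot\tfrac{2n-1}{n}=4-\tfrac{6}{n+1}$. The only divergence is one of exposition: the paper dispatches the imitation step in a single sentence, whereas you correctly single it out as the one nontrivial point and leave it as a flagged difficulty.

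On that step your concern is legitimate, and the particular route you sketch does not close it: the bound ``demand on $r$ at time $t$ is at most $|N_r(T)|\cdot t$'' combined with $c_r\ge b_{\max}$ only rules out saturation of $r\in S_{j^\star}$ before time $b_{\max}/2$ when $r$ carries at most two players in $T$; for three or more users it proves nothing, and the secondary effect you describe (players freed by $i$ vacating $S_i$ growing faster and re-congesting $S_{j^\star}$) is exactly what must be excluded. The tool that does this is already in the paper: if $b_i(S)>b_i(T)$ for $T=(S_{j^\star},S_{-i})$, then the move of the singleton coalition $\{i\}$ \emph{from $T$ back to $S$} is improving, and Lemma~\ref{lem_dual_greedy_se2} applied to that reverse move yields $t_p(S)\ge t_p(T)$ for every $p$ with $t_p(T)\le \tau:=t_i(T)$ and $t_p(S)>\tau$ otherwise --- i.e.\ up to time $\tau$ no player carries more load in $T$ than in $S$, so there is no secondary growth to worry about. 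Comparing the load of $i$'s bottleneck $r$ at time $\tau$ in $T$ (which equals $c_r$ and where $i$ and $j^\star$ together contribute at most $2\tau$) with its load at time $2\tau$ in $S$ (where $j^\star$ alone contributes $2\tau$) then forces $\tilde t_r(S)\le 2\tau$, contradicting $\tilde t_r(S)\ge b_{\max}>2\tau$. So the skeleton you propose is the paper's proof; what is missing is this run-coupling argument (plus the degenerate case $b_i(T)=b_i(S)$), which neither your sketch nor, admittedly, the paper's one-liner makes explicit.
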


\begin{proof}
 Let $\mathcal{G}$ be a symmetric MMFG and let $S$ be a SE in this game. Then in $S$ each player must get at least a bandwidth of $\frac{1}{2}\max_{i\in N}b_i\left(S\right)$, as otherwise this player could profitably imitate a player in $\argmax_{i\in N}b_i\left(S\right)$ by choosing the same strategy. Thus, we can lower bound the social welfare by
 
\begin{equation}
\SW_\mathcal{G}\left(S\right) \quad = \quad \sum_{i\in N}b_i\left(S\right) \quad \geq \quad \left(\frac{n-1}{2}+1\right)\cdot\max_{i\in N}b_i\left(S\right).
\label{thm_spoa_eq1}
\end{equation}
 
State $S_{DG}$ computed by Dual Greedy in $\mathcal{G}$ is such that $\min_{i\in N}b_i\left(S_{DG}\right)=v$ where $n\cdot v$ is the optimal value of the uniform MCAP. Consequently, for any other SE $S$, we must have $\max_{i\in N}b_i\left(S\right)\geq v$, because otherwise all the players could profitably switch to their strategies in $S_{DG}$. Using Equation~\ref{thm_spoa_eq1}, this means $ \SW_\mathcal{G}\left(S\right)\geq\frac{n+1}{2}\cdot v$, and hence we obtain
 \begin{align*} 
 \frac{\max_{S^\prime\in\mathcal{S}}\SW_\mathcal{G}\left(S^\prime\right)}{\SW_\mathcal{G}\left(S\right)}
 &\leq \frac{2n}{n+1}\cdot\frac{\max_{S^\prime\in\mathcal{S}}\SW_\mathcal{G}\left(S^\prime\right)}{n\cdot v}\\
 & \leq \frac{2n}{n+1}\cdot\frac{2n-1}{n}=\frac{4n-2}{n+1}\enspace.
 \end{align*} 
\end{proof}

\begin{theorem}
\label{spoa2_32}
 The SPoA for symmetric MMFGs with $2$ players is $\frac{3}{2}$ and this bound is tight.
\end{theorem}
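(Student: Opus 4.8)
The plan is to prove the two inequalities $\SPoA\le\tfrac32$ and $\SPoA\ge\tfrac32$ separately. The lower bound is essentially free: by definition $\SPoA\ge\SPoS$, and $\SPoS=2-\tfrac1n=\tfrac32$ at $n=2$ by Theorem~\ref{thm_dg_approx}. For a concrete witness I would reuse the family $\mathcal{G}_{n,\varepsilon}$ from the proof of Theorem~\ref{thm_dg_approx} at $n=2$: two parallel $s$-$t$ edges of capacities $1-\varepsilon$ and $2$. The state in which both players route over the capacity-$2$ edge gives each player bandwidth $1$ and welfare $2$; it is a SE, since the only alternative route yields bandwidth $1-\varepsilon<1$, so no coalition can make all its members strictly better off. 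As the optimum routes one player over each edge for welfare $3-\varepsilon$, this SE has ratio $(3-\varepsilon)/2\to\tfrac32$.

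For the upper bound let $S=(S_1,S_2)$ be a SE with bandwidths $b_1\le b_2$. The imitation argument from the proof of Theorem~\ref{spoa_4} gives $b_1\ge\tfrac12 b_2$, i.e. $b_2\le 2b_1$, and $\SW_{\mathcal G}(S)=b_1+b_2$. It therefore suffices to show $\OPT\le\tfrac32(b_1+b_2)$, where the optimal (MCAP) value is realised by some state $(P,Q)$ with per-player bandwidths $\alpha\le\beta$ carried by $P$ and $Q$, so that $\OPT=\alpha+\beta$; writing $c^P=\min_{r\in P}c_r$ and $c^Q=\min_{r\in Q}c_r$ one has $\alpha\le c^P$ and $\beta\le c^Q$. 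The engine is symmetry: the coalition of both players may deviate to any of $(P,Q)$, $(Q,P)$, $(P,P)$, $(Q,Q)$, and in each case progressive filling must not make both players strictly better off.

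The argument then splits on how balanced the optimum is. In the balanced regime the joint deviations to $(P,Q)$ and $(Q,P)$ reproduce the optimal values $\alpha,\beta$ under progressive filling, and the two resulting non-improvement disjunctions, combined with $b_2\le 2b_1$, give $\OPT\le\tfrac32(b_1+b_2)$ in every branch except the unbalanced one, in which $\alpha\le b_1$ while $\beta>b_2$. In that case $Q$ has bottleneck capacity $c^Q=\beta>b_2\ge b_1$, and I would argue that a SE cannot keep the high-bandwidth player pinned at $b_2$: if nothing in $S_1$ shares $Q$'s bottleneck resource, then player $2$ relocating to $Q$ obtains essentially $\beta>b_2$, contradicting that $S$ is a PNE; otherwise $S_1$ blocks that resource, and once player $1$ is fixed at $b_1$ the residual capacity there is $\beta-b_1$, so relocating player $2$ to $Q$ yields at least $\beta-b_1$. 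Non-improvement then forces $\beta\le b_1+b_2$, whence $\OPT=\alpha+\beta\le 2b_1+b_2\le\tfrac32(b_1+b_2)$ using $b_1\le b_2$. Together the two regimes give $\OPT\le\tfrac32\SW_{\mathcal G}(S)$ for every SE, hence $\SPoA\le\tfrac32$, matching the lower bound.

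The hard part will be the unbalanced case. Turning ``$S_1$ blocks $Q$'s bottleneck'' into the exact progressive-filling value of player $2$'s relocation requires tracking finishing times and saturation order rather than bottleneck capacities alone, and one must separately treat the degenerate optimum in which $\alpha+\beta$ is limited by a single resource shared by $P$ and $Q$, so that the joint deviations no longer reproduce $(\alpha,\beta)$. I would dispatch that degenerate subcase by bounding $\OPT$ by the shared capacity and invoking the $(P,P)$ or $(Q,Q)$ deviation, reducing it to $\OPT\le 2b_2\le\tfrac32(b_1+b_2)$, again via $b_2\le 2b_1$.
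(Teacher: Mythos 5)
Your lower bound is fine and coincides with the paper's (both invoke the tight instance from Theorem~\ref{thm_dg_approx} at $n=2$). The upper bound, however, has a genuine gap, and the gap sits exactly where you flag it. The pivotal claims of your unbalanced case are not established and at least one of them is false as stated: (i) relocating player $2$ alone to $Q$ does \emph{not} in general yield $\beta-b_1$, because player $1$'s bandwidth in the deviated state $(S_1,Q)$ need not remain $b_1$ --- player $2$ newly loads the resources of $S_1\cap Q$ and vacates those of $S_1\cap S_2\setminus Q$, so player $1$'s finishing time can move in either direction; the guarantee one can actually extract from progressive filling is only of the form $b_2\geq c^Q/2$, which gives $\beta\leq 2b_2$ and hence $\OPT\leq b_1+2b_2$, not enough for $\frac{3}{2}(b_1+b_2)$. (ii) ``Nothing in $S_1$ shares $Q$'s bottleneck resource'' does not imply player $2$ obtains $c^Q$ on $Q$: he can be capped on a non-bottleneck resource of $Q$ that $S_1$ does occupy. (iii) The assertion that the joint deviations to $(P,Q)$ and $(Q,P)$ ``reproduce $(\alpha,\beta)$'' fails whenever $P$ and $Q$ share a tight resource (uniform filling then equalizes the two bandwidths), and your ``balanced regime'' is never delimited precisely enough to exclude this. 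So the case analysis does not close.

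The missing idea that makes all of this unnecessary: compare the SE $S$ not against the MCAP bandwidths $(\alpha,\beta)$ on abstract paths, but against the \emph{progressive-filling} bandwidths $b_1(S'),b_2(S')$ of an optimal state $S'$ itself. Since the grand coalition cannot profitably deviate from $S$ to $S'$, at least one player satisfies $b_i(S)\geq b_i(S')$, and this single disjunction replaces your entire four-deviation analysis. The paper then finishes each branch in two lines using only the imitation bounds you already have: if $b_1(S)\geq b_1(S')$, imitation of $S'_2$ gives $b_2(S')\leq 2b_1(S)$, so $\SW(S')\leq 3b_1(S)\leq\frac{3}{2}\,\SW(S)$; if $b_2(S)\geq b_2(S')$, imitation within $S$ gives $b_1(S)\geq\frac{1}{2}b_2(S)$, so $\SW(S')\leq 2b_2(S)\leq\frac{4}{3}\,\SW(S)$. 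No tracking of saturation order, shared bottlenecks, or degenerate optima is needed. To recover the claim against the stronger MCAP benchmark you would additionally need the observation (used elsewhere in the paper via Lemma~\ref{lem_uniform_approx}) relating the PF optimum to the MCAP optimum, but that is a separate and already-handled issue, not one your case split resolves.
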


\begin{proof}
 Let $\mathcal{G}$ be a symmetric MMFG with $n=2$ and let $S$ be a SE in this game. Further, let $S^\prime$ be an arbitrary (optimal) state. W.l.o.g., we may assume that $b_1\left(S\right)\leq b_2\left(S\right)$ and $b_1\left(S^\prime\right)\leq b_2\left(S^\prime\right)$.
 
 Note that $b_1\left(S\right)\geq b_1\left(S^\prime\right)$ or $b_2\left(S\right)\geq b_2\left(S^\prime\right)$ must hold. Otherwise switching from $S$ to their strategies in $S^\prime$ would be profitable for both players. Thus, the following case distinction is complete.
 
\textit{Case 1:}
We have $b_1\left(S\right)\geq b_1\left(S^\prime\right)$. We can also derive an upper bound on $b_2\left(S^\prime\right)$. If $b_2\left(S^\prime\right)>2\cdot b_1\left(S\right)$, player $1$ could profitably deviate to $S^\prime_2$ in $S$. So we must have $b_2\left(S^\prime\right)\leq 2\cdot b_1\left(S\right)$. Thus,	
\[\frac{\SW_\mathcal{G}\left(S^\prime\right)}{\SW_\mathcal{G}\left(S\right)}=\frac{b_1\left(S^\prime\right)+b_2\left(S^\prime\right)}{b_1\left(S\right)+b_2\left(S\right)}\leq\frac{3\cdot b_1\left(S^\prime\right)}{2\cdot b_1\left(S\right)}\leq\frac{3}{2}\enspace.\]
\textit{Case 2:}
We have $b_2\left(S\right)\geq b_2\left(S^\prime\right)$. We find an upper bound on $b_2\left(S\right)$. Since $b_1\left(S\right)<\frac{1}{2}\cdot b_2\left(S\right)$ would mean that player $1$ could profitably imitate player $2$ in $S$, it holds that $b_1\left(S\right)\geq\frac{1}{2}\cdot b_2\left(S\right)$. This implies
\[\frac{\SW_\mathcal{G}\left(S^\prime\right)}{\SW_\mathcal{G}\left(S\right)}=\frac{b_1\left(S^\prime\right)+b_2\left(S^\prime\right)}{b_1\left(S\right)+b_2\left(S\right)}\leq\frac{2\cdot b_2\left(S^\prime\right)}{\frac{3}{2}\cdot b_2\left(S\right)}\leq\frac{4}{3}<\frac{3}{2}\enspace.\]
The lower bound immediately follows from Theorem~\ref{thm_dg_approx}. 
\end{proof}

In addition, we show a lower bound of $\Omega(n/k)$ on the $k$-SPoA for $k$-SE, where only deviations of coalitions of size at most $k$ are considered.

\begin{theorem}
 \label{thm_spoakn}
 The $k$-SPoA for single-commodity network MMFGs is in $\Omega\left(\frac{n}{k}\right)$.
\end{theorem}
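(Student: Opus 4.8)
The plan is to prove the lower bound by an explicit construction: for every $n$ and $k$ I would exhibit a single-commodity network MMFG together with a state $S$ that is a $k$-SE whose social welfare is a factor $\Omega(n/k)$ below the optimum. The guiding intuition comes from comparing this target with Theorem~\ref{spoa_4}: the $O(1)$ bound on the \SPoA\ rests entirely on the fact that the \emph{grand} coalition may switch to the Dual-Greedy state $S_{DG}$, in which every player receives at least the uniform-MCAP value $v$, forcing $\max_i b_i(S)\ge v$ in any full SE. Restricting coalitions to size at most $k$ removes exactly this lever. Hence the construction should produce a state in which every player receives only $\Theta(k/n)\cdot v$, so that $\SW_{\mathcal G}(S)=\Theta(kv)$ while the optimum is $\Theta(nv)$, and in which the only profitable reconfiguration (essentially the move to $S_{DG}$) requires more than $k$ players to move simultaneously.

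Concretely, I would start from the worst-case single-commodity instance witnessing $\PoA=n$ for PNE (the $k=1$ case) and ``thicken'' it: every critical single-player role is replaced by a block of $k+1$ players, and the capacities of the shared, heavily loaded resources are scaled by $\Theta(k)$, so that in $S$ all $n$ players funnel through overlapping resources and each is bottlenecked at bandwidth $\Theta(k/n)$ times its optimal per-player share. The optimum $S^\star$ spreads the blocks over private high-bandwidth routes; a direct evaluation of Algorithm~\ref{alg_pf} on $S$ and on $S^\star$ then yields $\SW_{\mathcal G}(S^\star)/\SW_{\mathcal G}(S)=\Omega(n/k)$. The design must guarantee that every resource carrying ``good'' leftover bandwidth in $S^\star$ sits behind a resource that remains saturated in $S$ unless an entire block of more than $k$ players vacates it together; this is what switches off all small-coalition deviations while leaving available the large, welfare-improving deviation of the grand coalition. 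Note that by Theorem~\ref{spoa_4} such a state cannot be a full SE once $n/k$ is large, so coalitions of size larger than $k$ must indeed be able to improve, which is exactly the behavior we want.

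For the verification that $S$ is a $k$-SE I would reuse the residual-capacity monotonicity behind Lemma~\ref{lem_dual_greedy_se2}. Fix any coalition $C$ with $|C|\le k$ and any deviation $T=(S'_C,S_{-C})$. Because at least $n-k$ players do not move, every shielded resource is, up to the finishing time of the currently worst-off member of $C$, used in $T$ by the same blocking set of non-deviators as in $S$, so its residual capacity in $T$ is no larger than in $S$; by the monotone dependence of finishing time on obtained bandwidth (the monotonicity of the $V_i$'s), the minimum-bandwidth member of $C$ cannot strictly increase its bandwidth in $T$. This contradicts the requirement that a coalitional improving move strictly improve \emph{every} member of $C$, and hence shows that $S$ is a $k$-SE.

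The main obstacle is precisely the topology-and-capacity design realizing a \emph{sharp threshold at $k$}. Max-min fairness generically rewards decongestion, so a lone or small deviation onto a less-loaded resource typically improves; the construction must ensure that every less-loaded resource is reachable only through a neck that stays saturated as long as more than $n-k$ players remain on the bad routes, so that no set of at most $k$ deviators can ever clear it. Calibrating these necks so that (i) small coalitions are provably stuck, (ii) a coalition of size exceeding $k$ would indeed improve, and (iii) the per-player gap stays at $\Theta(k/n)$ is the delicate part; once it is in place, the welfare ratio $\Omega(n/k)$ is immediate.
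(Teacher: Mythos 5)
There is a genuine gap: your proposal never produces the witness instance, and for a lower-bound theorem the explicit construction \emph{is} the proof. You correctly identify the target ratio, the reason the $O(1)$ bound of Theorem~\ref{spoa_4} breaks (only the grand coalition can force $\max_i b_i(S)\ge v$), and the need for a state that is a $k$-SE yet has welfare $\Theta(k)$ against an optimum of $\Theta(n)$ --- but you then defer exactly the ``delicate part'' (the topology and capacities realizing a sharp threshold at $k$) that constitutes the entire content of the argument. The paper's proof supplies this concretely: $k$ gadgets in series, each containing $n$ parallel two-edge paths of capacity $1$, plus direct shortcut edges from the global source into each gadget and from each gadget out to the global sink. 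In the bad state $S^\star$ the players are partitioned into $k$ groups of $n/k$, each group serpentining through \emph{all} $n$ internal paths of one gadget, so every player is held to bandwidth $k/n$ and $\SW(S^\star)=k$, while the optimum routes the $n$ players on arc-disjoint paths for welfare $n$. The $k$-SE verification is then a case analysis on how a coalition $C$ with $|C|\le k$ intersects the $k$ groups (either it misses some group, or it has exactly one member per group), and in each case a counting argument shows some member of $C$ still lands on an edge carrying at least $n/k$ players, so it does not strictly improve. None of this is recoverable from your sketch.

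A second, more subtle problem is your generic stability argument. As written --- ``at least $n-k$ players do not move, so every shielded resource keeps its blocking set and the worst-off member of $C$ cannot strictly improve'' --- it makes no reference to the construction and would therefore show that \emph{no} coalition of size at most $k$ can ever improve in \emph{any} state, which is false. It also runs the monotonicity of Lemma~\ref{lem_dual_greedy_se2} in the wrong direction: that lemma's proof establishes that residual capacities of resources \emph{vacated} by the coalition weakly increase, whereas you need a construction-specific guarantee that every resource the coalition might move \emph{onto} remains congested by the stayers. That guarantee is exactly what the gadget counting in the paper provides, and it cannot be replaced by a topology-free appeal to residual-capacity monotonicity.
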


\begin{proof}
We construct a family of single-commodity networks MMFG $\mathcal{G}_{n,k}$ with SPoA $\frac{n}{k}$. As we are showing an asymptotical lower bound, we may assume w.l.o.g. that $k\mid n$.
 
 The game $\mathcal{G}_{n,k}$ consists of $k$ gadgets $G_{n,i}$ for $i\in\left\{1,\dots,k\right\}$ where gadget $G_{n,i}=\left(V_{n,i},E_{n,i},c_{n,i}\right)$ is the following network. For the vertices and edges, we set
\begin{align*}
V_{n,i}=&\left\{u_i,v_{i,1},\dots,v_{i,n},w_{i,1},\dots,w_{i,n},u_{i+1}\right\},\\
E_{n,i}=&\left\{u_i\right\}\times\left\{v_{i,1},\dots,v_{i,n}\right\}\cup\left\{\left(v_{i,j},w_{i,j}\right)\mid 1\leq j\leq n\right\}\\
&\cup\left\{\left(w_{i,j},v_{i,j+1}\right)\mid 1\leq j\leq n-1\right\}\cup\left\{w_{i,1},\dots,w_{i,n}\right\}\times\left\{u_{i+1}\right\}
\end{align*}
and, further, we let $c_{n,i}\left(e\right)=1$, for all $e\in E$.

By arranging the $G_{i,n}$ in a row, we obtain the network underlying $\mathcal{G}_{n,k}$. More specifically, this network is 

\[\left(V_{n,1}\cup\dots\cup V_{n,k},E_{n,1}\cup\dots\cup E_{n,k}\cup E^\star_{n,k},c_{n,1}\cup\dots\cup c_{n,k}\cup c^\star_{n,k}\right)\]

with the source and sink nodes $u_1$ and $u_{k+1}$, respectively, and where

\[E^\star_{n,k}=\left\{\left(u_1,v_{i,1}\right)\mid 2\leq i\leq n\right\}\cup\left\{\left(w_{n,1},u_{n+1}\right)\mid 1\leq i\leq n-1\right\}.\]

Moreover, $c^\star_{n,k}$ is again constantly $1$ on $E^\star_{n,k}$. This network is illustrated in Figure~\ref{fig_kspoa}.

\begin{figure}
\centering
\begin{tikzpicture}[scale=1]
\definecolor{darkred}{rgb}{.85,0,0}
\definecolor{darkblue}{rgb}{0,0,.85}
\foreach \x/\namea in {0/a,3.6/b,10/d}{
\node[draw,circle,minimum width=0.08cm,thick,fill=black!10] (\namea) at (\x,0) {};
\foreach \y/\nameb in {-1.5/a,0.3/b,0.9/c,1.5/d}{
\node[draw,circle,minimum width=0.08cm,thick,fill=black!10,xshift=1.4cm] (\namea\nameb1) at (\x,\y) {};
\node[draw,circle,minimum width=0.08cm,thick,fill=black!10,xshift=2.2cm] (\namea\nameb2) at (\x,\y) {};
\draw[->,thick] (\namea) -- (\namea\nameb1);
\draw[->,line width=0.075cm] (\namea\nameb1) -- (\namea\nameb2);
}
\node[xshift=1.8cm] (0) at (\x,-0.6) {$\vdots$};
}
\node[draw,circle,minimum width=0.08cm,thick,fill=black!10] (c) at (7.2,0) {};
\node[draw,circle,minimum width=0.08cm,thick,fill=black!10] (e) at (13.6,0) {};
\node (0) at (8.6,0) {$\hdots$};
\foreach \nameb in {a,b,c,d}
\foreach \namea/\namec in {a/b,b/c,d/e}
\draw[->,thick] (\namea\nameb2) -- (\namec);

\foreach \namea in {a,b,d}{
\foreach \nameb/\namec in {d/c,c/b}
\draw[->,line width=0.075cm] (\namea\nameb2) -- (\namea\namec1);
\draw[->,line width=0.075cm] (\namea b2) -- ++(-0.5,-0.375);
\draw[<-,line width=0.075cm] (\namea a1) -- ++(0.5,0.375);
}

\draw[->,line width=0.075cm] (a) .. controls (1,2.3) .. (bd1);
\draw[->,line width=0.075cm] (a) .. controls (0.5,3) .. (dd1);
\draw[->,line width=0.075cm] (aa2) .. controls (13.1,-3) .. (e);
\draw[->,line width=0.075cm] (ba2) .. controls (12.7,-2.5) .. (e);
\draw[->,line width=0.075cm] (a) -- (ad1);
\draw[->,line width=0.075cm] (da2) -- (e);

\draw [decoration={brace,amplitude=0.5em},decorate,ultra thick,darkred] (14,1.7) -- node[right,xshift=.2cm] {\shortstack{$n$ disjoint\\paths per\\gadget}} (14,-1.7);
\draw [decoration={brace,amplitude=0.5em},decorate,ultra thick,darkred] (13.8,-2.7) -- node[below,yshift=-.3cm] {$k$ gadgets} (0,-2.7);

\node (0) at (2,3.5) {\textcolor{darkred}{$\frac{n}{k}$ players each}};
\draw[->,draw=darkred,very thick] (0) to [bend left] (1.9,2.45);
\draw[->,draw=darkred,very thick, bend angle=10] (0) to [bend left] (1,1.73);
\draw[->,draw=darkred,very thick] (0) to [bend right] (0.8,0.87);
\end{tikzpicture}
\caption{Illustration of the network of the game $\mathcal{G}_{n,k}$ with the SE $S^\star$ in the proof of Theorem~\ref{thm_spoakn}.}
\label{fig_kspoa}
\end{figure}
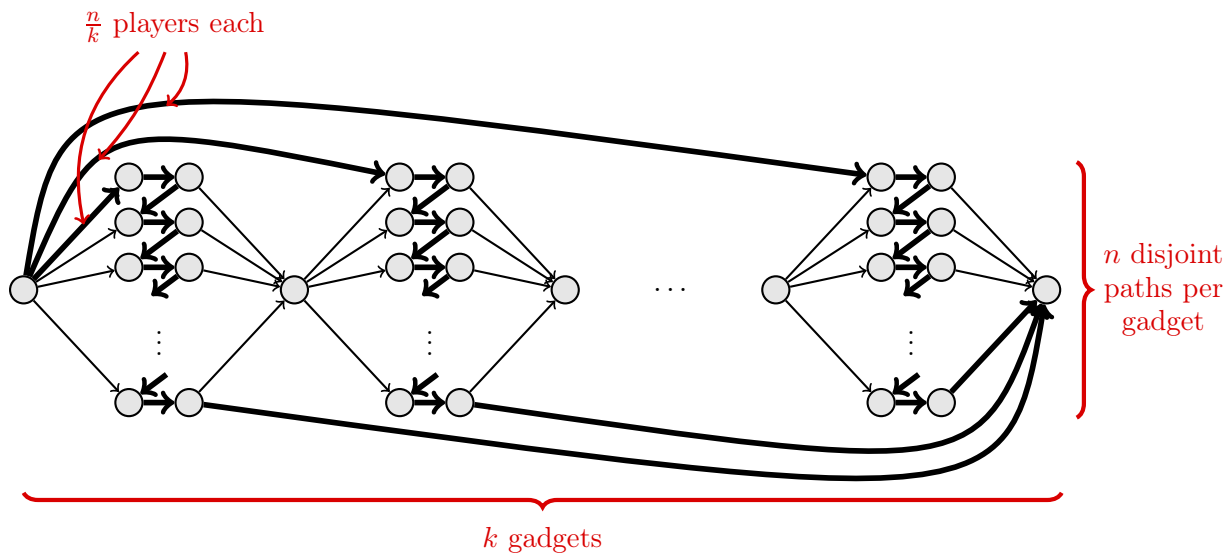

Since all the edges have capacity $1$, the optimal social welfare is $n$. In particular, in an optimal state player $i$ chooses the path $\left(u_1,v_{1,i},w_{1,i},u_{2},v_{2,i},\dots,w_{n,i},u_{n+1}\right)$, which can easily be verified. Thus, the optimal social welfare is $n$.

We now describe a state $S^\star$ (also shown in Figure~\ref{fig_kspoa}) that will be shown to be a $k$-SE and attain a social welfare of $k$, implying the claim. We partition the player set $N$ into $N_1,\dots,N_k$ with each player $j\in N_i$ choosing the path $\left(u_1,v_{i,1},w_{i,1},v_{i,2},w_{i,2},\dots,v_{i,n},w_{i,n},u_{i+1}\right)$, for all $i\in\left\{1,\dots,k\right\}$. Then, obviously, the paths of the players from different sets $N_i$ are pairwise arc-disjoint. Hence, the social welfare in $S^\star$ is indeed $k$.

It remains to be shown that $S^\star$ is indeed a $k$-SE. Towards a contradiction, suppose that there is a coalition $C\subseteq N$ with $\left|C\right|\leq k$ and a state $S^\prime$ such that each player in $C$ strictly improves when moving from $S^\star$ to $T=\left(S^\prime_C,S^\star_{-C}\right)$. We distinguish two cases and will use implicitly that in $S^\star$ there is no edge with more than $\frac{n}{k}$ players on it.

\begin{enumerate}[\hspace{.1cm}{Case} 1:]
 \item There is an $i\in\left\{1,\dots,k\right\}$ such that $N_i\cap C=\emptyset$. W.l.o.g., $i$ is the minimum $i$ with this property. Furthermore, we let $p$ be the number of such sets, i.e. $p=\left|\left\{j\mid N_j\cap C=\emptyset\right\}\right|$.

 First note that, if a player from $C$ passes through a gadget $G_{n,j}$ with $N_j\cap C=\emptyset$, the global bottleneck edge will be used by at least $k+1$ players and thus by at least one player from $C$. So each player from $C$ must use at least one edge from $E_C=\left\{\left(w_{j,n},u_{k+1}\right)\mid 1\leq j<i\right\}$ $\cup\left\{\left(u_1,v_{j,1}\right)\mid i\leq j\wedge N_j\cap C\neq\emptyset\right\}$. 

Now note that there are $n-p\cdot\frac{k}{n}-\left|C\right|$ players from $N \setminus C$ on $E_C$ and we have $\left|E_C\right|=k-p$. Thus, in $T$, there is a global bottleneck edge that has at least
\[\frac{n-p\cdot\frac{n}{k}}{k-p}=\frac{n}{k}\]
 players and, among them, one player from $C$ on it. This player does not strictly improve.
 \item For each $i\in\left\{1,\dots,k\right\}$, it holds that $\left|N_i\cap C\right|=1$. Then, on each path from $u_1$ to $u_{k+1}$ there is an edge used by exactly $\frac{n}{k}-1$ players from $N \setminus C$. Thus, adding the players from $C$ to them produces a global bottleneck edge with at least $\frac{n}{k}$ players on it. Consequently, the players from $C$ on the global bottleneck edge cannot strictly improve.   
\end{enumerate} 
\end{proof}

\section{General \PFG s}
\label{sec:GeneralPFG}

\subsection{Complexity and Convergence}
\label{sec:impSeq}

The lexicographical potential function for PFGs implies that the length of each coalitional improvement sequence is finite. By $\Phi_{\ord}$, we denote the set of ordered values of $\phi$, i.e., $\Phi_{\ord}:={\img}\left({\ord}\circ\phi\right)$ where $\ord$ is the function which orders a vector, say ascendingly. The cardinality of the above set provides an upper bound on the length of improvement sequences.
 
%
%
%

For a MMFG with $n$ players and $m$ resources, Yang et al.~\cite{YangXFMZ10} provide an upper bound of $\left(mn\right)^n$ on the number of improvement steps to reach a PNE. In the following, we show that it is not possible to get this result by just bounding $\left|\Phi_{\ord}\right|$.

\begin{theorem}
	\label{thm_china_falsch}
	There is a family of network MMFGs $\mathcal{G}_n$ with $m \in \Theta\left(n\right)$ and respective potential 
	function $\Phi_{\ord}$ such that $\left|\Phi_{\ord}\right|$ is in $2^{\Omega\left(n^2\right)} = \omega\left((n^2)^n
	\right)$.
\end{theorem}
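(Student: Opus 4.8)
The plan is to exhibit, for each $n$, a network MMFG $\mathcal{G}_n$ (with player-specific source--sink pairs) on $m=\Theta(n)$ edges together with a set $\mathcal{F}$ of at least $2^{\Omega(n^2)}$ states on which the map $S\mapsto(\ord\circ\phi)(S)$ is injective; since $\Phi_{\ord}$ is by definition the image of this map, $|\mathcal{F}|$ pairwise distinct images force $|\Phi_{\ord}|\geq 2^{\Omega(n^2)}$. Because every $v_i\equiv 1$ in an MMFG, the finishing time of a player equals its bandwidth, so $\phi(S)$ is just the sorted bandwidth vector $(b_{i_1}(S),\dots,b_{i_n}(S))$; hence it suffices to construct many states with pairwise distinct sorted bandwidth vectors. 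The two things to control are therefore (i) how much combinatorial freedom the routing offers while keeping $m=\Theta(n)$, and (ii) how to make that freedom visible in the sorted bandwidth vector.

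For (i) I would use that a series composition of $\Theta(n)$ two-edge ``diamonds'' has $\Theta(n)$ edges but $2^{\Theta(n)}$ distinct source--sink paths, so a player routed through such a chain already has $2^{\Theta(n)}$ route choices at a linear edge budget. Placing all $n$ players on a shared network of this type gives a state space of size $2^{\Theta(n^2)}$, the right order of magnitude; the whole difficulty is to convert this routing entropy into entropy of the sorted bandwidth vector. To that end I would adopt a separation-of-scales design: assign the players geometrically spaced capacity scales (say powers of a base $B>n$) so that, in every state of $\mathcal{F}$, the bandwidth of player $i$ is guaranteed to lie in a half-open interval $I_i$, with $I_1,\dots,I_n$ pairwise disjoint and strictly increasing. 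Under such a guarantee the sorted vector automatically splits into $n$ blocks, the $i$-th block recovering exactly $b_i(S)$; consequently the sorted vector is injective on $\mathcal{F}$ as soon as the tuple $(b_1(S),\dots,b_n(S))$ is, and the sorting step loses no information.

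Within its own scale $I_i$, the bandwidth of player $i$ should range over $2^{\Theta(n)}$ distinct values, so that the product over the $n$ players is $2^{\Theta(n^2)}$. I would realize this by a progressive-filling cascade: along player $i$'s chain the residual capacity left after each preceding saturation is divided among the still-active players, so after $\Theta(n)$ rounds $b_i(S)$ is a nested fraction whose denominator is $2^{\Theta(n)}$ and whose value depends, bit by bit, on which other players share the successive edges of the chain. Choosing the capacities so that these contributions are dyadically separated makes $b_i(S)$ an injective function of the $\Theta(n)$-bit ``interaction pattern'' seen along $i$'s chain, yielding the required $2^{\Theta(n)}$ values per scale. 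Finally one checks that the total number of achievable tuples is $2^{\Omega(n^2)}$ and confirms the numerics are consistent with the granularity bound of Lemma~\ref{lem_integral}: the denominators stay below $n^n$, so the integrality constraint never forces a collision among the values we intend to separate.

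The main obstacle is precisely the clash between the linear resource budget and the quadratic target entropy: with $m=\Theta(n)$ the $n$ cascades cannot be disjoint, so they must share edges and ``gate'' players, and the per-player values are therefore \emph{not} independent. The heart of the proof is to design the shared wiring and the capacity scales so that (a) each player still attains $2^{\Theta(n)}$ distinct values inside its own interval, (b) the induced map from states to bandwidth tuples stays injective on a subfamily of size $2^{\Omega(n^2)}$, and (c) the exact output of Algorithm~\ref{alg_pf} on each such state really is the intended nested fraction, which requires verifying the order in which edges saturate and that no unintended edge becomes the bottleneck. Establishing (c), i.e.\ a clean closed form for the progressive-filling outcome over the whole family, is where I expect the bulk of the technical work to lie; the counting in (a)--(b) is then routine.
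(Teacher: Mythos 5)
Your high-level strategy is the right one and matches the paper's: exhibit $2^{\Omega(n^2)}$ states whose sorted bandwidth vectors are pairwise distinct, use geometrically separated scales so that sorting destroys no information, and aim for $2^{\Theta(n)}$ distinct values per player across $\Theta(n)$ players. However, the proposal stops exactly where the proof has to happen: you explicitly defer ``the heart of the proof'' (the shared wiring, the capacity scales, and the verification of the saturation order in Algorithm~\ref{alg_pf}) as future technical work. Moreover, the mechanism you sketch for generating per-player entropy --- a cascade of nested fractions arising from repeated division of residual capacity --- is precisely the kind of construction whose correctness you admit you cannot yet verify (your point (c)), and you give no evidence that the $2^{\Theta(n)}$ nested-fraction values within one scale are actually pairwise distinct, nor that the intended edge is the bottleneck in every state of your family. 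As it stands, the statement is not proved.

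The missing idea is a much simpler writer/reader decomposition that avoids divisions altogether. The paper splits the players in half: players $1,\dots,n/2$ are ``bit carriers'' whose bandwidth is pinned to exactly $2^{i-1}$ by a private bottleneck edge of that capacity, \emph{independently of their route}; they are then routed through a chain of $n/2$ gadgets, each consisting of two parallel edges with the left edge of capacity $2^{n/2+1}$. The subset of bit carriers choosing the left edge of gadget $i$ subtracts an arbitrary $(n/2)$-bit number from its capacity, so the residual capacity of that edge can be made equal to any integer in $\{2^{n/2}+1,\dots,2^{n/2+1}\}$. The remaining $n/2$ players each live in one gadget and simply read off this residual capacity as their bandwidth --- a pure subtraction, no progressive-filling cascade, so there is no saturation order to control. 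Restricting the $i$-th reader to a subinterval of length $\lfloor (2^{n/2}-1)/n\rfloor$ fixes the relative order of the readers in the sorted vector and yields $\lfloor (2^{n/2}-1)/n\rfloor^{n/2}=2^{\Omega(n^2)}$ distinct ordered vectors with $m=\Theta(n)$ edges. Without some such device that decouples ``who carries which bit'' from ``who reads which number,'' your plan leaves open exactly the obstructions (a)--(c) you list, and the theorem does not follow.
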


\begin{proof}
Since we are proving an asymptotical lower bound, we may assume w.l.o.g. that $2\mid n$. We now describe the multigraph underlying $\mathcal{G}_n$. 

For each player $i\in\left\{\frac{n}{2}+1,\dots,n\right\}$, we have a gadget in this multigraph. This gadget consists of two parallel edges, both connecting the source and the sink nodes ($s_i$ and $t_i$, respectively) of the particular player. One of these edges (referred to as the \textit{left} edge) has a capacity of $2^{\frac{n}{2}+1}$ whereas the other one also has at least this capacity. The gadgets are arranged in a row such that $s_i=t_{i+1}$ for $i\in\left\{\frac{n}{2}+1,\dots,n-1\right\}$. All the other players $i\in\left\{1,\dots,\frac{n}{2}\right\}$ have one disjoint source node $s_i$ each and $t_n$ as sink node. Moreover, there is one edge connecting $s_i$ and $s_{\frac{n}{2}}$ with capacity $2^{i-1}$. This results in a network as shown in Figure~\ref{fig_thm_china_falsch}. Obviously, the number of edges is in $\Theta\left(n\right)$.

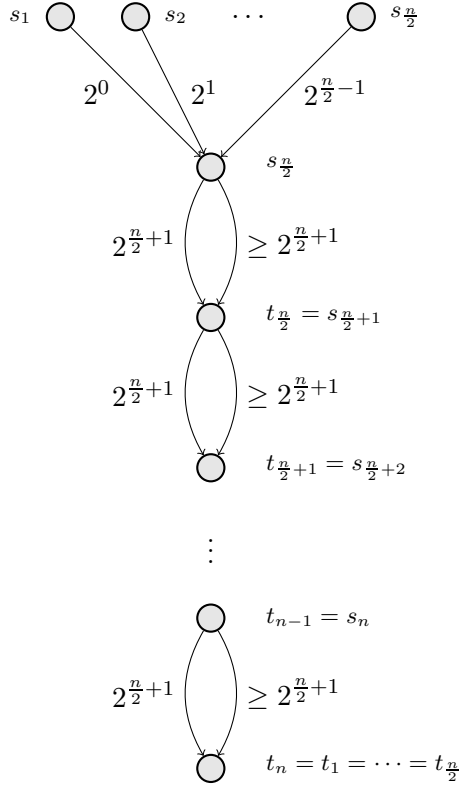
\begin{figure}
\centering
\begin{tikzpicture}[scale=2]
\node[draw,circle,minimum width=0.08cm,thick,fill=black!10] (a) at (-1,1) {};
\node[draw,circle,minimum width=0.08cm,thick,fill=black!10] (b) at (-0.5,1) {};
\node at (0.25,1) {$\hdots$};
\node[draw,circle,minimum width=0.08cm,thick,fill=black!10] (c) at (1,1) {};
\foreach \y/\name in {0/$s_\frac{n}{2}$,1/$t_\frac{n}{2}=s_{\frac{n}{2}+1}$,2/$t_{\frac{n}{2}+1}=s_{\frac{n}{2}+2}$,3/$t_{n-1}=s_n$,4/$t_n=t_1=\dots=t_\frac{n}{2}$}{
\node[draw,circle,minimum width=0.08cm,thick,fill=black!10] (\y) at (0,-\y) {};
\node[right of=\y,anchor=west,xshift=-.4cm] (a\y) {\footnotesize{\name}};
}
\node (8) at (0,-2.5) {$\vdots$};
\foreach \x/\y in {0/1,1/2,3/4}{
\draw[->] (\x) to [bend left]  node[right,anchor=west] {$\geq2^{\frac{n}{2}+1}$} (\y);
\draw[->] (\x) to [bend right]  node[left,anchor=east] {$2^{\frac{n}{2}+1}$} (\y);
}
\draw[->] (a) to  node[left,xshift=-.2cm,anchor=east] {$2^0$} (0);
\draw[->] (b) to  node[right,xshift=.1cm,anchor=west] {$2^1$} (0);
\draw[->] (c) to  node[right,xshift=.1cm,anchor=west] {$2^{\frac{n}{2}-1}$} (0);
\node[left of=a,anchor=east,xshift=.75cm] (a) {\footnotesize{$s_1$}};
\node[right of=b,anchor=west,xshift=-.75cm] (b) {\footnotesize{$s_2$}};
\node[right of=c,anchor=west,xshift=-.75cm] (c) {\footnotesize{$s_\frac{n}{2}$}};
\end{tikzpicture}
\caption{Illustration of the network in the proof of Theorem~\ref{thm_china_falsch}.}
\label{fig_thm_china_falsch}
\end{figure}

Note that, independently of the path a player $i\in\left\{1,\dots,\frac{n}{2}\right\}$ chooses, he is always assigned the respective bandwidth $2^{i-1}$ in the max-min fair allocation. This is because the residual capacity of an edge from the gadgets is larger than each of the bandwidth of players from $\left\{1,\dots,\frac{n}{2}\right\}$, even if all these players use this edge.

Consequently, the players $\left\{1,\dots,\frac{n}{2}\right\}$ are capable of choosing any natural number between $2^{\frac{n}{2}}+1$ and $2^{\frac{n}{2}+1}$ for the residual capacity of each of the $\frac{n}{2}$ left edges in the different gadgets. More specifically, let $x_{\frac{n}{2}}x_{\frac{n}{2}-1}\dots x_{1}$ be the binary representation of a natural number $x$ such that $2^{\frac{n}{2}+1}-x$ is from that interval. To obtain the desired residual capacity on a left edge in a given gadget, player $i$ simply chooses this edge in his path if and only if we have $x_{i}=1$. This has indeed the desired effect since
$2^{\frac{n}{2}+1}-\sum_{i\in N: x_i=1}2^{i-1}=2^{\frac{n}{2}+1}-x.$
We now give a lower bound on the number of different ordered allocation vectors. Since we want to derive a lower bound, it suffices to show the claim for allocations where the residual capacity of the left edge in the $i$-th gadget (i.e. the one of player $\frac{n}{2}+i$) is between
$
 2^{\frac{n}{2}}+1+\left(i-1\right)\cdot\left\lfloor(2^{\frac{n}{2}}-1)/n\right\rfloor$ and $2^{\frac{n}{2}}+1+i\cdot\left\lfloor(2^{\frac{n}{2}}-1)/n\right\rfloor$
and player $i$ chooses this edge. In these allocations, the bandwidth of player $i$ occurs in the ordered allocation vector \textit{before} the one of player $i+1$, for all $i\in\left\{\frac{n}{2}+1,\dots,n-1\right\}$. Consequently, the claim is implied by the following bound on the number of ordered allocations 
\[
 \left\lfloor\frac{2^{\frac{n}{2}}-1}{n}\right\rfloor^\frac{n}{2}=\left(\frac{2^{\Omega\left(n\right)}}{2^{\mathcal{O}\left(\log n\right)}}\right)^{\Omega\left(n\right)}=2^{\Omega\left(n\right)\cdot \Omega\left(n\right)}=2^{\Omega\left(n^2\right)}.
\]
\end{proof}

We now provide an \textit{upper} bound on the number of ordered values of the potential, even for general progressive filling games. For $m = \Theta(n)$, this yields an upper bound of $2^{O(n^2)}$.

\begin{theorem}
	\label{thm_impseq} 
	For arbitrary PFGs with the potential function $\phi$, it holds that 
	$\left|\Phi_{\ord}\right|\leq2^{n^2}\cdot m^n$.
\end{theorem}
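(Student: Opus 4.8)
The plan is to show that $\phi(S)$ is a function of a small amount of combinatorial data attached to the run of Algorithm~\ref{alg_pf} on $S$, and then to count the possible data. For a state $S$, let $\rho_S\colon N\to R$ send each player $i$ to the resource whose saturation fixes $i$, so that $t_i(S)=\tilde t_{\rho_S(i)}(S)$ and $\img(\rho_S)$ is exactly the set of resources that ever saturate while carrying a free player. Since each iteration of the main loop fixes at least one player, there are at most $n$ iterations, and hence $|\img(\rho_S)|\le n$. As a second piece of data I record, for every player $i\in N$ and every resource $r\in\img(\rho_S)$, the bit $[\,r\in S_i\,]$; this is at most an $n\times n$ Boolean matrix, so there are at most $2^{n^2}$ possibilities, while there are at most $m^n$ maps $\rho_S$. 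The whole argument then reduces to the claim that this pair of data determines $\phi(S)$: if so, $\phi$ factors through a set of size at most $2^{n^2}m^n$, and since $\Phi_{\ord}=\img(\ord\circ\phi)$ we obtain $\left|\Phi_{\ord}\right|\le 2^{n^2}m^n$.

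To establish the claim I would reconstruct all finishing times $\tilde t_r$, $r\in\img(\rho_S)$, by re-running Algorithm~\ref{alg_pf} using only the recorded data. The enabling observation is that the resource selected in each iteration always lies in $\img(\rho_S)$: whenever a resource becomes the binding constraint while carrying a free player, that player is fixed there, so the resource is a fixing resource; conversely, a resource outside $\img(\rho_S)$ never becomes binding while it still carries a free player, and may therefore be ignored throughout the run. Hence at each iteration it suffices to compare the candidate saturation times of the resources in $\img(\rho_S)$, and for exactly these resources the recorded bits tell me which players use them.

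Concretely, I would process the iterations inductively. Suppose the finishing times of all players fixed in earlier iterations are already known. For each $r\in\img(\rho_S)$ still carrying a free player, its residual capacity is $c_r-\sum_{i:\,r\in S_i,\ i\ \text{fixed earlier}}V_i(t_i(S))$, every term of which is known from the instance and the induction hypothesis, and its candidate saturation time is the unique $t$ solving (this residual capacity) $=\sum_{i:\,r\in S_i,\ i\ \text{free}}V_i(t)$, which exists and is unique by the monotonicity of the $V_i$. The iteration then fixes, at the smallest such candidate time, the free players of the corresponding resource, which is precisely the information encoded in the bits. Iterating recovers every $\tilde t_r$, hence every $t_i(S)=\tilde t_{\rho_S(i)}(S)$, and therefore the sorted vector $\phi(S)$.

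The step I expect to be most delicate is the reduction to $\img(\rho_S)$ together with the treatment of ties. For the reduction I must argue carefully that discarding the resources outside $\img(\rho_S)$ during the re-simulation changes nothing; this is exactly the content of the observation above, the point being that such a resource is never the first to saturate among those carrying a free player, so it can never trigger an event. For ties, where several resources saturate simultaneously, the algorithm's particular choice of $r^\star$ is irrelevant to the \emph{multiset} of finishing times, since all players involved are fixed at the same time regardless of the tie-breaking, so $\phi(S)$ is unaffected. Granting these two points, the counting in the first paragraph yields $\left|\Phi_{\ord}\right|\le 2^{n^2}m^n$, as claimed.
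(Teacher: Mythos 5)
Your proof is correct and rests on the same counting as the paper's: each of the at most $n$ saturation events is determined by a choice of resource ($m$ options) and a subset of players incident to it ($2^n$ options), yielding $(2^n\cdot m)^n=2^{n^2}\cdot m^n$. The paper phrases this as an induction on prefixes of the sorted finishing-time vector, whereas you package the same information as a global encoding of the run (the map $\rho_S$ plus the incidence bits) followed by a re-simulation argument; this is the more explicit rendering of the same idea.
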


\begin{proof}
Let $\mathcal{G}$ be a PFG with  potential function $\phi$. We claim that the number of different vectors up to the $k$-th position (for $k\leq n$) in $\left|\Phi_{\ord}\right|$ is at most $2^{k\cdot n}\cdot m^k$. It is shown via induction on $k$.
 
 For $k=0$, the claim is clear as there is only the vector of dimension $0$. So let $n\geq k>0$ and assume there are at most $2^{\left(k-1\right)\cdot n}\cdot m^{\left(k-1\right)}$ different vectors up to the position $k-1$ in $\Phi_{\ord}$. We now fix the first $k-1$ positions of a vector in $\Phi_{\ord}$ and bound the number of entries at the $k$-th position. Note that one can calculate the next finishing time given the resource which is saturated and the subset of players on that resource. Since there are $2^n\cdot m$ such combinations, the claim follows. 
\end{proof}

Theorem~\ref{SE_hard} shows that computing SE is NP-hard in MMFGs. For general PFGs with constant allocation rates (i.e., weighted MMF allocations), the same result holds even for single-commodity network games with two players. Hence, extending Dual Greedy to compute SE in polynomial time for this case is impossible.

\begin{theorem}
	\label{thm_sehard2}
	Let $v_1\neq v_2$ be two constant allocation rate functions and consider the class of single-commodity network PFGs with two players and $v_1,v_2$ as allocation rate functions. In this class, the computation of SE is NP-hard.
\end{theorem}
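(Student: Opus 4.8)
The plan is to give a polynomial-time reduction from an NP-hard numeric/flow problem to the problem of \emph{computing} a SE, exploiting the fact that (by Theorem~\ref{thm_pfg_se}) a SE is guaranteed to exist. The logic mirrors Theorem~\ref{SE_hard}: if we can build, from an instance $I$ of a hard problem, a single-commodity two-player PFG $\mathcal{G}_I$ (using exactly the two constant rates $v_1,v_2$) in which \emph{any} SE exposes, in a polynomially recognizable way, a solution to $I$, then an algorithm that outputs a SE would decide $I$. The natural hard source is the maximum $2$-splittable flow problem, whose inapproximability for $k=2$ is known~\cite{BaierKS05,KochS06} and which reduces from \textsc{Partition}; since constant rates realize weighted max-min fairness (Proposition, item 2), this is precisely the \emph{weighted} analogue of the single-commodity case that was tractable for equal rates in Corollary~\ref{cor_mmfg_scnpoly}, so the hardness must be driven by the asymmetry $v_1\neq v_2$. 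First I would fix $v_1>v_2$ without loss of generality and observe that scaling all capacities by a common factor leaves the equilibrium structure unchanged, so the construction only needs to depend on the ratio $v_1/v_2$.

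The gadget design I would pursue is the following. I would route both players from a common source $s$ to a common sink $t$ and install one \emph{selection gadget} per item of $I$, each offering the fast player (rate $v_1$) a binary choice corresponding to including or excluding the item. The arithmetic has to be carried by \emph{shared} edges: on an edge used by both players progressive filling splits capacity in the ratio $v_1:v_2$ until saturation (Algorithm~\ref{alg_pf}), so the slow player's attainable bandwidth on such an edge, and the residual capacity left after the fast player commits, encode the running partial selection. I would then add a \emph{target gadget}, essentially a private edge for the slow player whose capacity is tuned to the target value $B$, arranged so that the slow player can reach the target bandwidth exactly when the fast player's selection leaves the correct residual, i.e.\ exactly when the corresponding subset sums to $B$. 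As in Theorem~\ref{SE_hard}, a sufficiently small separation parameter $\varepsilon$ would be used to drive a wedge between the "honest" profile and all spurious ones, and I would adapt the granularity argument of Lemma~\ref{lem_integral} (now with the two weights $v_1,v_2$ entering the denominators) to guarantee that bandwidths are $\tfrac1\delta$-integral for a controlled $\delta$, so threshold comparisons are robust. Finally I would verify both directions: on a \textsc{Yes}-instance the intended profile is simultaneously Nash-stable and weakly Pareto-optimal (hence a SE with both players at their target bandwidths), while on a \textsc{No}-instance every candidate admits either a unilateral improvement or a joint Pareto-improving move, forcing every SE into a recognizably deficient form.

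The main obstacle, and the reason equal-rate tractability breaks, is that a single player's bandwidth along a path is the \emph{bottleneck} (a minimum), never a sum, so additive constraints such as subset sums cannot be encoded inside one player's route. The additive behavior must instead be simulated entirely through the two-player interaction on shared edges, where the $v_1:v_2$ split converts capacity into the pair of bandwidths and where one player finishing early (Lemma~\ref{lem_dual_greedy_se2}) hands residual capacity to the other. Making this interaction faithfully implement the selection-plus-target logic while \emph{simultaneously} keeping the honest profile a genuine SE (both Nash and Pareto-optimal for the coalition $\{1,2\}$) and ruling out all unintended SE on \textsc{No}-instances is the crux; a secondary difficulty is making the capacities and the $\varepsilon$-thresholds uniform in the arbitrary fixed ratio $v_1/v_2$, which I expect to handle by expressing every capacity as a rational function of $v_1,v_2$ and the item data and invoking the adapted granularity bound.
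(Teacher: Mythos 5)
There is a genuine gap here: what you have written is a research plan, not a proof. The entire construction --- the selection gadgets, the target gadget, the choice of $\varepsilon$, the adapted granularity lemma, and both directions of the correctness argument --- is left unspecified, and you yourself flag the central difficulty (``making this interaction faithfully implement the selection-plus-target logic while simultaneously keeping the honest profile a genuine SE'') as the unresolved crux. Worse, I do not believe the crux is resolvable along the lines you sketch. With only two players, each player's final bandwidth is a \emph{single} number determined by the first resource to saturate on his path; a shared edge splits its capacity in the fixed ratio $v_1:v_2$, and once the fast player is fixed the slow player simply inherits one residual bottleneck. There is no mechanism by which a ``running partial selection'' across many gadgets can accumulate additively into either player's payoff, which is exactly the obstacle you name in your last paragraph. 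Encoding \textsc{Partition} or subset sums therefore appears to be a dead end for $n=2$.

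The paper avoids arithmetic entirely: it reduces from the 2-directed-arc-disjoint-paths problem (2DADP). Given digraph $D$ with terminal pairs $(s_1,t_1),(s_2,t_2)$, it adds a common source $s$ and sink $t$ with $c(s,s_1)=c(t_1,t)=1+\lambda$, $c(s,s_2)=c(t_2,t)=\lambda+\varepsilon$ for $\varepsilon\in(0,1-\lambda)$, and capacity $1+\lambda$ inside $D$, taking $v_1\equiv 1$, $v_2\equiv\lambda<1$. The asymmetric rates are used only to pin down player $1$'s behaviour (he always routes through $s_1,\dots,t_1$ since even a shared edge yields bandwidth $1>\lambda+\varepsilon$) and to create a joint profitable deviation to bandwidths $(1+\lambda,\lambda+\varepsilon)$ exactly when two arc-disjoint paths exist. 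Consequently every SE either exhibits two arc-disjoint paths (certifying a \textsc{Yes}-instance) or has the players sharing an edge (certifying a \textsc{No}-instance), so an SE-computing algorithm decides 2DADP. Your meta-strategy --- ``any SE must expose the answer in a recognizable way'' --- is exactly right, but you need a source problem whose structure is about \emph{route disjointness} rather than about sums; as it stands, your reduction does not exist.
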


\begin{proof}
We reduce from the 2-directed-arc-disjoint-paths problem (2DADP). Let $\mathcal{I}$ be an instance of this problem. W.l.o.g., we can assume that there are paths in $D$ from $s_1$ to $t_1$ and from $s_2$ to $t_2$ and, further, that $v_1\equiv 1$ and $v_2\equiv \lambda$ where $\lambda<1$. Furthermore, we choose an $\varepsilon\in\left(0,1-\lambda\right)$.

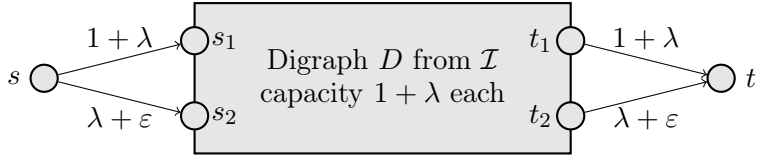
\begin{figure}[h]
\centering
\begin{tikzpicture}
\node[draw,minimum width=5cm,minimum height=2cm,thick,fill=black!10] (a) at (0,0) {\shortstack{Digraph $D$ from $\mathcal{I}$\\capacity $1+\lambda$ each}};
\node[draw,circle,minimum width=0.08cm,thick,fill=black!10] (s1) at (-2.5,0.5) {};
\node[draw,circle,minimum width=0.08cm,thick,fill=black!10] (s2) at (-2.5,-0.5) {};
\node[draw,circle,minimum width=0.08cm,thick,fill=black!10] (t1) at (2.5,0.5) {};
\node[draw,circle,minimum width=0.08cm,thick,fill=black!10] (t2) at (2.5,-0.5) {};
\node[draw,circle,minimum width=0.08cm,thick,fill=black!10] (s) at (-4.5,0) {};\node[draw,circle,minimum width=0.08cm,thick,fill=black!10] (t) at (4.5,0) {};
\draw[->] (s) to node[above] {$1+\lambda$} (s1);
\draw[->] (s) to node[below] {$\lambda+\varepsilon$} (s2);
\draw[->] (t1) to node[above] {$1+\lambda$} (t);
\draw[->] (t2) to node[below] {$\lambda+\varepsilon$} (t);
\node[left of=s,xshift=.6cm] {$s$};
\node[right of=s1,xshift=-.6cm] {$s_1$};
\node[right of=s2,xshift=-.6cm] {$s_2$};
\node[right of=t,xshift=-.6cm] {$t$};
\node[left of=t1,xshift=.6cm] {$t_1$};
\node[left of=t2,xshift=.6cm] {$t_2$};
\end{tikzpicture}
\caption{Illustration of the network of the game $\mathcal{G_I}$ in the proof of Theorem~\ref{thm_sehard2}.}
\label{fig_thm_sehard2}
\end{figure}

We construct the network underlying the single-commodity network PFG $\mathcal{G_I}$ by keeping $D$, adding the source and sink nodes $s$ and $t$, respectively, and the four edges $\left(s,s_1\right)$, $\left(s,s_2\right)$, $\left(t_1,t\right)$ and $\left(t_2,t\right)$. The capacities are 
\begin{align*}
c\left(s,s_1\right)=c\left(t_1,t\right)=1+\lambda\text{ and }c\left(s,s_2\right)=c\left(t_2,t\right)=\lambda+\varepsilon.
\end{align*}
The capacities of all edges occurring in $D$ are set to $1+\lambda$. This construction is illustrated in Figure~\ref{fig_thm_sehard2}.

We first show that each SE $S$ with two arc-disjoint paths from $s$ to $t$ certifies that $\mathcal{I}$ is solvable. To see this, note that player $1$ will always choose a path of the form $\left(s,s_1,\dots,t_1,t\right)$ because, even if he has to share an edge with player $2$, he gets bandwidth $1>\lambda+\varepsilon$. Hence, the path of player $2$ must indeed connect $s_2$ and $t_2$.

We also show that each SE $S$ without this property certifies that no arc-disjoint paths from $s_1$ to $t_1$ and from $s_2$ to $t_2$ exist. In $S$, both players share a common edge, i.e., player $1$ gets a bandwidth of $1$ and player $2$ a bandwidth of $\lambda$. Thus, if there were two arc-disjoint paths $\left(s,s_1,\dots,t_1,t\right)$ and $\left(s,s_2,\dots,t_2,t\right)$, both players could profitably change to these paths and get bandwidths of $1+\lambda$ and $\lambda+\varepsilon$. 
\end{proof}

Let us instead consider PNE, which may be easier to compute than SE. Similar to a result from~\cite{YangXFMZ10} for MMFGs, we first show that one can efficiently compute a unilateral improvement step for a given player in a PFG with constant allocation rate functions (if it exists). Using Theorem~\ref{thm_impseq}, computation of PNE can be done efficiently for a constant number of players.

\begin{lemma}
	\label{lem_impstep}
	In PFGs with constant allocation rate functions, an improving move of any player $i$ in any state $S$ can be 
	computed in polynomial time if it exists.
\end{lemma}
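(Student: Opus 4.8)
The plan is to reduce the search for a profitable deviation to a sequence of feasibility tests. Since the allocation rate of player $i$ is the constant $w_i$, the bandwidth $u_i(S)=w_i\cdot t_i(S)$ is strictly increasing in the finishing time $t_i(S)$; hence a profitable deviation exists if and only if some strategy yields a strictly larger finishing time, and it suffices to maximize $t_i$ over all $S_i'\in\mathcal{S}_i$ with $S_{-i}$ fixed. First I would argue that the set of attainable finishing times is small and efficiently enumerable: whenever player $i$ is fixed it is because some resource $r$ becomes saturated, and given $r$ together with the set of players simultaneously active on $r$ at that moment, the saturation time is the solution of a single linear equation in the (constant) rates. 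Guessing the bottleneck resource $r^\star$ that fixes player $i$ thus yields a polynomially long list of candidate target values $\tau$; alternatively one obtains polynomially many candidates from a denominator bound in the spirit of Lemma~\ref{lem_integral}.

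For a fixed candidate $\tau$ (equivalently a target bandwidth $b=w_i\tau$), the core subroutine decides whether player $i$ has a strategy with finishing time at least $\tau$. The key point is that only player $i$ changes its strategy, so I would classify each resource $r$ as \emph{$\tau$-feasible} if player $i$ can remain active on $r$ until time $\tau$, i.e.\ if the bandwidth placed on $r$ at time $\tau$ by $i$ (namely $w_i\tau$) together with that of the other players using $r$ does not exceed $c_r$. To make this condition independent of the concrete $S_i'$, I would bound the contribution of the other players by the allocation they receive in the progressive-filling run on $(N\setminus\{i\},S_{-i})$, using the monotonicity that adding player $i$ can only weakly decrease the final bandwidth of any competitor on a resource shared with $i$. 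Given the set of $\tau$-feasible resources, deciding whether $\mathcal{S}_i$ contains a strategy lying entirely within that set is a pure feasibility/reachability question, solvable in polynomial time whenever the strategy space admits such an oracle (e.g.\ $s$--$t$ reachability for single-commodity networks, or direct inspection for explicitly listed strategies).

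The main obstacle is precisely the interaction hidden in the previous step: player $i$'s presence on one resource can saturate it earlier, fix a shared competitor at a lower bandwidth, and thereby \emph{delay} the saturation of another resource used by $i$, so the finishing time of a multi-resource strategy is not simply the minimum of independent per-resource thresholds. To control this I would (i) invoke Lemma~\ref{lemma_cpf} with the singleton coalition $C=\{i\}$, which pins down that up to player $i$'s own finishing time the bandwidths of all players finishing no later than $i$ are unaffected, so that the relevant competitors' contributions at time $\tau$ are exactly those of the without-$i$ run; and (ii) verify that the value $\tau$ associated with the guessed bottleneck $r^\star$ is actually attained, by choosing among all strategies within the $\tau$-feasible resources one that contains $r^\star$, which certifies $t_i\geq\tau$. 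Taking the largest feasible $\tau$ over all candidates and comparing it with $w_i\cdot t_i(S)$ then returns an improving move if and only if one exists, and the whole procedure runs in polynomial time. I expect steps (i)--(ii), namely showing that the simple per-resource threshold test is both necessary and sufficient for the guessed bottleneck, to be the delicate part of the argument.
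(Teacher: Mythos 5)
Your proposal is correct and follows essentially the same route as the paper: the heart of both arguments is that, because no resource of player $i$'s new strategy saturates before $i$'s own finishing time, the bandwidth under $S_i'$ decomposes as $\min_{r\in S_i'} b_i\left(\left\{r\right\},S_{-i}\right)$, after which finding the best deviation is a bottleneck/reachability search over $\mathcal{S}_i$ (Dijkstra-style for networks). The paper simply computes the $m$ per-resource values $b_i\left(\left\{r\right\},S_{-i}\right)$ directly with Algorithm~\ref{alg_pf} and maximizes their minimum over strategies, so your enumeration of candidate thresholds $\tau$ and the appeal to a denominator bound in the spirit of Lemma~\ref{lem_integral} are unnecessary scaffolding around the same idea.
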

 
\begin{proof}
The bandwidth of player $i$ in the state $\left(\left\{r\right\},S_{-i}\right)$ can by computed in polynomial time by Algorithm~\ref{alg_pf}. Further, for a given strategy $S^\prime_i$, the bandwidth of player $1$ only depends on the resource which gets saturated first, i.e., 
$b_i\left(S^\prime_i,S_{-i}\right)=\min_{r\in S^\prime_i}b_i\left(\left\{r\right\},S_{-i}\right)$, 
which can easily be verified on Algorithm~\ref{alg_pf}. Thus, it suffices to calculate $\min_{r\in S^\prime_i}b_i\left(\left\{r\right\},S_{-i}\right)$ for all possible alternative strategies $S^\prime_i$ to decide whether there is an improvement step from $S$ for player $i$.

If the strategies are given explicitly as input, this value can be explicitly computed for each of the strategies. If strategies are given implicitly in the form of a network, we can use, e.g., Dijkstra's algorithm to find a path $P^\star$ with the maximum $\min_{r\in P^\star}b_i\left(\left\{r\right\},S_{-i}\right)$. 
\end{proof}

\begin{corollary}
	\label{thm_ne_poly}
  A PNE can be computed in polynomial time in PFGs with constant allocation rate functions and a constant number of 
	players.
\end{corollary}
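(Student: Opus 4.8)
The plan is to run an improving-move dynamics and bound its length via the lexicographic potential. First I would start from an arbitrary state $S \in \mathcal{S}$ and repeatedly search for a player with a profitable unilateral deviation. By Lemma~\ref{lem_impstep}, for each fixed player $i$ such a deviation can be found, or ruled out, in polynomial time; cycling through all $n$ players therefore either identifies an improving move or certifies that none exists, all in polynomial time, since $n$ is constant. When no player can improve, the current state is by definition a PNE, so it remains only to bound the number of iterations of the dynamics.

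To bound the number of iterations I would invoke the lexicographic potential $\phi$ from the proof of Theorem~\ref{thm_pfg_se}. A unilateral improving move is precisely the special case of a coalitional move with a singleton coalition $C = \{i\}$, so by Lemma~\ref{lem_dual_greedy_se2} (equivalently, the argument in the proof of Theorem~\ref{thm_pfg_se}) each such move strictly increases $\phi$ in the lexicographic order $\prec$. Consequently the sequence of potential values produced by the dynamics is strictly $\prec$-increasing, hence the values visited are pairwise distinct, and the number of states encountered is at most the number of distinct ordered potential values $\left|\Phi_{\ord}\right|$. By Theorem~\ref{thm_impseq} we have $\left|\Phi_{\ord}\right| \leq 2^{n^2} \cdot m^n$; for a constant number of players $n$ the factor $2^{n^2}$ is a constant while $m^n$ is polynomial in the number of resources $m$, so the dynamics terminates after a polynomial number of steps.

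Combining the polynomial per-step cost from Lemma~\ref{lem_impstep} with the polynomial bound on the number of steps yields an overall polynomial-time algorithm that returns a PNE. The only point that deserves care is confirming that a single-player improving move increases $\phi$ \emph{strictly}, so that no potential value can recur along the dynamics; this is exactly the content of Lemma~\ref{lem_dual_greedy_se2} applied to the singleton coalition, and requires no new work. I therefore do not anticipate any substantial obstacle: the statement follows as a routine combination of the per-step computation bound and the bound on the number of distinct potential values.
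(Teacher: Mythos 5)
Your proposal is correct and follows exactly the route the paper intends for this corollary: iterate unilateral improving moves, each computable in polynomial time by Lemma~\ref{lem_impstep}, and bound the number of iterations by the number of distinct values of the lexicographic potential, which by Theorem~\ref{thm_impseq} is at most $2^{n^2}\cdot m^n$ and hence polynomial in $m$ for constant $n$. The only point needing care, the strict lexicographic increase under a singleton-coalition deviation, is indeed supplied by Lemma~\ref{lem_dual_greedy_se2}, just as you note.
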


\subsection{Quality of Equilibria}
In this section we prove results on PoA and PoS for NE in PFGs. In general, Theorem~\ref{thm_pos_nmmfgs} in the previous section yields a lower bound on the PoA in MMFGs of $\Omega\left(n\right)$. In fact, $n$ is also the correct upper bound on the PoA, for every PFG.

\begin{theorem}
 The PoA in PFGs is at most $n$.
\label{thm_poan}
\end{theorem}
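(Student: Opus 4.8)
The plan is to reduce the claim to a single structural inequality about any pure Nash equilibrium $S$: that its social welfare already dominates the largest bandwidth any single player receives in the optimum. Writing $a^\star_i$ for the bandwidth of player $i$ in the optimal state $S^\star$ (or, more generally, in any feasible allocation respecting the capacities) and fixing $i^\star\in\argmax_i a^\star_i$, the trivial bound $\SW_{\mathcal{G}}(S^\star)=\sum_i a^\star_i \le n\cdot a^\star_{i^\star}$ shows that it suffices to prove $a^\star_{i^\star}\le \SW_{\mathcal{G}}(S)$ for every NE $S$; dividing then yields $\PoA\le n$.

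First I would record the only piece of information coming from the optimum. If $T=S^\star_{i^\star}$ is the strategy through which $i^\star$ obtains $a^\star_{i^\star}$, then feasibility of the optimal allocation forces $c_r\ge a^\star_{i^\star}$ for every $r\in T$, since $i^\star$ alone already consumes $a^\star_{i^\star}$ on each such resource. Next I would exploit the equilibrium condition exactly once, through the deviation of $i^\star$ to $T$: because $S$ is a NE, $b_{i^\star}(S)\ge b_{i^\star}(T,S_{-i^\star})$. Running Algorithm~\ref{alg_pf} on the deviated state $(T,S_{-i^\star})$, let $r^\star\in T$ be the resource whose saturation fixes $i^\star$'s bandwidth. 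At that instant $r^\star$ is exactly saturated, so the bandwidths of the players using $r^\star$ sum to $c_{r^\star}\ge a^\star_{i^\star}$, and the users of $r^\star$ other than $i^\star$ are precisely those $k$ with $r^\star\in S_k$, since they keep their strategies.

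The heart of the argument, and the step I expect to be the main obstacle, is to charge this saturated capacity back to the welfare of $S$ rather than to the welfare of the deviated state. Concretely, I would prove a monotonicity lemma for progressive filling stating that, up to the relevant finishing time, no player $k\neq i^\star$ on $r^\star$ holds more bandwidth in $(T,S_{-i^\star})$ than its equilibrium bandwidth $b_k(S)$; combined with $b_{i^\star}(T,S_{-i^\star})\le b_{i^\star}(S)$ this gives $a^\star_{i^\star}\le c_{r^\star}\le \sum_{k\in N}b_k(S)=\SW_{\mathcal{G}}(S)$, closing the proof. The delicate point is that when $i^\star$ switches from $S_{i^\star}$ to $T$ it simultaneously vacates the resources of $S_{i^\star}$, which can let some competitors on $r^\star$ ramp up above their equilibrium level; bounding this freed-capacity effect is exactly what the monotonicity lemma must control. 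An equivalent and perhaps safer route I would keep in reserve is a case distinction: if some resource of $T$ is already saturated in $S$ then $\SW_{\mathcal{G}}(S)\ge c_r\ge a^\star_{i^\star}$ immediately, and otherwise every resource of $T$ carries strictly positive residual capacity in $S$, which I would try to turn into an explicit improving deviation for $i^\star$, contradicting that $S$ is a NE.
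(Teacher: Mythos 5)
Your reduction is sound and your deviation is the right one: bounding $\SW_{\mathcal{G}}(S^\star)$ by $n\cdot a^\star_{i^\star}$, sending $i^\star$ to $T=S^\star_{i^\star}$, and reading off the capacity equation on the resource $r^\star\in T$ that fixes $i^\star$ in $(T,S_{-i^\star})$ is essentially the paper's plan (the paper uses the max-bottleneck strategy $\argmax_{S}\min_{r\in S}c_r$ instead of $S^\star_{i^\star}$, and a saturated/unsaturated case split, but the charging is the same). The genuine gap is that your entire proof rests on the ``monotonicity lemma,'' which you correctly flag as the obstacle but do not establish -- and which is in fact \emph{false} as a statement about progressive filling alone. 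When $i^\star$ vacates $S_{i^\star}$, a competitor $k$ whose bottleneck lay on $S_{i^\star}$ gets a strictly later finishing time in the deviated state, and if $k$ also sits on $r^\star$ his bandwidth there at time $t_{i^\star}(T,S_{-i^\star})$ can strictly exceed $b_k(S)$; one can then make $c_{r^\star}$ exceed $\SW_{\mathcal{G}}(S)$. What saves the argument is that in such configurations the deviation is \emph{improving}, so they cannot occur at a NE: the equilibrium condition gives $b_{i^\star}(T,S_{-i^\star})\le b_{i^\star}(S)$, hence $t_{i^\star}(T,S_{-i^\star})\le t_{i^\star}(S)$, and then the coupling argument from the proof of Lemma~\ref{lemma_cpf} (no resource of $T$ saturates in the deviated run before $t_{i^\star}(T,S_{-i^\star})$, no resource of $S_{i^\star}$ saturates in $S$ before $t_{i^\star}(S)$, so the two filling processes coincide for all players $k\ne i^\star$ up to that time) delivers exactly your lemma. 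Without making the equilibrium hypothesis enter the lemma in this way, the step fails.

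Your reserve route has a second, more concrete defect: in the case where every resource of $T$ has strictly positive residual capacity $\delta$ in $S$, there is in general \emph{no} improving deviation for $i^\star$ and hence no contradiction -- $i^\star$ may already hold a large bandwidth elsewhere. The paper's Case~2 does not seek a contradiction; it converts the no-improvement condition into the lower bound $b_{i^\star}(S)\ge\delta$, and adds this to the load $c_{r'}-\delta$ already carried by the \emph{other} players on the tightest resource $r'\in T$ (after checking $i^\star$ is not among them), which sums to $c_{r'}\ge a^\star_{i^\star}$. That combination step, together with the coupling argument above, is what you would still need to supply.
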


\begin{proof}
 Consider a PFG $\mathcal{G}$ with state set $\mathcal{S}=\mathcal{S}_1\times\dots\times\mathcal{S}_n$. Then choose
 \[S_{j}\in\argmax_{S\in \mathcal{S}_1\cup\dots\cup \mathcal{S}_n}\min_{r\in S}c_r\]
 as a strategy with the maximum bottleneck resource, where $j$ is a player with $S_{j}\in\mathcal{S}_{j}$. Now consider an arbitrary NE $S^\prime$ and distinguish two cases.
 
 \begin{description}
 	\item[Case 1:] There is a saturated resource among the resources of $S_{j}$ in $S^\prime$. By the choice of $S_{j}$,  	\[\SW_\mathcal{G}\left(S^\prime\right) \quad \geq \quad \min_{r\in S_{j}}c_r \quad \geq \quad \frac{1}{n}\SW_\mathcal{G}\left(S^\star\right)\enspace,\]
 	for any other (optimal) state $S^\star$.
 	\item[Case 2:] There is no saturated resource among the resources of $S_{j}$ in $S^\prime$ but adding a bandwidth of $\delta>0$ on $S_{j}$ would saturate a resource. If player $j$ already uses a resource from $S_{j}$ in $S_j^\prime$, then $S^\prime$ is no NE because using $S_{j}$ as strategy instead would increase the bandwidth of player $j$ by $\delta$. 
 	
 	So let $S^\prime_{j}\cap S_{j}=\emptyset$. Then player $j$ must get a bandwidth of at least $\delta$ in $S^\prime$ since it would be profitable to use $S_{j}$ as strategy instead. Consequently, we get
 	\[\SW_\mathcal{G}\left(S^\prime\right)\geq\left(\min_{r\in S_j}c_r-\delta\right)+\delta=\min_{r\in S_j}c_r\geq\frac{1}{n}\SW_\mathcal{G}\left(S^\star\right)\]
for any other (optimal) state $S^\star$.
 \end{description}
\end{proof}

An improved result can be obtained for singleton games. We have already seen a lower bound of $2-\frac{1}{n}$ on the PoA in Theorem~\ref{thm_dg_approx}, as the tightness construction is a symmetric singleton MMFG. We now prove that this lower bound is tight, even in general singleton PFGs.

\begin{theorem}
 The PoA in singleton PFGs is $2-\frac{1}{n}$ and this bound is tight.
\label{thm_poa_singleton}
\end{theorem}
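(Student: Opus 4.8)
The plan is to exploit the structure of singleton games: since every player occupies a single resource, the progressive‑filling bandwidths on the different resources evolve independently, so a resource $r$ used by a nonempty set $N_r$ of players is always filled exactly to its capacity. Hence for every state $S$ one has $\SW_{\mathcal G}(S)=\sum_{i\in N}b_i(S)=\sum_{r:\,N_r\neq\emptyset}c_r$, i.e.\ the social welfare equals the total capacity of the \emph{occupied} resources. First I would record the two deviation inequalities valid in a PNE $S$. If a player $i$ may use a resource $s$ that is empty in $S$, then moving there makes $i$ the sole user of $s$ and yields exactly $c_s$, so stability gives $b_i(S)\ge c_s$. If $s$ already carries $n_s$ players, joining it yields $c_s/(n_s+1)$, so $b_i(S)\ge c_s/(n_s+1)$; here the uniform allocation rates (max‑min fairness) are essential, since they force all players on a resource $t$ to share $\gamma_t:=c_t/n_t$ equally and thereby cap the bandwidth a single player can draw from $t$.

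Write $W=\SW_{\mathcal G}(S)$ for an arbitrary PNE with occupied set $R'$, and let $b_{\max}=\max_i b_i(S)$, so that trivially $b_{\max}\ge W/n$. Let $S^\star$ be an optimal state with occupied set $R^\star$, so $\OPT=\sum_{s\in R^\star}c_s$. The key inequality I aim to establish is $\OPT\le 2W-b_{\max}$, which together with $b_{\max}\ge W/n$ yields $\OPT\le(2-\tfrac1n)W$ and hence the claimed upper bound on the PoA.

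To prove $\OPT\le 2W-b_{\max}$ I would begin from the exact identity $\OPT=W+\sum_{s\in R^\star\setminus R'}c_s-\sum_{t\in R'\setminus R^\star}c_t$, obtained by subtracting the two capacity sums and cancelling the common part $R^\star\cap R'$. For each optimum resource $s\in R^\star\setminus R'$, which is empty in equilibrium, I pick a distinct player $p(s)$ occupying $s$ in $S^\star$; the empty‑resource inequality gives $c_s\le b_{p(s)}$, so $\sum_{s\in R^\star\setminus R'}c_s\le\sum_{i\in P}b_i$ for the set $P=\{p(s)\}$ of distinct representatives, whence $\OPT\le W+\sum_{i\in P}b_i$. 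The freedom in choosing the representatives is exactly what should save a full $b_{\max}$: whenever some maximum‑bandwidth player can be kept out of $P$ we get $\sum_{i\in P}b_i\le W-b_{\max}$ and are done. A maximum player $i^\star$ is \emph{forced} into $P$ only when its optimum resource lies in $R^\star\setminus R'$ and $i^\star$ is its sole occupant there; in that situation I would charge against the negative term, using that the resource $\rho(i^\star)$ carrying $i^\star$ in $S$ either lies in $R'\setminus R^\star$ (so its capacity $\ge b_{\max}$ is subtracted) or is reoccupied in $S^\star$ by a different player $q$ whose equilibrium bandwidth, via the sharing inequality $c_{\rho(i^\star)}\le(n_{\rho(i^\star)}+1)\,b_q$, compensates for the double count.

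The main obstacle is precisely this last, degenerate configuration, where an entire maximum‑bandwidth equilibrium resource is reassigned by the optimum and re‑occupied there from the outside: the naive charging then counts the heavy players twice, and one must argue globally that the cancelling term $\sum_{t\in R'\setminus R^\star}c_t$ together with the sharing inequality recovers the missing $b_{\max}$. This is exactly where uniform rates are indispensable—with arbitrary rates a single player can absorb almost the whole capacity of its resource, and the bound degrades to $2$. Finally, the matching lower bound needs no new work: the single‑commodity instance $\mathcal G_{n,\varepsilon}$ used in the proof of Theorem~\ref{thm_dg_approx} is a singleton MMFG whose unique equilibrium has welfare $n$ while the optimum approaches $2n-1$, so the PoA is at least $2-\tfrac1n$, proving tightness.
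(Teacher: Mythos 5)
Your lower bound is fine (the paper indeed reuses the instance from Theorem~\ref{thm_dg_approx}, which is a symmetric singleton MMFG), and the identity $\SW_{\mathcal G}(S)=\sum_{r\in R_S}c_r$ over occupied resources is the right starting point. The gap is in the upper bound: everything rests on the inequality $\OPT\le 2W-b_{\max}$, and your proof of it is incomplete exactly where you say it is. The system-of-distinct-representatives charging breaks down when every maximum-bandwidth player is forced into $P$, and the proposed rescue --- charging against $\sum_{t\in R'\setminus R^\star}c_t$ or against the sharing inequality --- is only sketched; as written you have no argument that the double count is recovered, so what you actually establish is $\OPT\le 2W$, i.e.\ a PoA bound of $2$. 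The missing idea is that you are invoking the empty-resource inequality for the wrong player. In a PNE $S$, \emph{no} player wants to move to a resource $r\notin R_S$, and a player moving there would be alone on $r$ and receive exactly $c_r$ (this uses only $V_i\to\infty$); hence $c_r\le\min_{i\in N}b_i(S)\le W/n$ for \emph{every} empty resource, not merely $c_r\le b_{p(r)}$ for its occupant in the optimum. With this, no representatives, no $b_{\max}$, and no cancellation against $R'\setminus R^\star$ are needed: if $R_S\cap R_{S^\star}=\emptyset$ then $\OPT\le n\cdot W/n=W$, and otherwise $\left|R_{S^\star}\setminus R_S\right|\le n-1$ gives $\OPT\le W+(n-1)\cdot\frac{W}{n}=\left(2-\frac{1}{n}\right)W$. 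This is the paper's proof.

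A second, related misconception: you claim uniform (max-min fair) rates are essential and that the bound degrades to $2$ for arbitrary rates. The theorem is stated, and proved, for general singleton PFGs with arbitrary allocation rate functions; the argument above never uses the sharing inequality $b_i(S)\ge c_s/(n_s+1)$, only the empty-resource deviation, which is rate-independent.
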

\begin{proof}
For a state $S\in\mathcal{S}$, we denote the resources used in $S$ by $R_S=\left\{r\in R\mid l_r\left(S\right)\neq 0\right\}$. As the limits of the indefinite integrals of the allocation rate functions for $x\rightarrow\infty$ are also $\infty$, we get that
\[\SW_{\mathcal{G}}\left(S\right)=\sum_{r\in R_S}c_r.\]

Further, let $S$ be a NE in $\mathcal{G}$ and $S^\star$ an arbitrary other (optimal) state. Since no player unilaterally deviates from his strategy in $S$ to a resource $r\notin R_S$, we must have
\begin{equation}
c_r \quad \leq \quad \min_{i\in N}b_i\left(S\right) \quad \leq \quad \frac{\SW_{\mathcal{G}}\left(S\right)}{n}\enspace,
\label{thm_poa_pfgs_eq1} 
\end{equation}
for all $r\notin R_S$. In particular, this holds for each resource $r\in R_{S^\star}\setminus R_S$. Distinguish two cases:

\begin{description}
	\item[Case 1:] We have $R_S\cap R_{S^\star}=\emptyset$. Then, by the previous observation, it follows that
\[\frac{\SW_{\mathcal{G}}\left(S^\star\right)}{\SW_{\mathcal{G}}\left(S\right)} \quad = \quad \frac{\sum_{r\in R_{S^\prime}\setminus R_S}c_r}{\sum_{r\in R_S}c_r} \quad \underset{(\ref{thm_poa_pfgs_eq1})}{\leq} \quad \frac{\SW_{\mathcal{G}}\left(S\right)}{\SW_{\mathcal{G}}\left(S\right)} \quad = \quad 1\enspace.\]
	\item[Case 2:] We have $R_S\cap R_{S^\star}\neq\emptyset$. Then, again by the previous observation, it follows that
\begin{align*}
 \frac{\SW_{\mathcal{G}}\left(S^\star\right)}{\SW_{\mathcal{G}}\left(S\right)} &\leq \frac{\sum_{r\in R_S}c_r+\sum_{r\in R_{S^\star}\setminus R_S}c_r}{\sum_{r\in R_S}c_r}\\
&\underset{(\ref{thm_poa_pfgs_eq1})}{\leq} \frac{\left(1+\frac{n-1}{n}\right)\cdot\SW_{\mathcal{G}}\left(S\right)}{\SW_{\mathcal{G}}\left(S\right)}\\
&=2-\frac{1}{n}\enspace.
\end{align*}
\end{description} 
\end{proof}

Finally, we show a lower bound of $n$ on the PoS if we leave the singleton case. For multi-commodity network MMFGs we proved a bound of $\Omega(n)$ in Theorem~\ref{thm_pos_nmmfgs} in the previous section. We show that this lower bound can even be established in single-commodity network PFGs. The reason for this is that a player with a fast-growing bandwidth may make his decision (nearly) unaffected of the decisions of all the other player decisions and hence possibly blocks all strategies for other players. This argument even applies if we only allow constant allocation rate functions.

\begin{theorem}
 The PoS in single-commodity network PFGs with constant allocation rate functions is at least $n$.
\label{thm_pos_scnpfg}
\end{theorem}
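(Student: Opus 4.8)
The plan is to exhibit, for each $n$, a one-parameter family of single-commodity network PFGs with one very fast player and $n-1$ unit-rate players whose price of stability tends to $n$. Since Theorem~\ref{thm_poan} already gives $\PoA\le n$, and $\PoS\le\PoA$, this also pins the value of the class at exactly $n$.

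First I would set up the network so that the optimum is large while every NE is forced to be bad. I take $n$ parallel unit-capacity ``channels'' $g_1,\dots,g_n$, where $g_i$ runs from an $s$-side vertex $a_i$ to a $t$-side vertex $b_i$; these form an $s$-$t$ edge cut, so every $s$-$t$ path uses at least one channel. Each channel gets a private entrance $(s,a_i)$ and exit $(b_i,t)$ of capacity $1-\varepsilon$, so a player using one channel in isolation obtains bandwidth $1-\varepsilon$; routing all $n$ players on distinct channels is then a feasible state of value $n(1-\varepsilon)$, which lower-bounds the optimum. In addition I wire the channels together by high-capacity connectors $(b_i,a_{i+1})$ and give full capacity $1$ to the two terminal edges of the serpentine route $Z=s,a_1,b_1,a_2,\dots,a_n,b_n,t$, so that $Z$ traverses every channel and is the \emph{unique} $s$-$t$ path of bottleneck exactly $1$ (any other path must use at least one private ramp and hence has bottleneck at most $1-\varepsilon$).

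The core of the argument concerns the fast player, whose constant rate is $\lambda\to\infty$. The first step is a domination lemma: as $\lambda\to\infty$, the bandwidth this player receives on any fixed path converges to the raw bottleneck capacity of that path, essentially independently of the others, since on every shared edge it claims a $\lambda/(\lambda+k)$ fraction of the capacity. Consequently its unique best response is $Z$ (bottleneck $1$), strictly better than any alternative (bottleneck $\le 1-\varepsilon$) once $\lambda>(1-\varepsilon)/\varepsilon$. I then show that in \emph{every} NE the fast player sits on $Z$ and therefore occupies all $n$ channels; because the channels form a cut, each remaining player shares a channel with the fast player and receives at most $1/\lambda$. Placing the $n-1$ slow players on distinct channels is an exact NE, and every NE has social welfare $(\lambda+n-1)/(\lambda+1)\to 1$. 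Combining with the optimum of value at least $n(1-\varepsilon)$ yields $\PoS\ge n(1-\varepsilon)(\lambda+1)/(\lambda+n-1)$, which tends to $n$ as $\varepsilon\to0$ and $\lambda\to\infty$.

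The step I expect to be the main obstacle is engineering the strict incentive, i.e.\ ensuring that the high-welfare ``spread'' state (every player alone on its own channel, value $\approx n$) is not itself an equilibrium. The naive construction, in which the serpentine route merely reuses the channels with uniform access, fails: the fast player would rather sit alone on one full channel (bandwidth $1$) than share all of them along $Z$ (bandwidth $<1$), so the spread state survives as a good NE and the bound collapses to $1$. The asymmetric access capacities -- $1-\varepsilon$ on the private ramps versus $1$ on the two serpentine terminals -- are exactly what make $Z$ the strictly preferred route and destroy the spread equilibrium. This asymmetry is also the reason the bound is approached in the limit rather than attained by a single instance, consistent with the matching upper bound $\PoS\le n$.
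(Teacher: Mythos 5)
Your proposal is correct and is essentially the paper's own construction: the same serpentine gadget over $n$ unit channels with asymmetric entrance/exit capacities, one relatively fast player who strictly prefers the full serpentine path and thereby blocks the channel cut in every NE, and a spread state of value about $n$ as the benchmark. The only differences are cosmetic rescalings (the paper uses capacities $1+\varepsilon$ versus $1$ and slow rates $\varepsilon/n$ versus a fast rate $1$, while you use $1$ versus $1-\varepsilon$ and $\lambda\to\infty$ versus $1$), plus minor quantitative slack in your thresholds (e.g.\ the condition on $\lambda$ should carry a factor $n-1$), none of which affects the limiting bound of $n$.
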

\begin{proof}
 For each $\varepsilon\in\left(0,1\right]$, we construct a family of single-commodity PFGs $\mathcal{G}_{n,\varepsilon}$ with $n$ players and PoS at least $\frac{n}{1+2\varepsilon}$. For such a game, we employ a gadget $G_{n,i}$ from the proof of Theorem~\ref{thm_spoakn} as underlying network. We omit the $i$ in the indices and call $u_i$ and $u_{i+1}$ simply $s$ and $t$, respectively. Further, we adapt the capacities of the edges in the following way. We set
\[c_e=\begin{cases}
       1+\varepsilon,&\text{if $e=\left(s,v_1\right)$ or $e=\left(w_n,t\right)$ or $e$ is not incident to $s$ or $t$} \\
       1,&\text{else}
      \end{cases}\enspace,\]
for all $e\in E$. To obtain a PFG from this network, we equip player $1$ with an allocation rate function which is constantly $1$ and all the other players from $\left\{1,\dots,n\right\}$ with functions which are constantly $\frac{\varepsilon}{n}$.

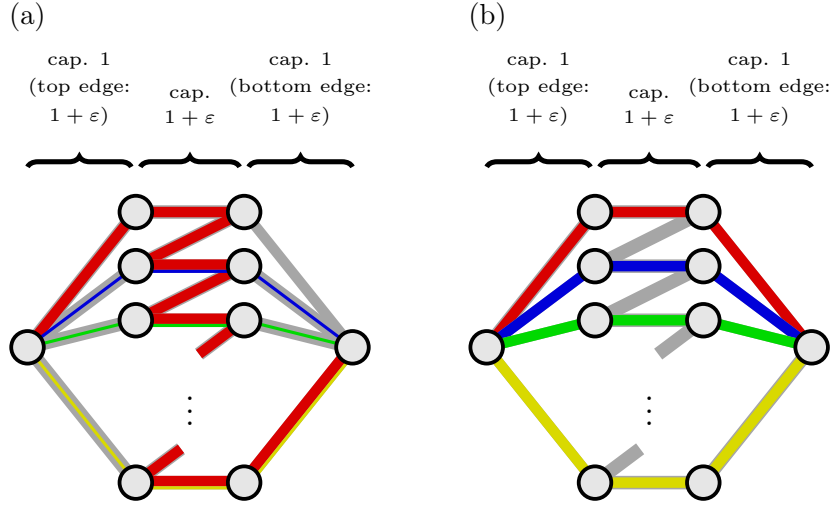
\begin{figure}
\centering
\subfiguretopcaptrue
\subfigure[][]{\scalebox{1.2}{\begin{tikzpicture}
\definecolor{darkred}{rgb}{.85,0,0}
\definecolor{darkblue}{rgb}{0,0,.85}
\definecolor{darkgreen}{rgb}{0,.85,0}
\definecolor{darkyellow}{rgb}{.85,.85,0}
\draw [decoration={brace,amplitude=0.3em},decorate,ultra thick] (0,2) -- node[above,yshift=.3cm] {\tiny{\shortstack{cap. $1$\\(top edge:\\$1+\varepsilon$)}}} (1.15,2);
\draw [decoration={brace,amplitude=0.3em},decorate,ultra thick] (1.25,2) -- node[above,yshift=.3cm] {\tiny{\shortstack{cap.\\$1+\varepsilon$}}} (2.35,2);
\draw [decoration={brace,amplitude=0.3em},decorate,ultra thick] (2.45,2) -- node[above,yshift=.3cm] {\tiny{\shortstack{cap. $1$\\(bottom edge:\\$1+\varepsilon$)}}} (3.6,2);
\draw[-,line width=0.15cm,draw=black!35] (0,0) -- (1.2,0.9);
\draw[-,line width=0.15cm,draw=black!35] (0,0) -- (1.2,0.3);
\draw[-,line width=0.03461538461538461cm,draw=darkgreen] (0,0) -- (1.2,0.3);
\draw[-,line width=0.15cm,draw=black!35] (0,0) -- (1.2,-1.5);
\draw[-,line width=0.03461538461538461cm,draw=darkyellow] (0,0) -- (1.2,-1.5);
\draw[-,line width=0.15cm,draw=black!35] (2.4,0.9) -- (3.6,0);
\draw[-,line width=0.15cm,draw=black!35] (2.4,0.3) -- (3.6,0);
\draw[-,line width=0.03461538461538461cm,draw=darkgreen] (2.4,0.3) -- (3.6,0);
\draw[-,line width=0.15cm,draw=black!35] (2.4,-1.5) -- (3.6,0);
\draw[-,line width=0.11538461538461536cm,draw=darkred] (2.386484962913486,-1.4891879703307889) -- (3.5864849629134863,0.010812029669211186);
\draw[-,line width=0.03461538461538461cm,draw=darkyellow] (2.445050123621713,-1.5360400988973706) -- (3.645050123621713,-0.03604009889737063);
\draw[-,line width=0.15cm,draw=black!35] (0,0) -- (1.2,1.5);
\draw[-,line width=0.11538461538461536cm,draw=darkred] (0,0) -- (1.2,1.5);
\draw[-,line width=0.03461538461538461cm,draw=darkblue] (0,0) -- (1.2,0.9);
\draw[-,line width=0.11538461538461536cm,draw=black!35] (2.4,1.5) -- (3.6,0);
\draw[-,line width=0.15cm,draw=black!35] (1.2,1.5) -- (2.4,1.5);
\draw[-,line width=0.15cm,draw=black!35] (2.4,1.5) -- (1.2,0.9);
\draw[-,line width=0.11538461538461536cm,draw=darkred] (2.4,1.5) -- (1.2,0.9);
\draw[-,line width=0.15cm,draw=black!35] (1.2,0.9) -- (2.4,0.9);
\draw[-,line width=0.15cm,draw=black!35] (2.4,0.9) -- (1.2,0.3);
\draw[-,line width=0.11538461538461536cm,draw=darkred] (2.4,0.9) -- (1.2,0.3);
\draw[-,line width=0.15cm,draw=black!35] (2.4,0.3) -- ++(-0.5,-0.375);
\draw[-,line width=0.11538461538461536cm,draw=darkred] (2.4,0.3) -- ++(-0.5,-0.375);
\draw[-,line width=0.15cm,draw=black!35] (1.2,-1.5) -- ++(0.5,0.375);
\draw[-,line width=0.11538461538461536cm,draw=darkred] (1.2,-1.5) -- ++(0.5,0.375);
\draw[-,line width=0.11538461538461536cm,draw=darkred] (1.2,1.5) -- (2.4,1.5);
\draw[-,line width=0.11538461538461536cm,draw=darkred] (1.2,0.9173076923076924) -- (2.4,0.9173076923076924);
\draw[-,line width=0.03461538461538461cm,draw=darkblue] (1.2,0.8423076923076923) -- (2.4,0.8423076923076923);
\draw[-,line width=0.15cm,draw=black!35] (1.2,0.3) -- (2.4,0.3);
\draw[-,line width=0.11538461538461536cm,draw=darkred] (1.2,0.3173076923076923) -- (2.4,0.3173076923076923);
\draw[-,line width=0.03461538461538461cm,draw=darkgreen] (1.2,0.2423076923076923) -- (2.4,0.2423076923076923);
\draw[-,line width=0.15cm,draw=black!35] (1.2,-1.5) -- (2.4,-1.5);
\draw[-,line width=0.11538461538461536cm,draw=darkred] (1.2,-1.4826923076923078) -- (2.4,-1.4826923076923078);
\draw[-,line width=0.03461538461538461cm,draw=darkyellow] (1.2,-1.5576923076923077) -- (2.4,-1.5576923076923077);
\draw[-,line width=0.03461538461538461cm,draw=darkblue] (2.4,0.9) -- (3.6,0);

\node[draw,circle,minimum width=0.12cm,very thick,fill=black!10] (0) at (0,0) {};
\node[draw,circle,minimum width=0.12cm,very thick,fill=black!10] (1) at (1.2,1.5) {};
\node[draw,circle,minimum width=0.12cm,very thick,fill=black!10] (2) at (2.4,1.5) {};
\node[draw,circle,minimum width=0.12cm,very thick,fill=black!10] (3) at (1.2,0.9) {};
\node[draw,circle,minimum width=0.12cm,very thick,fill=black!10] (4) at (2.4,0.9) {};
\node[draw,circle,minimum width=0.12cm,very thick,fill=black!10] (5) at (1.2,0.3) {};
\node[draw,circle,minimum width=0.12cm,very thick,fill=black!10] (6) at (2.4,0.3) {};
\node[draw,circle,minimum width=0.12cm,very thick,fill=black!10] (7) at (1.2,-1.5) {};
\node[draw,circle,minimum width=0.12cm,very thick,fill=black!10] (8) at (2.4,-1.5) {};
\node[draw,circle,minimum width=0.12cm,very thick,fill=black!10] (9) at (3.6,0) {};
\node at (1.8,-0.6) {$\vdots$};
\end{tikzpicture}}}\hspace{1cm}
\subfigure[][]{\scalebox{1.2}{\begin{tikzpicture}
\definecolor{darkred}{rgb}{.85,0,0}
\definecolor{darkblue}{rgb}{0,0,.85}
\definecolor{darkgreen}{rgb}{0,.85,0}
\definecolor{darkyellow}{rgb}{.85,.85,0}
\draw [decoration={brace,amplitude=0.3em},decorate,ultra thick] (0,2) -- node[above,yshift=.3cm] {\tiny{\shortstack{cap. $1$\\(top edge:\\$1+\varepsilon$)}}} (1.15,2);
\draw [decoration={brace,amplitude=0.3em},decorate,ultra thick] (1.25,2) -- node[above,yshift=.3cm] {\tiny{\shortstack{cap.\\$1+\varepsilon$}}} (2.35,2);
\draw [decoration={brace,amplitude=0.3em},decorate,ultra thick] (2.45,2) -- node[above,yshift=.3cm] {\tiny{\shortstack{cap. $1$\\(bottom edge:\\$1+\varepsilon$)}}} (3.6,2);

\draw[-,line width=0.15cm,draw=black!35] (0,0) -- (1.2,1.5);
\draw[-,line width=0.11538461538461536cm,draw=darkred] (0,0) -- (1.2,1.5);
\draw[-,line width=0.11538461538461536cm,draw=black!35] (0,0) -- (1.2,0.9);
\draw[-,line width=0.11538461538461536cm,draw=darkblue] (0,0) -- (1.2,0.9);
\draw[-,line width=0.11538461538461536cm,draw=black!35] (0,0) -- (1.2,0.3);
\draw[-,line width=0.11538461538461536cm,draw=darkgreen] (0,0) -- (1.2,0.3);
\draw[-,line width=0.11538461538461536cm,draw=black!35] (0,0) -- (1.2,-1.5);
\draw[-,line width=0.11538461538461536cm,draw=darkyellow] (0,0) -- (1.2,-1.5);
\draw[-,line width=0.11538461538461536cm,draw=black!35] (2.4,1.5) -- (3.6,0);
\draw[-,line width=0.11538461538461536cm,draw=darkred] (2.4,1.5) -- (3.6,0);
\draw[-,line width=0.11538461538461536cm,draw=black!35] (2.4,0.9) -- (3.6,0);
\draw[-,line width=0.11538461538461536cm,draw=darkblue] (2.4,0.9) -- (3.6,0);
\draw[-,line width=0.11538461538461536cm,draw=black!35] (2.4,0.3) -- (3.6,0);
\draw[-,line width=0.11538461538461536cm,draw=darkgreen] (2.4,0.3) -- (3.6,0);
\draw[-,line width=0.15cm,draw=black!35] (2.4,0.3) -- ++(-0.5,-0.375);
\draw[-,line width=0.15cm,draw=black!35] (1.2,-1.5) -- ++(0.5,0.375);
\draw[-,line width=0.15cm,draw=black!35] (2.4,0.9) -- (1.2,0.3);
\draw[-,line width=0.15cm,draw=black!35] (2.4,1.5) -- (1.2,0.9);
\draw[-,line width=0.15cm,draw=black!35] (2.4,-1.5) -- (3.6,0);
\draw[-,line width=0.11538461538461536cm,draw=darkyellow] (2.4,-1.5) -- (3.6,0);
\draw[-,line width=0.15cm,draw=black!35] (1.2,1.5) -- (2.4,1.5);
\draw[-,line width=0.11538461538461536cm,draw=darkred] (1.2,1.5) -- (2.4,1.5);
\draw[-,line width=0.15cm,draw=black!35] (1.2,0.9) -- (2.4,0.9);
\draw[-,line width=0.11538461538461536cm,draw=darkblue] (1.2,0.9) -- (2.4,0.9);
\draw[-,line width=0.15cm,draw=black!35] (1.2,0.3) -- (2.4,0.3);
\draw[-,line width=0.11538461538461536cm,draw=darkgreen] (1.2,0.3) -- (2.4,0.3);
\draw[-,line width=0.15cm,draw=black!35] (1.2,-1.5) -- (2.4,-1.5);
\draw[-,line width=0.11538461538461536cm,draw=darkyellow] (1.2,-1.5) -- (2.4,-1.5);

\node[draw,circle,minimum width=0.12cm,very thick,fill=black!10] (0) at (0,0) {};
\node[draw,circle,minimum width=0.12cm,very thick,fill=black!10] (1) at (1.2,1.5) {};
\node[draw,circle,minimum width=0.12cm,very thick,fill=black!10] (2) at (2.4,1.5) {};
\node[draw,circle,minimum width=0.12cm,very thick,fill=black!10] (3) at (1.2,0.9) {};
\node[draw,circle,minimum width=0.12cm,very thick,fill=black!10] (4) at (2.4,0.9) {};
\node[draw,circle,minimum width=0.12cm,very thick,fill=black!10] (5) at (1.2,0.3) {};
\node[draw,circle,minimum width=0.12cm,very thick,fill=black!10] (6) at (2.4,0.3) {};
\node[draw,circle,minimum width=0.12cm,very thick,fill=black!10] (7) at (1.2,-1.5) {};
\node[draw,circle,minimum width=0.12cm,very thick,fill=black!10] (8) at (2.4,-1.5) {};
\node[draw,circle,minimum width=0.12cm,very thick,fill=black!10] (9) at (3.6,0) {};
\node at (1.8,-0.6) {$\vdots$};
\end{tikzpicture}}}
\caption{Illustration of the network of the game $\mathcal{G}_{n,\varepsilon}$ from the proof of Theorem~\ref{thm_pos_scnpfg} with (a) the best NE $S^\prime$ and (b) the optimal state $S$ in terms of social welfare.}
\label{fig_pos_pfgs}
\end{figure}

Consider a state $S$ with social welfare $n$. Such a state evolves if player $i$ chooses the path $\left(s,v_i,w_i,t\right)$, for all $i$. Player $1$, however, has an incentive to use the path $\left(s,v_1,w_1,v_2\dots,v_n,w_n,t\right)$ instead of any other path with capacity $1$ -- even if player $1$ had to share a $\left(1+\varepsilon\right)$-edge with all the other players he would get a bandwidth larger than $1$. More specifically, if this edge is the first one saturated by Algorithm~\ref{alg_pf} (which is the case when player $1$ chooses the considered path), the finishing time of this edge is larger than $1$.

Now let $S^\prime$ be the NE with the highest social welfare. By the previous considerations, player $1$ uses the path $\left(s,v_1,w_1,v_2\dots,v_n,w_n,t\right)$ and obviously gets at most a bandwidth of $1+\varepsilon$ in $S^\prime$. Since we chose $\varepsilon\leq1$, all the other players get at most a bandwidth of $\frac{2\varepsilon}{n}$. This bound is tight if $\varepsilon=1$ and all these players only share an edge with player $1$.

We can now compare the social welfare of the states $S$ and $S^\prime$ (both illustrated in Figure~\ref{fig_pos_pfgs}) and obtain that $\PoS\left(\mathcal{G}_n\right)\geq\frac{n}{1+2\varepsilon}$.
Hence, we get $\sup\left\{\PoS\left(\mathcal{G}_{n,\varepsilon}\right)\mid n\in\mathbb{N}\wedge\varepsilon\in\mathbb{R}\wedge\varepsilon>0\right\}\geq n$, which implies the claim.
\end{proof}

\subsection{Changing the Allocation Rate Functions}
\label{sec:change}



Dual Greedy computes a SE that is a $2-\frac{1}{n}$-approximation and this bound is tight. To stabilize better solutions, in this section we take a ``protocol design'' approach. We assume the waterfilling algorithm can determine a set of constant allocation rate functions for each instance. Interestingly, for any given collection of players, resources, capacities and strategy sets, one can give constant allocation rate functions such that the resulting PFG has an SE with social welfare as high as the optimal value of the MCAP.

\begin{theorem}
	\label{thm_change}
	Let $\mathcal{G}$ be a PFG with player set $N$ and $v^\star$ be the optimal value of the MCAP. There are 
	constant allocation rate functions $\left(v'_i\right)_{i\in N}$ such that the maximal social welfare in 
	$\mathcal{G}$ with allocation rate functions replaced by $\left(v'_i\right)_{i\in N}$ is $v^\star$ and the SPoS in 
	this game is $1$.
\end{theorem}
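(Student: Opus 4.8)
The plan is to read the rate functions directly off an optimal \MCAP\ solution and then use the lexicographic potential $\phi$ from the proof of Theorem~\ref{thm_pfg_se} to pin down a strong equilibrium that loses no welfare. Fix a state $S^\star$ together with a feasible allocation $a^\star$ attaining the \MCAP\ optimum $v^\star=\sum_{i\in N}a^\star_i$. First I would normalise $a^\star$: since $\sum_i a_i$ is being maximised, I may assume every player is \emph{tight}, i.e.\ each $i$ has a resource $r\in S^\star_i$ with $\sum_{j:\,r\in S^\star_j}a^\star_j=c_r$ (otherwise $a^\star_i$ could be raised, contradicting optimality), and, by passing to the relative interior of the optimal face for $S^\star$, that $a^\star_i>0$ for every player not structurally forced to zero. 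I then declare the constant rates $v'_i:=a^\star_i$ and take $S^\star$ as the starting state.

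Next I would analyse Algorithm~\ref{alg_pf} on $S^\star$ with these rates. Every resource $r$ saturates at time $c_r/\sum_{j:\,r\in S^\star_j}a^\star_j\ge 1$, so no saturation can occur before $t=1$, while at $t=1$ precisely the tight resources saturate; the contributions fixed at $t=1$ are mutually consistent, so \emph{every} tight resource saturates at $t=1$. As each player owns a tight resource, every player is fixed at finishing time $t_i(S^\star)=1$ with bandwidth $v'_i\cdot 1=a^\star_i$. Hence $\SW_{\mathcal G}(S^\star)=v^\star$ and $\phi(S^\star)=(1,\dots,1)$. Moreover, the outcome of Algorithm~\ref{alg_pf} on \emph{any} state is a feasible allocation, so its welfare is at most the \MCAP\ value $v^\star$; together with $S^\star$ this shows that the maximal social welfare of the modified game is exactly $v^\star$.

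For the \SPoS\ bound I would invoke Theorem~\ref{thm_pfg_se}: starting from $S^\star$, every sequence of improving coalitional deviations is finite and terminates in a strong equilibrium $\hat S$, and along the way $\phi$ strictly increases in the lexicographic order. Thus $\phi(\hat S)\succeq\phi(S^\star)=(1,\dots,1)$, which forces the smallest entry of the sorted finishing-time vector of $\hat S$, and therefore every finishing time, to be at least $1$. Since bandwidths equal $a^\star_i\cdot t_i(\hat S)\ge a^\star_i$, we obtain $\SW_{\mathcal G}(\hat S)\ge v^\star$, and with the upper bound $\SW_{\mathcal G}(\hat S)\le v^\star$ this gives $\SW_{\mathcal G}(\hat S)=v^\star$. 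Hence there is a strong equilibrium of optimal welfare and $\SPoS=1$, which finishes the argument.

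The main obstacle is keeping the rates admissible, namely strictly positive so that $V'_i(t)\to\infty$, while still hitting $v^\star$ \emph{exactly}. This is harmless whenever the \MCAP\ optimum has full support, but a genuine difficulty appears for a player that is forced to $a^\star_i=0$ in \emph{every} optimal allocation (for instance a player sharing two disjoint unit-capacity resources with two other players). Such a player must be given a positive rate, hence positive bandwidth, which strictly lowers the attainable welfare below $v^\star$. Resolving this case — either by arguing it does not arise under the model's standing assumptions, or by treating forced-zero players separately, e.g.\ fixing them at finishing time $0$ through a resource already saturated at $t=0$ — is where the real care lies; the lexicographic-potential argument above is otherwise robust.
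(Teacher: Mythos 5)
Your construction is the paper's: take an optimal MCAP pair $(S,a)$, set $v'_i\equiv a_i$, observe that Algorithm~\ref{alg_pf} run on $S$ fixes every player at finishing time $1$ with bandwidth $a_i$, and then lean on Lemma~\ref{lemma_cpf}. The one real (but small) difference is how the strong equilibrium is produced. The paper argues directly that $S$ itself is a SE: if a coalition $C$ could improve, Lemma~\ref{lemma_cpf} applied with $\min_{j\in C}t_j(S)=1$ gives $t_i(T)\geq 1$ for every player, so no bandwidth drops while the coalition's strictly increase, yielding a feasible allocation of total value exceeding $v^\star$ and contradicting MCAP optimality. You instead run coalitional improvement dynamics from $S$ and track the lexicographic potential to show the limiting SE still has all finishing times at least $1$. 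Both rest on the same lemma and both are sound; the paper's version is slightly more economical and gives the marginally stronger conclusion that the optimal state is itself stable, whereas yours only guarantees that some welfare-$v^\star$ SE is reachable from it. Your tightness normalisation is a reasonable way to justify the claim, merely asserted in the paper, that every finishing time equals $1$.

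The worry you raise at the end is legitimate, and it is a gap in the paper's proof exactly as much as in yours. The model requires $\lim_{t\to\infty}V_i(t)=\infty$, so $v'_i\equiv 0$ is inadmissible, yet there are instances in which some player receives bandwidth $0$ in \emph{every} MCAP optimum: take a player whose unique strategy is $\left\{r_1,r_2\right\}$ with $c_{r_1}=c_{r_2}=1$ alongside two players whose unique strategies are $\left\{r_1\right\}$ and $\left\{r_2\right\}$; the optimum is $2$ and forces that player to $0$, while any strictly positive constant rates give maximal welfare strictly below $2$. The paper simply writes $v'_i\equiv a_i$ without excluding $a_i=0$, so you have not introduced a new defect --- you have located one the published proof glosses over. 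A clean fix requires either relaxing the unboundedness assumption on $V_i$ for the designed rates (and checking that Algorithm~\ref{alg_pf} and Lemma~\ref{lemma_cpf} survive this) or weakening the statement to welfare arbitrarily close to $v^\star$; neither your proposal nor the paper does this.
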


\begin{proof}
Let the state $S=\left(S_1,\dots,S_n\right)$ along with the allocation $a=\left(a_1,\dots,a_n\right)$ be an optimal solution of the MCAP. We use allocation rate function $v'_i \equiv a_i$ for each player $i\in N$. We call the corresponding PFG $\mathcal{G'}$. If we run the progressive filling algorithm in $S$ with $v'$, all finishing times are exactly $1$ and the allocation is exactly $a$.\\
We show that $S$ is a SE in $\mathcal{G}$. Towards this, suppose that there is a coalition profitably deviating from $S$ to $T$. Then, by Lemma~\ref{lemma_cpf}, the finishing times and thus bandwidths of all players in $N \setminus C$ remain identical in $T$ whereas the players from $C$ strictly improve. Consequently, we have constructed a solution of the MCAP on $\mathcal{M}$ with a higher social welfare -- a contradiction. 
\end{proof}

Not surprisingly, this approach is intractable, as the MCAP is NP-hard to approximate to within a factor $\frac{3}{2}-\varepsilon$, even for arbitrary fixed rates.

\begin{theorem}
	\label{thm_mcap32}
	For $2$ players, it is NP-hard to approximate the MCAP with a factor of smaller than $\frac{3}{2}$. This also 
	holds for the MCAP with arbitrary fixed rates.
\end{theorem}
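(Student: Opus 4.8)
The plan is to reduce from the maximum $2$-splittable flow problem, which is exactly where the threshold $\frac{3}{2}$ originates: by Koch and Skutella~\cite{KochS06} it is NP-hard to approximate the maximum $2$-splittable flow within any factor below $\frac{3}{2}$.

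For the plain MCAP I would observe that the two problems coincide. Given a single-commodity network with source $s$, sink $t$ and edge capacities, build a $2$-player PFG whose resources are the edges and whose (symmetric) strategy set is the collection of all $s$-$t$ paths. In a state $(P_1,P_2)$ a feasible allocation is a pair $(a_1,a_2)$ with $\sum_{i:\,e\in P_i} a_i \le c_e$ for every edge $e$, i.e.\ exactly a flow of value $a_1+a_2$ routed on the (at most two) paths $P_1,P_2$; the degenerate cases $P_1=P_2$ or $a_2=0$ correspond to routing on a single path. Hence maximizing $a_1+a_2$ over all states equals the maximum $2$-splittable flow value, and the inapproximability of~\cite{KochS06} transfers verbatim to the MCAP.

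For the variant with arbitrary fixed rates $v_1,v_2$, progressive filling on a state produces one particular feasible allocation, so its social welfare is always at most the MCAP value of that state. I would therefore preserve the \emph{gap} rather than the exact value. Starting from a hard gap instance in which the ``yes'' optimum of value $V$ is attained by two \emph{edge-disjoint} $s$-$t$ paths, progressive filling run on this disjoint pair gives each player the bottleneck capacity of its own path, independently of $v_1,v_2$, so the PF welfare is at least $V$. In every ``no'' instance the PF welfare is bounded by the MCAP value of the chosen state, which is below $\frac{2}{3}V$; thus the factor-$\frac{3}{2}$ gap survives for any fixed rates and the claim follows.

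The hard part will be the yes-case under arbitrary rates. A generic maximum $2$-splittable flow routes unequal amounts on two \emph{overlapping} paths, and progressive filling---being (weighted) max-min-fair in spirit---need not reproduce that optimal split, so the lower bound $V$ is not automatic. Making it rate-robust requires the hardness instances to admit edge-disjoint optimal path pairs of the correct capacities; I expect to obtain these by adapting the Koch--Skutella construction, or by composing it with a disjoint-path gadget in the spirit of the reduction in Theorem~\ref{thm_sehard2}, so that the optimum is realized on disjoint paths while the $\frac{3}{2}$ threshold is maintained. An alternative, should such a modification be awkward, is to argue directly that for the specific gap family the progressive-filling welfare on the optimal state equals the MCAP value regardless of $v$.
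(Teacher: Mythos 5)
Your treatment of the first claim is fine and takes a genuinely different route: you identify the $2$-player symmetric single-commodity MCAP with the maximum $2$-splittable flow problem and import the $\frac{3}{2}$-inapproximability of Koch and Skutella as a black box. The paper instead gives a self-contained gap reduction from the $2$-directed-arc-disjoint-paths problem, reusing the gadget of Theorem~\ref{thm_sehard2} with $\lambda=1$, $\varepsilon=0$: yes-instances admit a state of welfare $3$ (disjoint paths of bottleneck $2$ and $1$), while in no-instances any state has welfare at most $2$. Your route buys brevity at the cost of relying on the cited gap family; the paper's buys a concrete instance on which the second claim can also be checked. Either is acceptable for the first sentence of the theorem.

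For the second claim (arbitrary fixed rates) there is a genuine gap. You correctly sense that the issue is whether progressive filling realizes the MCAP value of the optimal state, but your resolution --- engineering the hard instances so that the yes-optimum is attained on \emph{edge-disjoint} paths by ``adapting the Koch--Skutella construction'' --- is left entirely unexecuted, and it is precisely the step on which the claim hinges. The observation you are missing, and which the paper uses to dispose of the whole issue in one line, is that for $n=2$ the social welfare produced by progressive filling in a given state does not depend on the rates at all and always equals the maximum total allocation of that state. Indeed, writing $A=\min_{r\in S_1\setminus S_2}c_r$, $B=\min_{r\in S_2\setminus S_1}c_r$, $C=\min_{r\in S_1\cap S_2}c_r$, the maximum of $a_1+a_2$ is $\min(A+B,C)$; if a shared resource saturates first the total is $C\leq A+B$, and if, say, player $1$'s private bottleneck saturates first the other player then grows to $\min(B,C-A)$, giving total $\min(A+B,C)$ in either case, for any nonnegative rates. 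With this, the very same gap instance works verbatim for arbitrary fixed rates, and no edge-disjointness of the optimal paths is needed. Without it (or a completed construction of disjoint-path gap instances), your proof of the second sentence of the theorem is incomplete.
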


\begin{proof}
We use the reduction from the proof of Theorem~\ref{thm_sehard2} for $\lambda=1$ and $\varepsilon=0$. Since for two players, the allocation rate functions do not affect the social welfare in a given state, our argumentation works completely without allocation rate functions, even for the MCAP with fixed rates.

Each state with optimal social welfare $2$ certifies that there are no arc-disjoint paths $\left(v_1,\dots,v_k\right)$ and $\left(w_1,\dots,w_l\right)$ from $s_1$ to $t_1$ and from $s_2$ to $t_2$, respectively, since, otherwise, the state \linebreak $\left(\left(s_1,v_1,\dots,v_k,t_1\right), \left(s_2,w_1,\dots,w_l,t_2\right)\right)$ would attain a social welfare of $3$. 

Conversely, each state with a social welfare higher than $2$ must use two arc-disjoint paths in $D$. Further, these paths must obviously connect $s_1$ to $t_1$ and $s_2$ to $t_2$. Consequently, such a state certifies that the instance of 2DADP is solvable. 
\end{proof}

This implies that the approximation guarantee of Dual Greedy is optimal for $n=2$, even without requiring the output to be a SE. The idea behind the previous theorem extends also to approximate solutions of the MCAP. For the MCAP on single-commodity networks, a better $\frac 32$-approximation exists for $n=3$~\cite{BaierKS05} and can be obtained as follows: Run the maximum capacity augmenting path algorithm~\cite{AhujaMO93} on the given network for two iterations and decompose~\cite{AhujaMO93} the obtained flow into three paths (plus a circulation).
We use this approach to calculate an \textit{equilibrium state} that is a better approximation than the one calculated by Dual Greedy. By Theorem~\ref{thm_dg_approx}, this is not possible if the allocation rate functions are fixed, even for uniform ones. Adjusting allocation rate functions subject to the instance, however, allows to beat Dual Greedy, at least for $n=3$ and PNE.

\begin{theorem}
	\label{thm_improving_dg}
	In single-commodity networks with $3$ players, there exist constant allocation rate functions and a PNE that is a $
	\frac{3}{2}$-approximation to the MCAP. The allocation rate functions and the PNE can be computed in polynomial 
	time.
\end{theorem}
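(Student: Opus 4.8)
The plan is to combine the polynomial-time $\frac{3}{2}$-approximate maximum $3$-splittable flow computation with the protocol-design idea behind Theorem~\ref{thm_change}. First I would compute the approximate MCAP solution described in the paragraph preceding the theorem: run the maximum-capacity augmenting-path algorithm for two iterations and decompose the resulting flow into (at most) three $s$-$t$ paths $P_1,P_2,P_3$ plus a circulation. Discarding the circulation (which carries no $s$-$t$ flow) yields a state $S_0=(P_1,P_2,P_3)$ together with a feasible allocation $a=(a_1,a_2,a_3)$ whose value satisfies $\sum_{i}a_i\geq\frac{2}{3}v^\star$, where $v^\star$ is the MCAP optimum (for three players on a single-commodity network, MCAP coincides with the maximum $3$-splittable flow). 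I then fix the constant allocation rate functions $v'_i\equiv a_i$ and declare the candidate PNE to be the endpoint $S^\star$ of an arbitrary sequence of unilateral improving moves started from $S_0$.

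The crux is to show that $S^\star$ still has social welfare at least $\sum_i a_i$, and hence is a $\frac{3}{2}$-approximation. With constant rates $v'_i\equiv a_i$ the bandwidth of player $i$ in any state $T$ is $b_i(T)=a_i\,t_i(T)$, so $\SW_{\mathcal{G}'}(T)=\sum_i a_i\,t_i(T)$. In the start state $S_0$ the allocation $a$ is feasible, so no resource is saturated strictly before time $1$ and therefore $t_i(S_0)\geq 1$ for every $i$. I would then prove by induction along the improvement sequence that $t_i(S_k)\geq 1$ holds in every intermediate state $S_k$: given a unilateral improving move of some player $j$ from $S_k$ to $S_{k+1}$, apply Lemma~\ref{lemma_cpf} with $C=\{j\}$. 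Since $t_j(S_k)\geq 1$ by the inductive hypothesis, part (a) gives $t_i(S_{k+1})\geq t_i(S_k)\geq 1$ for every $i$ with $t_i(S_k)\leq t_j(S_k)$, while part (b) gives $t_i(S_{k+1})>t_j(S_k)\geq 1$ for the remaining $i$; the deviating player $j$ satisfies $t_j(S_{k+1})>t_j(S_k)\geq 1$ because its bandwidth strictly increases at constant rate $a_j>0$. Thus every state in the sequence has all finishing times at least $1$, whence $\SW_{\mathcal{G}'}(S^\star)=\sum_i a_i\,t_i(S^\star)\geq\sum_i a_i\geq\frac{2}{3}v^\star$, giving the claimed ratio.

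For polynomial-time computability I would argue as follows. The augmenting-path computation and the flow decomposition are polynomial. The endpoint $S^\star$ is reached by unilateral improving moves, each of which can be found (or shown not to exist) in polynomial time by Lemma~\ref{lem_impstep}; since every improving move strictly increases the lexicographic potential $\phi$ (Theorem~\ref{thm_pfg_se}), the number of moves is bounded by $|\Phi_{\ord}|$, which for the fixed value $n=3$ is at most $2^{9}m^{3}$ by Theorem~\ref{thm_impseq} and hence polynomial. Consequently both the rate functions and the PNE are produced in polynomial time.

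The main obstacle I anticipate is conceptual rather than computational: one must recognize that, under the tailored rates $v'_i\equiv a_i$, social welfare is the $a$-weighted sum of finishing times and that Lemma~\ref{lemma_cpf} prevents any finishing time from ever dropping below its starting value of $1$, so the approximation quality of the seed solution is preserved all the way to the PNE (in contrast to Theorem~\ref{thm_change}, here the seed need not be optimal, so the argument cannot invoke a contradiction with MCAP-optimality and must instead track finishing times directly). A secondary technical point is to confirm that two augmenting iterations really decompose into a valid three-path state with \emph{positive} path flows (the circulation must be separated out, the three path flows must remain a feasible allocation for $S_0$, and each $a_i>0$ is needed so that $v'_i$ is a legitimate allocation rate); this follows from the cited decomposition result, and it is precisely the place where the restriction to $n=3$ (equivalently, two augmenting iterations) is genuinely used.
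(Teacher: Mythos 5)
Your proposal is correct and follows essentially the same route as the paper's proof: compute the $\frac{3}{2}$-approximate $3$-splittable flow, set the constant rates $v'_i\equiv a_i$ so that all finishing times start at (at least) $1$, run improvement dynamics, and invoke Lemma~\ref{lemma_cpf} to show finishing times never drop below $1$, with the step count bounded via Theorem~\ref{thm_impseq} and each step computed via Lemma~\ref{lem_impstep}. Your write-up is in fact slightly more careful than the paper's on two points (finishing times in the seed state are $\geq 1$ rather than exactly $1$, and the explicit induction along the improvement sequence), but these are refinements of the identical argument.
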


\begin{proof}
Let $S=\left(S_1,S_2,S_3\right)$ and the allocation $a=\left(a_1,a_2,a_3\right)$ represent the a $\frac{3}{2}$-approximation of the MCAP. As allocation rate function for player $i$, for all $i\in N$, we use the function $v_i\equiv a_i$. Note that, if we run Algorithm~\ref{alg_pf} on $S$, the finishing time is $1$, for each of the players, and $a$ is exactly the computed allocation.
%
We now invoke best-response dynamics starting from $S$ and iteratively compute and apply unilateral player deviations. By Corollary~\ref{thm_impseq}, this procedure can be implemented in polynomial time. We call the resulting state $S^\star$. Using Lemma~\ref{lemma_cpf}, we know that the finishing times of the players never sink below $1$ during that procedure. Consequently, $S^\star$ is at most a $\frac{3}{2}$-approximation to the MCAP. 
\end{proof}

Indeed, we can start with an arbitrary approximate solution of the MCAP, set the allocation rates such that finishing times are all 1, and then every unilateral (coalitional) improvement dynamics will lead to a PNE (SE) that only improves social welfare. Exploring this idea is a very interesting avenue for future work.



\section*{Acknowledgement}

We thank Berthold V\"ocking for helpful comments regarding the model underlying this paper. 

\bibliographystyle{abbrv}
\bibliography{INFOCOM14-Arxiv3.bbl}

\begin{thebibliography}{10}




\bibitem{AhujaMO93}
R.~Ahuja, T.~Magnanti, and J.~Orlin.
\newblock {\em Network Flows: Theory, Algorithms, and Applications}.
\newblock Prentice Hall, 1993.
  
\bibitem{AmaldiCCG13}
E.~Amaldi, A.~Capone, S.~Coniglio, and L.~Gianoli.
\newblock Network optimization problems subject to max-min fair flow
 allocation.
\newblock Submitted.

\bibitem{AmaldiCGI13}
E.~Amaldi, S.~Coniglio, L.~Gianoli, and C.~U. Ileri.
\newblock On single-path network routing subject to max-min fair flow
  allocation.
\newblock Submitted.

 
 
\bibitem{Aumann59}
R.~Aumann.
\newblock Acceptable points in general cooperative n-person games.
\newblock In {\em Contributions to the Theory of Games IV}, volume~40 of {\em
  Annals of Mathematics Study}, pages 287--324, 1959.

\bibitem{BaierKS05}
G.~Baier, E.~K{\"o}hler, and M.~Skutella.
\newblock The $k$-splittable flow problem.
\newblock {\em Algorithmica}, 42(3--4):231--248, 2005.

\bibitem{BannerO07}
R.~Banner and A.~Orda.
\newblock Bottleneck routing games in communication networks.
\newblock {\em IEEE J. Sel.\ Areas Commun.}, 25(6):1173--1179, 2007.

\bibitem{BertsekasG92}
D.~Bertsekas and R.~Gallager.
\newblock {\em Data networks (2nd ed.)}.
\newblock Prentice Hall, 1992.

\bibitem{BuschKS2011}
C.~Busch, R. Kannan and A. Samman.
\newblock Bottleneck Routing Games on Grids.
\newblock {\em In Proc. GAMENETS}, pages  294--307, 2011. 


\bibitem{BuschM09}
C.~Busch and M.~Magdon-Ismail.
\newblock Atomic routing games on maximum congestion.
\newblock {\em Theoret.\ Comput.\ Sci.}, 410(36):3337--3975, 2009.


\bibitem{ColeDR12}
R.~Cole, Y.~Dodis, and T.~Roughgarden.
\newblock Bottleneck links, variable demand, and the tragedy of the commons.
\newblock {\em Networks}, 60(3):194--203, 2012.

\bibitem{han06}
H. Han, S. Shakkottai, C.V. Hollot, R. Srikant, D. Towsley.
\newblock { Multi-Path TCP: A Joint Congestion Control and Routing
Scheme to Exploit Path Diversity in the Internet.}
\newblock {\em IEEE/ACM Trans. Networking}, 1(1): 22--33, 2006




\bibitem{HP08}
T.~Harks and T. Poschwatta.
\newblock Congestion Control in Utility Fair Networks.
\newblock {\em  Computer Networks}, 52(15): 2947--2960, 2008

\bibitem{HarksHKS10}
T.~Harks, M.~Hoefer, M.~Klimm, and A.~Skopalik.
\newblock Computing pure {N}ash and strong equilibria in bottleneck congestion
  games.
\newblock In {\em Proc.\ 18th ESA}, vol.~2, pages 29--38, 2010.

\bibitem{HarksK12}
T.~Harks and M.~Klimm.
\newblock On the existence of pure {N}ash equilibria in weighted congestion
  games.
\newblock {\em Math.\ Oper.\ Res.}, 37(3):419--436, 2012.


\bibitem{HarksKM09}
T.~Harks, M.~Klimm, and R.~M{\"o}hring.
\newblock Strong {N}ash equilibria in games with the lexicographical
  improvement property.
\newblock  {\em Int. J. Game Theory}, 42(2): 461--482, 2013.


\bibitem{Jaffe81}
J.~Jaffe.
\newblock Bottleneck flow control.
\newblock {\em IEEE Trans.\ Commun.}, 29(7):954--962, 1981.

\bibitem{KannanB10}
R.Kannan and C. Busch.
\newblock Bottleneck Congestion Games with Logarithmic Price of Anarchy. 
\newblock In {\em Proc.\ 3rd SAGT}, pages 222--233, 2010.


\bibitem{KeijzerST10}
B.~de~Keijzer, G.~Sch{\"a}fer, and O.~Telelis.
\newblock On the inefficiency on equilibria in linear bottleneck congestion
  games.
 \newblock In {\em Proc.\ 3rd SAGT}, pages 335--346, 2010.

\bibitem{voice05}
F. P. Kelly and T. Voice
\newblock Stability of end-to-end algorithms for
joint routing and rate control
\newblock {\em Comp. Comm. Rev.}, 35(2), 5--12, 2005.

\bibitem{Kelly98}
F.~P. Kelly, A.~K. Maulloo, and D.~K.~H. Tan.
\newblock {Rate Control in Communication Networks: Shadow Prices, Proportional
  Fairness, and Stability}.
\newblock {\em J.\ Oper.\ Res.\ Soc.}, 49:237--52, 1998.

\bibitem{key07}
P. Key, L. Massouli\'e, D. Towsley.
\newblock Path selection and multipath congestion control.
\newblock {\em Proc.\ 26th INFOCOM}, pages 143--151, 2007.



\bibitem{Low2000}
S.~Low.
\newblock {A duality model of TCP flow controls}.
\newblock In {\em Proc.\ ITC Specialist Seminar on IP Traffic
  Measurement, Modeling and Management}, 2000.


\bibitem{LowPagDoy2002}
S.~Low, F.~Paganini, and J.~C. Doyle.
\newblock {Internet Congestion Control}.
\newblock {\em {IEEE} Control Systems Magazine}, 22, pp. 28--43, 2002.

\bibitem{LowLapsley99}
S.~H. Low and D.~E. Lapsley.
\newblock {Optimization Flow Control I}.
\newblock {\em {IEEE/ACM} Trans.\ Netw.}, 7(6):861--874, 1999.


\bibitem{KleinbergRT01}
J.~Kleinberg, Y.~Rabani, and {\'E}.~Tardos.
\newblock Fairness in routing and load balancing.
\newblock {\em J. Comput.\ Syst.\ Sci.}, 63(1):2--20, 2001.

\bibitem{KochS06}
R.~Koch and I.~Spenke.
\newblock Complexity and approximability of $k$-splittable flows.
\newblock {\em Theoret.\ Comput.\ Sci.}, 369(1--3):338--347, 2006.

\bibitem{Mamatas07}
L. Mamatas, T. Harks, and V. Tsaoussidis.
\newblock { Approaches to Congestion Control in Packet
Networks.}
\newblock {\em J. Internet Engineering}, 1(1): 22--33, 2007


\bibitem{MH08}
K.~Miller and T.~Harks.
\newblock Utility max-min fair congestion control with
  time-varying delays.
\newblock In \emph{Proc.\ 27th INFOCOM}, pages 331--335, 2008.


\bibitem{Mo:2000}
 J. Mo and J. Walrand.
\newblock Fair end-to-end window-based congestion control.
\newblock {\em IEEE/ACM Trans. Netw.}, 8(5), 556 -- 567, 2000. 

\bibitem{Orda93}
Orda, A., R.~Rom, N.~Shimkin.
\newblock Competitive routing in multi-user communication networks.
\newblock {\em IEEE/ACM Trans. Netw.}: 1, pp. 510--521, 1993.

\bibitem{Schrijver03}
A.~Schrijver.
\newblock {\em Combinatorial Optimization: Polyhedra and Efficiency}.
\newblock Springer Verlag, 2003.

\bibitem{Srikant03}
R.~Srikant.
\newblock {\em The Mathematics of {Internet} Congestion Control}.
\newblock Birkhaeuser, 2003.

\bibitem{Tan:2007}
L. Tan, L. Dong, C. Yuan, and M. Zukerman.
\newblock Fairness Comparison of FAST TCP and TCP Reno.
\newblock {\em Comput. Commun.}, 30(6):1375--1382, 2007.


  
  \bibitem{wang2005}
  J. Wang, L. Li, S. H. Low, and J. C. Doyle
 \newblock Cross-Layer Optimization in TCP/IP networks
 \newblock \emph{IEEE/ACM Trans. Netw.}, 13(3):582--568, 2005


\bibitem{wang03}
W.-H. Wang, M. Palaniswami, S.H. Low.
\newblock {Optimal flow control
and routing in multi-path networks.}
\newblock {\em Perform. Eval.}: 52: 119--132, 2003.

\bibitem{wydr07}
B. Wydrowski and M. Zukerman.
\newblock MaxNet: A congestion control architecture for MaxMin fairness.
\newblock {\em IEEE Commun. Lett.}, 6, 512--514, 2002.


\bibitem{YangXFMZ10}
D.~Yang, G.~Xue, X.~Fang, S.~Misra, and J.~Zhang.
\newblock Routing in max-min fair networks: {A} game theoretic approach.
\newblock In {\em Proc.\ 18th ICNP}, pp. 1--10, 2010.

\bibitem{zhang07}
Y.~Zhang, S.-R. Kang, and D.~Loguinov.
\newblock Delay-independent stability and
  performance of distributed congestion control.
  \newblock \emph{IEEE/ACM Trans. Netw.}, 15(4): 838--851, 2007.


\end{thebibliography}

\end{document}